\documentclass[letter,fleqn]{cas-sc}

\usepackage[authoryear]{natbib}
\usepackage{amsmath,latexsym,amsbsy,amssymb,amsthm}
\usepackage{algorithm2e}
\usepackage{tikz}
\usetikzlibrary{calc, shapes, arrows.meta, positioning, backgrounds, shapes.geometric}
\usepackage{svg}
\usepackage{subcaption}
\usepackage{setspace}
\usepackage[version=4]{mhchem}

\RestyleAlgo{ruled}

\newtheorem{theorem}{Theorem}
\newtheorem{lemma}[theorem]{Lemma}       
\newtheorem{corollary}[theorem]{Corollary}
 
\newtheorem{fact}{Fact}

\theoremstyle{definition}
\newtheorem{definition}{Definition}

\theoremstyle{remark}
\newtheorem{remark}{Remark}

\theoremstyle{plain} 
\newtheorem{assumption}{Assumption}

\newcommand{\R}{\mathbb{R}}

\begin{document}
\let\WriteBookmarks\relax
\def\floatpagepagefraction{1}
\def\textpagefraction{.001}
\shorttitle{Nonautonomous KKL Observer}
\shortauthors{M. Woelk et~al.}

\title [mode = title]{Neural Luenberger state observer for nonautonomous nonlinear systems}                      
\tnotemark[1]

\tnotetext[1]{This work is supported by American Chemical Society (ACS) under the Petroleum Research Fund (Award \#66911-DNI9) and by National Science Foundation (NSF) under Division of Chemical, Bioengineering, Environmental and Transport Systems (Award \#2414369).}

\author[]{Moritz Woelk}[orcid=0000-0002-5874-2133]
\credit{Methodology; investigation; validation; software; data curation; writing - original draft; writing - review and editing}

\affiliation[]{organization={Department of Chemical and Biomolecular Engineering, North Carolina State University}, 
                city={Raleigh},
                state={NC 27695-7905},
                country={USA}}

\author[]{Jarod Morris}
\credit{Methodology; investigation; validation; software; data curation; writing - original draft; writing - review and editing}

\author[]{Wentao Tang}[orcid=0000-0003-0816-2322]
\cormark[1]
\ead{wentao_tang@ncsu.edu}
\credit{Supervision; project administration; resources; methodology; funding acquisition; writing - review and editing}
\cortext[cor1]{Corresponding author}

\begin{abstract}
This work proposes a method for \textit{model-free} synthesis of a state observer for nonlinear systems with manipulated inputs, \textcolor{black}{where the observer is trained offline using a historical or simulation dataset of state measurements.} We use the structure of the Kazantzis-Kravaris/Luenberger (KKL) observer, extended to nonautonomous systems by adding an additional input-affine term to the linear time-invariant (LTI) observer-state dynamics, which determines a nonlinear injective mapping of the true states. Both this input-affine term and the nonlinear mapping from the observer states to the system states are learned from data using fully connected feedforward multi-layer perceptron neural networks. Furthermore, we theoretically prove that trained neural networks, when given new input-output data, can be used to observe the states with a guaranteed error bound. To validate the proposed observer synthesis method, case studies are performed on a bioreactor and a Williams-Otto reactor.
\end{abstract}


\begin{highlights}
\item Model-free observer synthesis for nonlinear systems with manipulated inputs.
\item Luenberger-like observer containing input-affine terms accounting for input effects.
\item Observer output mapping and input-affine terms learned as neural networks. 
\item Numerical case studies on a bioreactor and a Williams-Otto reactor.
\end{highlights}

\begin{keywords}
State Observation\sep Nonlinear Dynamics \sep Neural Networks
\end{keywords}

\maketitle

\section{Introduction}
\begin{doublespace}  
Contemporary control frameworks for modeling, analysis, and control of nonlinear chemical processes rely fundamentally on state-space representations, whether for model-based methods (e.g., system identification \citep{aastrom1971system}, Koopman operator methods \citep{williams2015data,korda2018linear}), or model-free techniques like reinforcement learning \citep{bertsekas2019reinforcement}. Since the full measurability of the state is rarely satisfied in practice due to physical constraints (e.g., unmeasurable concentrations) and fiscal limitations (e.g., sensor costs), the use of \textit{state observers} that can estimate the states of a system from available plant measurements is necessary. In practice, such observers are useful in control systems by enabling output feedback control \citep{teel1995tools,khalil2014high}, maintaining virtual-physical state alignment in digital twins \citep{pratap2019real,riva2024twin}, and facilitating fault diagnosis through early detection of system anomalies \citep{patton1997observer,chen2007observer}. Formally, a state observer can be defined as an auxiliary system that processes the plant's input-output signals to generate state estimate signals.

State observation for linear systems is classically addressed through the Luenberger observer, which uses a gain matrix on the output error term to stabilize the state estimation error dynamics and guarantees that the estimated state converges asymptotically to the true state \citep{luenberger1964observing}, and the Kalman filter, which optimizes state estimates under stochastic noise by minimizing error covariance \citep{kalman1960}. For nonlinear systems, the most widely implemented approach is the extended Kalman filter (EKF), which applies Kalman filtering to a linearization of the system dynamics about the current state estimate \citep{reif_ekf-based_1998}. Alternatively, high-gain observers use a coordinate transformation to render the system in a canonical triangular form, where convergence is yielded by feeding back the output error with sufficiently large gains so that the estimation error dynamics become exponentially stable \citep{hammouri2002high,khalil2014high}. While effective for specific classes of nonlinear systems, both these approaches suffer from limitations. The EKF's reliance on local linearization restricts its convergence to a small neighborhood of the true state, while high-gain observers exhibit transient peaking phenomena \citep{esfandiari1992output} and large sensitivity to noise. Moreover, both approaches remain fundamentally model-dependent, requiring either an exact state-space representation or precise knowledge of the system dynamics. This requirement limits their applicability to systems where the underlying model is either unknown or complex to derive analytically. Such restrictions motivate the development of data-driven observers that reconstruct states directly from input-output measurements, bypassing the need for explicit system modeling altogether. \textcolor{black}{Related data-driven formulations pursue similar goals through virtual sensing, where measurements are used to reconstruct state or error signals at unmeasured or remote locations of interest \citep{petersen2008kalman}, or through direct filter synthesis, which constructs observer dynamics directly from data without intermediate model identification \citep{novara2012direct,mazzoleni2024comparison}.} 

An observer framework that lends itself well to data-driven techniques is a \emph{Kazantzis-Kravaris/Luenberger (KKL)} observer. Such an observer, developed in the seminal work of \cite{kazantzis1998nonlinear}, extends the classical Luenberger observer to (autonomous) nonlinear systems by using an \emph{injective} nonlinear transformation on states. This transformation maps the original system states to \emph{observer states} that are governed by a linear time-invariant (LTI) system driven by the plant's outputs. Consequently, the observer synthesis problem simplifies to identifying both the nonlinear transformation and its left inverse, which requires solving
(model-based) partial differential equations (PDEs). The KKL framework is particularly powerful due to its applicability to broad classes of nonlinear systems and its well-developed theoretical guarantees and convergence analysis \citep{brivadis2019luenberger,bernard2022observer,pachy2024existence}. The existing KKL observer synthesis results primarily focus on autonomous systems. The autonomous case was extended to actuated controlled systems by \cite{bernard2018luenberger} through two key approaches: (i) retaining the stationary transformation (as in the autonomous case) while augmenting the observer dynamics with additional state-dependent input-affine terms, or (ii) using a time-varying transformation that explicitly accounts for exogenous inputs. While theoretically appealing, its practical computation remains challenging if the knowledge of an injective mapping that transforms the original nonlinear system into the specific structure required by the KKL framework is unavailable. However, deriving this mapping poses difficulties due to the need to solve a system of model-based PDEs \citep{bernard2018luenberger}, and even when such a mapping is known, computing its left inverse to recover state estimates in the original coordinates remains nontrivial, due to its dependence on model knowledge.

This paper presents a \textit{model-free} approach utilizing the Kazantzis-Kravaris/Luenberger (KKL) observer framework for \textit{nonautonomous systems}, implemented in a fully data-driven manner \textcolor{black}{through offline training on a stored dataset of state measurements obtained from history or simulations}, without requiring prior knowledge of the system dynamics.  Building on the input-affine extension to the LTI observer dynamics proposed in \cite{bernard2018luenberger} for handling exogenous inputs, we propose that both the input-affine term in the observer dynamics and the pseudoinverse of the injective nonlinear mapping can be learned directly from data using neural networks. This eliminates the dual challenges of first-principles modeling for complex dynamics and analytical derivation of the observer's nonlinear transformation and input-affine term. We refer to our proposed approach as a neural Luenberger observer for nonlinear systems with external inputs (NLOX). 

The powerful approximation capabilities of neural networks make them particularly suitable for learning the nonlinear state transformation by mapping the original state to a high-dimensional observer state space. Recent advances in data-driven state observation began with supervised feedforward neural networks \citep{ramos2020numerical}, later enhanced through unsupervised learning approaches \citep{peralez2021deep} that enable more effective state space exploration during training. Subsequent developments have used physics-informed neural networks (PINNs) \citep{niazi2023learning} that incorporate the governing PDE as a constraint, along with a neural ODE framework \citep{miao2023learning} that formulates the learning problem as a dynamic optimization, minimizing the time-integrated squared observation error while embedding the observer dynamics within the network architecture, or through optimization of latent dynamics for KKL observer tuning \citep{buisson2023towards}. A more recent model-free method by \cite{peralez2024deep} uses a neural network to handle noisy systems by formulating the learning objective as the minimization of the estimation error through the incorporation of arbitrary complex eigenvalues in a switching observer framework. Likewise, in the model-free setting, \cite{tang2024synthesis} uses Lipschitz-bounded neural networks to constrain the generalization error of the state observer according to the Lipschitz constant for robust observation. Most prior approaches focus on systems of the autonomous (unforced) form, with only a few exceptions. For instance, \cite{ramos2020numerical} uses a neural network to learn a nonlinear transformation and its left inverse, though the input-affine term in the observer dynamics remains unlearned, restricting the set of admissible inputs. Similarly, \cite{peralez2021deep} uses neural networks to learn both forward and inverse mappings in an unsupervised framework. \textcolor{black}{However, in the nonautonomous setting considered therein, the system dynamics are assumed to be known, and the observer design remains model-dependent rather than fully data-driven.} Notably, these existing works have primarily examined systems with state dimensions of two. \textcolor{black}{Another class of related methods are neural adaptive observers, which assume a parametric model of the system dynamics and use neural networks to approximate unknown nonlinearities within this model structure \citep{choi2001adaptive,chen2017neural}. These approaches rely on parameter adaptation laws, often requiring persistent excitation, to estimate the unknown nonlinearities and ensure convergence of the parameter estimates \citep{zeng2025observer}. Thus, adaptive observers rely on parameterizing the system dynamics and estimating those parameters online, and therefore belong to the class of model-based observer designs.} \textcolor{black}{Despite these advances, a fully data-driven model-free observer synthesis framework that explicitly learns both the input-affine term and the nonlinear state reconstruction within a KKL-type structure for nonautonomous systems remains unaddressed.}

Indeed, neural networks can suffer from limitations such as nonconvex training problems. Seeking a convex learning formulation, \cite{tang2023data} demonstrates that topologically equivalent state reconstruction remains possible without direct state measurements, using nonlinear dimensionality reduction of observer states to recover a diffeomorphic representation that preserves system trajectories and dynamic relationships. In recent works, the learning of a Koopman operator and its spectral properties for the data-driven synthesis of nonlinear state observers has been studied \citep{ye2025edmd,Ni_tang2025,tang2025koopman}. While the benefit of global linearization brought by the Koopman operator is of theoretical value, the definition of the Koopman operator strongly relies on the prior knowledge of the qualitative properties of the drift system  (i.e., the autonomous past or the unforced system). Comparatively, we still consider the neural network-based learning approach as a practically easy-to-implement one, leaving its afore-mentioned theoretical limitations to possible ad-hoc hyperparameter tuning and validation efforts.

To this end, we use multi-layer feedforward neural networks for their architectural simplicity, enabling \textit{model-free} simultaneous learning of both the nonlinear state transformation and the input-affine correction term for nonautonomous systems with inputs. The rest of this paper is organized as follows. Section \ref{Sec:Prelim_Problem} develops the problem setting and theoretical preliminaries for KKL observer design, including the input-affine extension for nonautonomous systems. The proposed neural network architecture for learning both the state mapping and input correction term is presented in Section \ref{Sec: NeuralApproach}, with implementation details and theoretical probabilistic bounds on the performance of NLOX derived in Section \ref{Sec: Gen_error}. Section \ref{Sec:CaseStudies} validates the approach through two chemical process case studies: (i) a constant-volume bioreactor and (ii) a Williams-Otto reactor, demonstrating accurate state reconstruction from data alone. Concluding remarks and future directions are discussed in Section \ref{Sec:Conclusion}.

\section{Preliminaries and problem settings}\label{Sec:Prelim_Problem}
Here, we consider a continuous-time nonlinear dynamical system in the form of 
\begin{equation}\label{eq: Input-Affine Dynamic System}
        \dot{x}(t) = f(x(t))+g(x(t))u(t), \qquad y(t)=h(x(t))
\end{equation}
where $x(t)\in \mathcal{X}\subseteq \R^{n_x}$ is the vector of system states, $u(t)\in\mathcal{U}\subseteq\R^{n_u}$ is the control inputs, and $y(t)\in\R^{n_y}$ represents the outputs defined on $t\in[0,+\infty)$. The model functions $f:\mathcal{X} \to \R^{n_x}$, $g:\mathcal{X} \to \R^{n_x\times n_u}$, and $h:\mathcal{X} \to \R^{n_y}$ are unknown but assumed to be smooth on $\mathcal{X}$ and guarantee the existence and uniqueness of the solution for all $t\geq0$. Let $\mathcal{X}_0\subseteq \R^{n_x}$ be a subset of $\mathcal{X}$ representing the collection of all possible initial conditions for \eqref{eq: Input-Affine Dynamic System}. 
\begin{assumption}\label{Assum:Solvability}
    For any $x(0)\in\mathcal{X}_0$ and $u(\cdot)\in \mathrm{L}^\infty\left([0,+\infty),\mathcal{U} \right)$, the solution to \eqref{eq: Input-Affine Dynamic System} exists for all $t\in[0,+\infty)$ and satisfies $x(t)\in\mathcal{X}$, where $\mathcal{X}$ is compact.
\end{assumption}

For notational simplicity, we will hereafter omit explicit time dependence (e.g., writing $x$ instead of $x(t)$) unless desirable for clarity. Along with Assumption \ref{Assum:Solvability} to guarantee the existence of solutions, we must also impose a stability requirement. Since the system is subject to external inputs, we require that its states remain bounded for any bounded input, a property known as input-to-state stability (ISS) \citep{sontag1995characterizations}. However, our objective of deriving a data-driven state observer necessitates a stronger, incremental form of stability. Since the observer must reconstruct the state trajectory from input-output data, to ensure that this reconstruction is well-behaved, a small perturbation in the input signal must not cause a large deviation in the resulting state trajectory. Therefore, we strengthen the ISS requirement to incremental input-to-state stability ($\delta$ISS) \citep{angeli2002lyapunov}. Based on the preceding arguments, we now formalize the following standing assumption on the system's stability, which Section \ref{Sec: Gen_error} will show is essential for proving the generalization bound.
 
\begin{assumption}\label{Assum:deltaISS}
    The system \eqref{eq: Input-Affine Dynamic System} is \emph{incrementally input-to-state stable} ($\delta$ISS) with respect to the input $u$. That is, there exists a class-$\mathscr{KL}$ function $\beta$ and a class-$\mathscr{K}$ function $\gamma$ such that for any two initial conditions $x(0),~ x'(0) \in \mathcal{X}$ and any two bounded input signals $u(\cdot),~ u'(\cdot)\in \mathrm{L}^\infty\left([0,+\infty),\mathcal{U} \right)$, the corresponding solutions $x(t)$ and $x'(t)$ satisfy:
    \begin{equation}\label{eq:deltaISS}
    \left\|x\left(t,x(0),u\right) - x'\left(t,x'(0),u'\right)\right\| \leq \beta\left(\left\| x(0)-x'(0) \right\|, t\right) + \gamma\left(\operatorname*{ess\,sup}_{0 \leq \tau \leq t} \left\|u(\tau) - u'(\tau)\right\|\right), \quad \forall t \geq 0.
    \end{equation}
    Furthermore, the functions $\beta$ and $\gamma$ are bounded by linear functions; i.e., $\exists L_{\beta, t}, L_\gamma \geq 0$ such that $\forall r, t \geq 0$
    $$\beta(r, t)\leq L_{\beta, t}r, \quad \gamma(r)\leq L_\gamma r.$$
\end{assumption}\textcolor{black}{\remark{Assumption \ref{Assum:deltaISS} is introduced to ensure bounded incremental behavior of the plant in the presence of manipulated inputs, as the proposed framework does not involve controller synthesis or input optimization to enforce stability. While $\delta$ISS is difficult to verify exactly from data in a model-free setting, it serves as a standard condition on the system's suitability for learning-based techniques. In practice, $\delta$ISS-like behavior may be empirically assessed by evaluating the sensitivity of state trajectories to perturbations in initial conditions and bounded input variations.}}

In this paper, we design an observer for system \eqref{eq: Input-Affine Dynamic System}, via the Luenberger-like methodology. We first introduce the autonomous case (i.e., $u(t)\equiv 0$) to construct the theoretical framework, subsequently extending these results to systems with manipulated inputs.

\subsection{Autonomous systems}
For autonomous systems, \textcolor{black}{the classical Luenberger observer}, restricted to linear systems, has been extended to nonlinear systems with the KKL observer \citep{kazantzis1998nonlinear}. We adopt the standard assumption that inputs and outputs are known in a causal manner, i.e., at any time $t$, the observer can use only their past or current values $\left(u|_{[0,t]},~ y|_{[0,t]}\right)$. The KKL observer for an autonomous system is defined as:
\begin{subequations}\label{eq:KKL-observer}
\begin{align}
\dot{z}(t) &= Az\left(t\right) + By\left(t\right),\label{eq:2a}\\ 
\hat{x}(t) &= T^\dagger (z(t)). \label{eq:2b}
\end{align}
\end{subequations}
The observer states $z(t)\in\R^{n_z}$ evolve under linear time-invariant (LTI) dynamics, defined by system matrices $A\in\R^{n_z\times n_z}$ and $B\in\R^{n_z\times n_y}$. The pair $(A,B)$ must be controllable, i.e., the controllability matrix $\mathcal{C}:=\left[B,AB,\dots,A^{n_z-1}B  \right]$ must satisfy $\text{rank}(\mathcal{C})=n_z$, and $A$ must be Hurwitz (with all eigenvalues $\left\{\lambda_i\right\}_{i=1}^{n_z}$ residing in the open left half complex plane, Re$(\lambda_i)<0$). To synthesize a KKL observer, the mapping $T:\mathcal{X}  \to \mathbb{R}^{n_z}$ must be injective and continuously differentiable, ensuring the existence of a left inverse $T^\dagger:\mathbb{R}^{n_z}  \to\mathcal{X} $ which maps the observer states $(z)$ to the system state estimates $\left(\hat{x}\right)$. This static mapping must satisfy the following PDE system:
\begin{equation}\label{eq:KKL-PDE}
    \frac{\partial T}{\partial x}f(x)=AT(x)+Bh(x), \quad \forall x\in \mathcal{X},
\end{equation}
where $T^\dagger$ is a left pseudo-inverse of the mapping $T$, such that $T^\dagger(T(x))=x, ~\forall x\in\mathcal{X}$, or $T^\dagger\circ T=I_{n_x}.$ and the notation $\partial T/\partial x$ refers to the Jacobian of $T$. If the mapping $T$ satisfies \eqref{eq:KKL-PDE}, it easily follows that $dT(x(t))/dt=AT(x(t))+By(t)$, and since $A$ is Hurwitz, the dynamics of $z-T(x)$ are asymptotically stable. 
 
The existence of an injective mapping $T$ that satisfies \eqref{eq:KKL-PDE} is guaranteed under the condition of backward distinguishability on the systems' flow. Formally, the property is that for any pair of states, $\left(x_1,x_2\right)\in \mathcal{X}^2$ with $x_1\neq x_2$, there exists some past time horizon $\tau>0$ such that the output trajectories over the past time interval $[-\tau,0]$: $ y_i(\cdot)=h\left(x\left(x_i,\cdot\right)\right)~(i=1,2)$ are distinguishable, where $x$ denotes the system flow map. The sufficiency of this relatively mild observability assumption for the existence of $T$ has been proven in the literature \citep{andrieu2006existence, brivadis2023further}.

For nonlinear systems, the required observer dimension $n_z$ generally needs to exceed the system dimension $n_x$ $\left(n_z>n_x\right)$. It is established that the minimal observer dimension for a single-output system $\left(n_y=1\right)$ becomes $n_z=2n_x+1$. This generalizes to $n_z=n_y\left(2n_x+1\right)$ for systems with $n_y$ outputs. Under a slightly more restrictive assumption, one may choose $n_z=n_y\left(n_x+1\right)$ \citep{andrieu2006existence}. In this case, the observer dynamics can be implemented with $A=\text{diag}\left(-\lambda_1,\dots,-\lambda_{n_z} \right)$ and $B=1_{n_z}$, where $\left\{-\lambda_i\right\}_{i=1}^{n_z}$ are the chosen poles (which can be almost arbitrarily chosen on the left half plane) and $1_{n_z}$ denotes the $n_z$-dimensional vector where each component is equal to 1. Thus, the mapping $T:\R^{n_x}\to\R^{n_z}$ performs a lifting of the states into a higher-dimensional space, while its left-inverse $T^\dagger$ projects a dimensionality reduction back into the original state space. 

Although the construction of $T^\dagger$ is crucial for state observation, its explicit determination relies on solving the model-based PDE system \eqref{eq:KKL-PDE}, rendering it impractical for numerical approximation. Nevertheless, recent advances have demonstrated successful approximation through data-driven methods (e.g., physics-informed neural networks \citep{niazi2023learning}, neural ODEs \citep{miao2023learning}, Lipschitz-bound neural networks \citep{tang2024synthesis}). To this end, we adopt a neural network approximation (learning) of the $T^\dagger$ map, with formulation and implementation details deferred to Section \ref{Sec: NeuralApproach}.

The preceding results hold for autonomous systems, but require modification when exogenous inputs are present $(u\neq0)$. In this case, the observer state $z$, defined by \eqref{eq:KKL-observer}, fails to converge asymptotically under the static mapping $T$ from \eqref{eq:KKL-PDE}. Our work adopts the approach of \cite{bernard2018luenberger}, where the LTI observer dynamics are augmented with additional input-dependent terms, preserving the stationary injective mapping while modifying the observer dynamics. The following subsection formalizes this framework for input-affine systems \eqref{eq: Input-Affine Dynamic System}.

\subsection{Nonautonomous systems}
In this subsection, we consider nonautonomous systems of the input-affine form \eqref{eq: Input-Affine Dynamic System}. For such systems, \cite{bernard2018luenberger} established the existence of a KKL observer with static transformations under two observability notions. The first, termed  \textit{differential observability of the drift system}, is characterized by the injectivity of the mapping $H_d(x)$, which collects successive Lie derivatives of the output $h(x)$ along the drift vector field $f$, as formalized in Definition \ref{def:diff_obs}.

\begin{definition}[\textit{Differential observability of the drift system}]\label{def:diff_obs}
    The drift dynamics of system \eqref{eq: Input-Affine Dynamic System}, namely $\dot{x}=f(x)$, are \textit{weakly differentially observable} of order $d$ on an open subset $\mathcal{D} \subseteq{\R}^{n_x}$ if the function
\begin{equation*}
    H_d(x) = \begin{bmatrix}
        h\left(x\right)& L_fh(x)&\cdots&L_f^{d-1}h(x)
    \end{bmatrix}^\top,
\end{equation*}
is injective on $\mathcal{D}$. Here $L_f^0h(x)=h(x)$, and $L_f^{k+1}h(x)=\frac{\partial}{\partial x}\left(L_f^kh(x)\right)f(x),~\text{for}~ k=0,\dots,d-2.$ If $H_d$ is additionally an immersion $\left(\text{i.e., its Jacobian matrix }\frac{\partial H_d}{\partial x}(x)~ \text{has full rank for all}~ x \in \mathcal{D}\right)$, the system is \textit{strongly differentially observable} of order $d$.
\end{definition}

\noindent The above definition ensures that the state can be reconstructed from output derivatives when no input is applied, providing a baseline for observer design. The second required observability notion, \textit{instantaneous uniform observability}, concerns the ability to distinguish between trajectories of system \eqref{eq: Input-Affine Dynamic System} starting from distinct initial states $x_1(0),x_2(0)\in\mathcal{X}_0$ under any identical input signal $u$. In the following definition, we denote $x\left(x_0,t;u\right)$ as the solutions to the ODEs of system \eqref{eq: Input-Affine Dynamic System} at time $t$ with initial condition $x_0\in\mathcal{X}$ and a control signal $u|_{\left[0,t\right)}$ .

\begin{definition}[\textit{Instantaneous uniform observability}]\label{def:Insta_obs}
    System \eqref{eq: Input-Affine Dynamic System} is \emph{instantaneously uniformly observable} on an open subset $\mathcal{D}\subseteq\mathbb{R}^{n_x}$ if for any pair $\left(x_1,x_2\right)\in\mathcal{D}^2$ with $x_1\neq x_2$, any strictly positive time $\tau^*$, and any input function defined on $\left[0,\tau^*\right)$, 
    \begin{enumerate}
        \item there exists a time $t<\tau^*$ such that $h\left(x\left(x_1,t;u\right)\right)\neq h\left(x\left(x_2,t;u\right)\right)$,
        \item $\left(x\left(x_1,s;u\right),x\left(x_2,s;u\right)\right)\in\mathcal{D}^2$ for all $0 \leq s\leq t$.
    \end{enumerate}
\end{definition}

\noindent The Definitions \ref{def:diff_obs}--\ref{def:Insta_obs} are sufficient to use a static transformation for input-affine systems within the Luenberger observer framework.   

\begin{fact}[\cite{bernard2018luenberger}]\label{fact:Bernard_NonAuto}
    Assume that there is an open set $\mathcal{D}\subseteq \R^{n_x}$ containing $\mathcal{X}$ where \eqref{eq: Input-Affine Dynamic System} is instantaneously uniformly observable, and its drift system is strongly differentially observable of order $n_x$. Suppose that \eqref{eq: Input-Affine Dynamic System} is such that for any $ x_0\in\mathcal{X}_0$, and any $u\in \mathcal{U}$ with $|u(t)|\leq U_*\left(\forall t\in [0,+\infty )\right)$, we have $x\left(x_0,t;u\right)\in\mathcal{X}$. Then there exists a positive \textcolor{black}{$\kappa^*$} where for any \textcolor{black}{$\kappa>\kappa^*$} the following exist:
    \begin{enumerate}
        \item a function $T:\mathbb{R}^{n_x}\rightarrow\mathbb{R}^{n_z}$ $\left(n_z\geq n_x \right)$ that is injective on $\mathcal{X}$ and satisfies the PDE associated with the drift dynamics
        \begin{equation*}
        \frac{\partial T}{\partial x}(x)f(x)=\textcolor{black}{\kappa}~AT(x)+Bh(x), \qquad \forall x\in\mathcal{X},
    \end{equation*}
    where for some strictly positive $\left\{\lambda_i\right\}_{i=1}^{n_z}$, $A$ is the Hurwitz matrix $\text{diag}\left(-\lambda_1,\dots,-\lambda_{n_z} \right)\in\R^{n_z\times n_z}$ and $B=1_{n_z}\in\R^{n_z}$.
    \item a Lipschitz function $\omega: \mathbb{R}^{n_z}\to\R^{n_x\times n_u}$ that satisfies
    \begin{equation*}
        \omega\left(z\right)= \frac{\partial T}{\partial x}\left(x\right)g\left(x\right), \qquad\forall x \in \mathcal{X}.
    \end{equation*}
    Furthermore, for any function $T^\dagger:\mathbb{R}^{n_z}\rightarrow\mathbb{R}^{n_x}$ that is an inverse of the injective function $T$ on $\mathcal{X}$, namely any $T^\dagger$ that satisfies $T^\dagger (T(x))=x,~ \forall x\in\mathcal{X}$, the observer state dynamics
    \begin{equation} \label{eq:KKL-observer-NonAuto}
        \dot{z}=Az+By+\omega(z)u,
    \end{equation}
    guarantees vanishing errors, i.e., $\lim_{t\to+\infty}\left(z-T(x)\right)=0$.
\end{enumerate}
\end{fact}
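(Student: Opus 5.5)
The plan follows the construction of \cite{bernard2018luenberger} in three stages. First, build $T$ from the drift dynamics. Second, define $\omega$ by pushing $\frac{\partial T}{\partial x}g$ through a left inverse of $T$ and extending it. Third, show that the error $e=z-T(x)$ obeys a contracting equation.

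\textbf{Stage 1: constructing $T$.} Because the drift system is strongly differentially observable of order $n_x$ on $\mathcal{D}\supseteq\mathcal{X}$, the map $H_{n_x}$ is an injective immersion on a neighborhood of the compact set $\mathcal{X}$. I would first modify $f$ outside a compact neighborhood of $\mathcal{X}$ so that it is globally Lipschitz and complete. Then I would take
\begin{equation*}
T(x)=\int_{-\infty}^{0} e^{-\kappa A s}B\,h\bigl(\breve{X}(x,s)\bigr)\,ds,
\end{equation*}
where $\breve{X}$ is the flow of the modified drift. For $\kappa$ large this integral converges, because the decay rate $\kappa\lambda_i$ dominates the exponential growth of $h\circ\breve X$ in backward time. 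A standard computation shows that $T$ solves $\frac{\partial T}{\partial x}f=\kappa AT+Bh$.

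For injectivity, I would expand $T$ in powers of $1/\kappa$ by repeated integration by parts:
\begin{equation*}
T_i(x)=\sum_{k=0}^{n_x-1}\frac{(-1)^k}{(\kappa\lambda_i)^{k+1}}L_f^k h(x)+O\bigl(\kappa^{-(n_x+1)}\bigr).
\end{equation*}
With $n_z\ge n_x$ distinct $\lambda_i$, the Vandermonde-type matrix is invertible. Hence $T=V(\kappa)H_{n_x}+R_\kappa$, where $R_\kappa$ is small relative to the leading term, uniformly with its derivative on $\mathcal{X}$.

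\textbf{Stage 2: constructing $\omega$.} The expansion from Stage 1 yields, for $\kappa>\kappa^*$, a lower bound $|T(x_1)-T(x_2)|\ge c_\kappa|x_1-x_2|$ on $\mathcal{X}$. Proving this is \emph{the main obstacle}. It needs both the injectivity of $H_{n_x}$ (for pairs that are far apart) and its full-rank Jacobian (for nearby pairs, via compactness). It also needs the remainder to be dominated uniformly, which requires tracking how the constants scale with $\kappa$.

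The resulting Lipschitz left inverse $T^{-1}$ on $T(\mathcal{X})$ lets me define
\begin{equation*}
\omega(z)=\frac{\partial T}{\partial x}\bigl(T^{-1}(z)\bigr)g\bigl(T^{-1}(z)\bigr) \quad\text{on } T(\mathcal{X}).
\end{equation*}
This is Lipschitz because $T$, $g$ are $C^1$ on a compact set. I would then extend it to all of $\mathbb{R}^{n_z}$ with the same Lipschitz constant, using McShane/Kirszbraun componentwise.

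\textbf{Stage 3: error convergence.} Along solutions staying in $\mathcal{X}$, which holds by hypothesis for bounded $u$,
\begin{equation*}
\dot e=\kappa Ae+\bigl(\omega(z)-\omega(T(x))\bigr)u.
\end{equation*}
Taking $V=|e|^2$, I get
\begin{equation*}
\dot V\le\bigl(-2\kappa\min_i\lambda_i+2L_\omega U_*\bigr)V.
\end{equation*}
The delicate point is that $L_\omega$ may grow with $\kappa$, since $\frac{\partial T}{\partial x}$ and $c_\kappa^{-1}$ both scale with $\kappa$. I would try to show that their product grows sublinearly in $\kappa$, or else rescale $T$ by a $\kappa$-dependent diagonal matrix so that $L_\omega$ is bounded uniformly in $\kappa$. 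This is the approach of \cite{bernard2018luenberger}, where the instantaneous uniform observability assumption serves to control the input contribution. With $L_\omega$ controlled, a large $\kappa$ gives exponential decay of $e$, hence $z-T(x)\to0$. Convergence of $\hat x$ then follows for any $T^\dagger$ satisfying $T^\dagger\circ T=\mathrm{id}$ on $\mathcal{X}$, provided $T^\dagger$ is continuous.
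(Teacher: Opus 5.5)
\textbf{There is a genuine gap in Stage 3.} Your Stages 1 and 2 follow the construction behind the cited result. That is: the integral formula for $T$; the expansion $T=V_0D_\kappa H_{n_x}+R_\kappa$ with $D_\kappa=\mathrm{diag}(\kappa^{-1},\dots,\kappa^{-n_x})$ and $V_0$ a fixed Vandermonde-type matrix; and injectivity for large $\kappa$, with $c_\kappa$ of order $\kappa^{-n_x}$. That part is routine and is not the real obstacle.

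The proof breaks in Stage 3, at exactly the point you flag, and neither remedy you propose can work. The map $x\mapsto\frac{\partial T}{\partial x}(x)g(x)$ has Lipschitz constant of order $\kappa^{-1}$ (from the first entry of $D_\kappa$). Your estimate therefore gives $L_\omega$ of order $\kappa^{n_x-1}$, which is not sublinear once $n_x\ge 2$.

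This is not an artifact of a crude bound. Suppose $L_gh$ depended on $L_f^{n_x-1}h$. Take two points of $\mathcal{X}$ whose images under $H_{n_x}$ differ by $\epsilon$ in the last coordinate only. Then $|T(x_1)-T(x_2)|$ is of order $\kappa^{-n_x}\epsilon$, while $\omega$ changes by order $\kappa^{-1}\epsilon$. So the true Lipschitz constant is of order $\kappa^{n_x-1}$, and increasing $\kappa$ does not help: for $n_x=2$ the sign of $-2\kappa\min_i\lambda_i+2L_\omega U_*$ is independent of $\kappa$, and for $n_x\ge3$ it tends to $+\infty$.

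A diagonal rescaling $S$ of $z$ does not help either. Since $A$ is diagonal, it merely changes the weighted norm in which you measure $e$ (and replaces $B$ by $SB$). The scaling that matters, $D_\kappa$, multiplies the columns of $V_0$, not its rows. Indeed, with $v_1,v_{n_x}$ the first and last columns of $V_0$, one has $|Sv_1|/|Sv_{n_x}|\ge\min_i\lambda_i^{n_x-1}$ for every diagonal $S$, so the ratio above still grows like $\kappa^{n_x-1}$.

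\textbf{The missing idea is where instantaneous uniform observability enters}; your proposal never actually uses it. Combined with strong differential observability of order $n_x$, it gives the triangular (observability canonical form) structure of the input directions:
$L_gL_f^{j-1}h(x)=\Phi_j\bigl(h(x),L_fh(x),\dots,L_f^{j-1}h(x)\bigr)$ on $\mathcal{X}$, with each $\Phi_j$ Lipschitz. Equivalently, $\frac{\partial H_{n_x}}{\partial x}g=\Phi\circ H_{n_x}$ with $\Phi$ lower triangular.

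The $j$th row of $D_\kappa\Phi(D_\kappa^{-1}\zeta)$ is $\kappa^{-j}\Phi_j(\kappa\zeta_1,\dots,\kappa^{j}\zeta_j)$. Its Lipschitz constant in $\zeta_i$ is at most $\kappa^{i-j}L_\Phi\le L_\Phi$ for $i\le j$ and $\kappa\ge1$, and zero for $i>j$.

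Write $\Delta H=H_{n_x}(x_1)-H_{n_x}(x_2)$, and use $|D_\kappa v|\ge\kappa^{-n_x}|v|$ to absorb the remainders, whose Lipschitz constants are $O(\kappa^{-n_x-1})$. You then obtain $|T(x_1)-T(x_2)|\ge c\,|D_\kappa\Delta H|$ and $|\omega(T(x_1))-\omega(T(x_2))|\le\bar L\,|D_\kappa\Delta H|$, with $c,\bar L$ independent of $\kappa$. Hence $L_\omega$ is bounded uniformly in $\kappa$, also after the Lipschitz extension up to a fixed dimensional factor. Only then does a large $\kappa$ dominate $L_\omega U_*$ in your Lyapunov estimate. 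This high-gain mechanism is the heart of the argument in \cite{bernard2018luenberger}.

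A smaller point: Lipschitzness of $\omega$ on $T(\mathcal{X})$, and the $O(\kappa^{-n_x-1})$ Lipschitz bound on $\frac{\partial R_\kappa}{\partial x}g$, both require $\partial T/\partial x$ itself to be Lipschitz. So you need $T\in C^2$, not merely $C^1$; for instance, take $\kappa\min_i\lambda_i$ larger than twice the Lipschitz constant of the modified drift.
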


The scalar gain \textcolor{black}{$\kappa$} in the theorem acts as a convergence rate parameter, where larger values yield faster observer dynamics. While the theorem constructs an injective mapping $T$ through the solution of a PDE, as previously stated, this transformation is typically challenging to obtain analytically. Instead, we focus on determining its left inverse $T^\dagger$ (which directly recovers the state estimate) and the input-affine term $\omega(z)$ that compensates for exogenous inputs. Since $T^\dagger$ and $\omega$ depend on the unknown system dynamics, we develop a neural approximation approach in the following section that learns these components simultaneously from accessible data.

\section{State observation algorithm for NLOX}\label{Sec: NeuralApproach}
In this section, we detail the neural parameterization of the input-affine term $\omega$ and the inverse mapping $T^\dagger$. We adopt a neural network approach due to its universal approximation capabilities, which enable accurate representation of these nonlinear mappings. This capability, combined with sufficient training data and appropriate network initialization, allows the neural network to learn these functions with reasonable accuracy \citep{hornik1989multilayer}. We begin by formalizing how training data are sampled from the system and structured for learning the observer dynamics.
\subsection{Data sampling and collection}\label{sec:data_sampling}
Since data can only be collected on sampling intervals, the continuous-time observer dynamics in \eqref{eq:KKL-observer-NonAuto} are discretized with sampling interval $t_s$. Denoting $k = t/t_s$ as the discrete-time index, the observer dynamics can be approximated through a first-order (forward Euler) discretization as 
\begin{equation}\label{eq:KKL_discrete}
z_{k+1}=z_k +t_s \left(Az_k +By_k+\omega(z_k) u_k\right) ,\quad k=0,\dots,N-1
\end{equation}
where $A, B$ denote the continuous-time observer matrices. As such, the simulated $z$ signal at discretized instants satisfies $z_k\neq z^\star\left(kt_s \right)$, where $z^\star$ is the actual value of $z$ if the continuous-time dynamics are followed. We hence denote $\varepsilon_k=z_k-z^\star\left(kt_s \right)$ as the local discretization error introduced by the forward Euler approximation and the piecewise-constant (pulse) input. This numerical error will be accounted for in the generalization error analysis in Section~\ref{Sec: Gen_error}.
\par First, we collect $M$ sample trajectories of input-output data from the system, each containing $N$ sampling times. The datasets can be written in per-trajectory components as:
\begin{equation}\label{eq:Sample_UXY}
    U_i=\begin{bmatrix}
    \left(u^{(i)}_0\right)^\top\\\left(u^{(i)}_1\right)^\top\\ \vdots \\ \left(u^{(i)}_{N-1}\right)^\top
    \end{bmatrix}\in\R^{N\times n_u},~X_i=\begin{bmatrix}
    \left(x^{(i)}_0\right)^\top\\\left(x^{(i)}_1\right)^\top\\ \vdots \\ \left(x^{(i)}_{N-1}\right)^\top
    \end{bmatrix}\in\R^{N\times n_x},~Y_i=\begin{bmatrix}
        \left(y^{(i)}_0\right)^\top\\\left(y^{(i)}_1\right)^\top\\ \vdots \\ \left(y^{(i)}_{N-1}\right)^\top
    \end{bmatrix}\in\R^{N\times n_y},~i=1,\dots,M,~k=0,\dots,N-1.
\end{equation}
generated offline (through simulations or historical plant operations). The per-trajectory matrices in \eqref{eq:Sample_UXY} can then be concatenated into stacked forms:
\begin{equation*}\label{eq:Full_UYX_Dataset}
    U = \begin{bmatrix}
        \vdots\\\left(u^{(i)}_k\right)^\top\\\vdots\\
    \end{bmatrix}\in\R^{(MN)\times n_u},~X = \begin{bmatrix}
        \vdots\\\left(x^{(i)}_k\right)^\top\\\vdots\\
    \end{bmatrix}\in\R^{(MN)\times n_x},~Y = \begin{bmatrix}
        \vdots\\\left(y^{(i)}_k\right)^\top\\\vdots\\
    \end{bmatrix}\in\R^{(MN)\times n_y}.
\end{equation*}

By Fact \ref{fact:Bernard_NonAuto}, the input sequence $u_k$ is required to be bounded. Accordingly, the excitation signal is generated as a piecewise-constant input, held over each sampling interval $\left[ kt_s,(k+1)t_s\right)$ for each $k$. In this way, we can obtain sample tuples of inputs, outputs, and states $\left\{u_k^{(i)},x_k^{(i)},y_k^{(i)}\right\}_{i=1}^M$. \textcolor{black}{To obtain the observed states $z_k^{(i)}$, we simulate the discrete dynamics in \eqref{eq:KKL_discrete} using the unknown function $\omega(\cdot)$ for each $i=1,\dots,M$, over the time horizon $t\in[0,Nt_s]$, where $\omega(\cdot)$ will be approximated by a neural network as described in the following subsection.} The resulting data matrices are organized as follows:
\begin{equation*}\label{eq:Sample_Z}
Z=\begin{bmatrix}
    \left(z^{(i)}_0\right)^\top\\\left(z^{(i)}_1\right)^\top\\ \vdots \\ \left(z^{(i)}_{N-1}\right)^\top
    \end{bmatrix}\in\R^{N\times n_z}. \quad i=1,\dots,M,~k=0,\dots,N-1,
\end{equation*}

With the observer initialized and the data structured as above, we train a neural network to learn the functions $\omega(z|\theta)$ and $T^\dagger(z|\vartheta)$, as detailed in the following subsection.

\subsection{Neural state observation}
The chosen neural network architecture is a fully connected feedforward multilayer perceptron (MLP) network, selected for its simplicity and ease of implementation. An MLP is a composition of linear transformations $(p)$ and nonlinear activation functions $(\sigma(\cdot)$, e.g., ReLU, sigmoid, $\tanh$). The general form for a network with $D$ layers is as follows:
\begin{enumerate}
\item \textbf{Input layer ($\ell=0$):} The input to the network is the vector $z \in \mathbb{R}^{n_{z}}$, where $p_0=z$.
\item \textbf{Hidden layers ($\ell=1, 2, \dots, D-1$):} Each hidden layer applies the transformation:
\begin{equation*}
p_{\ell} = \sigma\left(W_\ell p_{\ell-1} + b_\ell \right) \in \mathbb{R}^{d_\ell},
\end{equation*}
where $W_\ell \in \mathbb{R}^{d_\ell \times d_{\ell-1}}$ is a weight matrix, $b_\ell \in \mathbb{R}^{d_\ell}$ is a bias vector.
\item \textbf{Output layer ($\ell=D$):} The final layer produces the output via a linear transformation:
\begin{equation*}
p_{D} = W_D h_{D-1} + b_D \in \mathbb{R}^{d_\mathrm{out}}.
\end{equation*}
\end{enumerate}

The overall network function $N\left(z|W_1,b_1,\dots,W_D,b_D\right)$, parameterized by $W_\ell$, is defined by the composition of all layer transformations:
\begin{equation*}\label{eq:Network_Comp}
\begin{aligned}
N\left(z|W_\ell\right) &= p_D \circ p_{D-1} \circ \cdots \circ p_1(z) \\
&= W_D \left( \sigma\left( W_{D-1} \left( \cdots \sigma\left( W_1 z + b_1 \right) \cdots \right) + b_{D-1} \right) \right) + b_D.
\end{aligned}
\end{equation*}
\noindent Here, $D$ denotes the total number of layers, and $d_\ell$ specifies the number of neurons (width) in the $\ell$-th hidden layer. The parameters $W_\ell \in \mathbb{R}^{d_\ell \times d_{\ell-1}}$ and $b_\ell \in \mathbb{R}^{d_\ell}$ are the weight matrix and bias vector for layer $\ell$, respectively, which are learned during training.

We seek to learn the parameterized function  \(\omega(z|\theta)\) to properly define the input dynamics, and the inverse map $T^\dagger(z|\vartheta)$ to obtain the state estimate. This parameterization is achieved using a two-network approach. To precisely describe the architecture of both networks, we introduce the following notation. The subscript $j$ denotes a specific network, where $j\in\left\{\omega,T^\dagger \right\}$. For each network $j$ let:
\begin{itemize}
    \item $D_j$ be its total number of layers,
    \item $d_{\ell,j}$ be the dimension (number of neurons) of its $\ell$-th hidden layer, and
    \item $d_{\text{out},j}$ be its output dimension.
\end{itemize}
\noindent The parameters $\theta$ and $\vartheta$ comprise the weights and bias for their respective networks. The hyperbolic tangent function, $\sigma(\cdot) = \tanh(\cdot)$, is chosen as the activation function for all hidden layers. Its bounded output prevents activations from blowing up, while its smooth, non-vanishing gradients allow for stable backpropagation to update the network weights and bias. We make the following \textcolor{black}{simplifying} assumption on the bias and weights of both networks for later analysis of generalization errors in Section~\ref{Sec: Gen_error}.
\begin{assumption}\label{Assum:Bias+Weight_Boundness}
    The weights and biases of both neural networks are bounded. Specifically, for network $\omega$ with $D_\omega$ layers, there exist constants $C_{W,\omega}, C_{b,\omega} > 0$ such that for all layers $\ell=1,\dots,D_\omega$, $\left\|W_{\ell,\omega}\right\|_{\mathrm{F}} \leq C_{W,\omega}$ and $\left\|b_{\ell,\omega}\right\|_2 \leq C_{b,\omega}$. Similarly, for network $T^\dagger$ with $D_{T^\dagger}$ layers, there exist corresponding bounds $C_{W,T^\dagger}, C_{b,T^\dagger}> 0$.
\end{assumption}

The parameters for  \(\omega(z|\theta)\) and \(T^\dagger(z)|\vartheta\)) are learned in a supervised fashion. 
Although our overall framework is model-free, the network training step uses the true system state $x_k$, which \textcolor{black}{is} assumed to be available during the offline training phase but \textcolor{black}{is} not available in online applications. The learning objective is to minimize the prediction error between the true state and the output of the inverse mapping network, leading to the following loss function (empirical risk):
\begin{equation}\label{eq:NN_Loss}
     \hat{\mathcal{L}}(\theta,\vartheta)=\frac{1}{MN}\sum_{i=1}^{M}\sum_{k=0}^{N-1}\left\|x_{k}^{(i)}-T^\dagger\left( z_k^{(i)}\big|\vartheta\right)\right\|_2^2
 \end{equation}
\noindent The loss function $\mathcal{L}(\theta,\vartheta)$ is minimized via gradient descent. The parameters of both networks are updated simultaneously using backpropagation to compute the gradients, as defined by the following update rule:
 \begin{equation}\label{eq:NN_backprop}
     \theta_{r+1}=\theta_r - \eta\nabla_{\theta_r}\mathcal{L}\left( \theta_r,\vartheta_r\right), \qquad  \vartheta_{r+1}=\vartheta_r - \eta\nabla_{\vartheta_r}\mathcal{L}\left( \theta_r,\vartheta_r\right),
 \end{equation}
where $r$ indexes the optimization step. Batches are constructed from individual trials, such that each batch corresponds to one full trajectory $i$. The networks are trained for $E$ epochs, where one epoch consists of $M$ optimization steps (one per training trajectory). We use the \texttt{RMSprop} optimizer to adapt the learning rate $\eta$ for each parameter, resulting in a total of $E \times M$ parameter updates. This results in one epoch representing a full pass through all $M$ training trajectories consisting of $N$ samples.

From the prior subsection, after collecting sample tuples of inputs, outputs, and states, we can initialize the observer by letting $z_0^{(i)}=0$. Once an initial estimate for the observed state $z$ is obtained, it is fed into the network $\omega$. The output of this network is then used to numerically solve the nonautonomous form \eqref{eq:KKL_discrete}, yielding a corrected observed state $z$. This corrected state is subsequently passed to the second network, $T^\dagger$, to obtain the final state estimate $\hat{x}$. The prediction error is then calculated via the loss function \eqref{eq:NN_Loss}, and the network parameters $(\theta,\vartheta)$ are updated via backpropagation. This entire training procedure is repeated for $E$ epochs. We refer to this approach as a \textbf{neural Luenberger observer for nonlinear systems with external inputs (NLOX)}. The entire procedure for state observation is outlined in Algorithm \ref{alg:one} and illustrated in Figure \ref{fig:NLOX_Diagram}.  

\begin{algorithm}[ht]
	\caption{Training procedure for NLOX}\label{alg:one}
	\KwIn{Tuple of system data $\left\{u_k^{(i)},y_k^{(i)},x_k^{(i)} \right\}$ for $k=0,1,\dots,N$ and $i=1,\dots,M$; Observer matrices $A$ and $B$; For $j\in \left\{\omega,T^\dagger \right\}$, number of layers $L_j$, number of neurons $d_{\ell,j}$; Initial neural network parameters $\theta_0$ and $\vartheta_0$; Number of epochs $E$}
	\KwOut{Learned mapping parameter $\vartheta$ and learned drift parameter $\theta$}
    \For{\text{epoch}$~=1,\dots E$}{
    \For{$i=1,\dots,M$}{Initialize  $z_0^{(i)}$\; 
    \For{$k=1,\dots,N$}{
    Solve \eqref{eq:KKL_discrete} to obtain $z_k^{(i)}$ under the given $\theta$\; } Evaluate network $T^\dagger$  for $\hat{x}_k^{(i)}$ under the given $\vartheta$\;
        Calculate $\hat{\mathcal{L}}(\theta,\vartheta)$ with \eqref{eq:NN_Loss}\;
    Backpropagate using \eqref{eq:NN_backprop} and update $(\theta,\vartheta)$\;
    }

    }
\end{algorithm}

\begin{figure}[ht]
\centering
\begin{tikzpicture}[
    node distance=2.2cm,
    box/.style={rectangle, draw, fill=blue!10, minimum width=2cm, minimum height=1cm, text centered},
    box2/.style={rectangle, draw, fill=green!10, minimum width=2cm, minimum height=1cm, text centered},
    box3/.style={rectangle, draw, fill=red!10, minimum width=2cm, minimum height=1cm, text centered},
    box4/.style={rectangle, draw,fill=white, minimum width=2.5cm, minimum height=2.0cm, text centered},
    trapleft/.style={trapezium, draw, fill=white, trapezium stretches=true,
                    trapezium left angle=70, trapezium right angle=70,
                    minimum width=1.5cm, minimum height=1.8cm, rotate=-90,
                    anchor=center},
    arrow/.style={->, >=stealth, thick},
    feedback/.style={-, >=stealth, thick, red, dashed},
    tune/.style={red, dashed, thick, ->, >=stealth}
]

\node[box, xshift = -0.5cm, yshift=1.5cm] (init) {Initialize $z_0$ and $\left(\theta,\vartheta\right)$};
\node[box4, above=of init,xshift = 0cm, yshift=-1.5cm] (sim) {};
\node[box2, right=of init, xshift=-1.3cm, yshift=0.0cm] (solve) {
    $\dot{z}=Az+By+\qquad\qquad\qquad u$
};
\node[above right=-1.95cm and -3.5cm of solve] (nn) {\includegraphics[width=3.5cm,height=2.5cm]{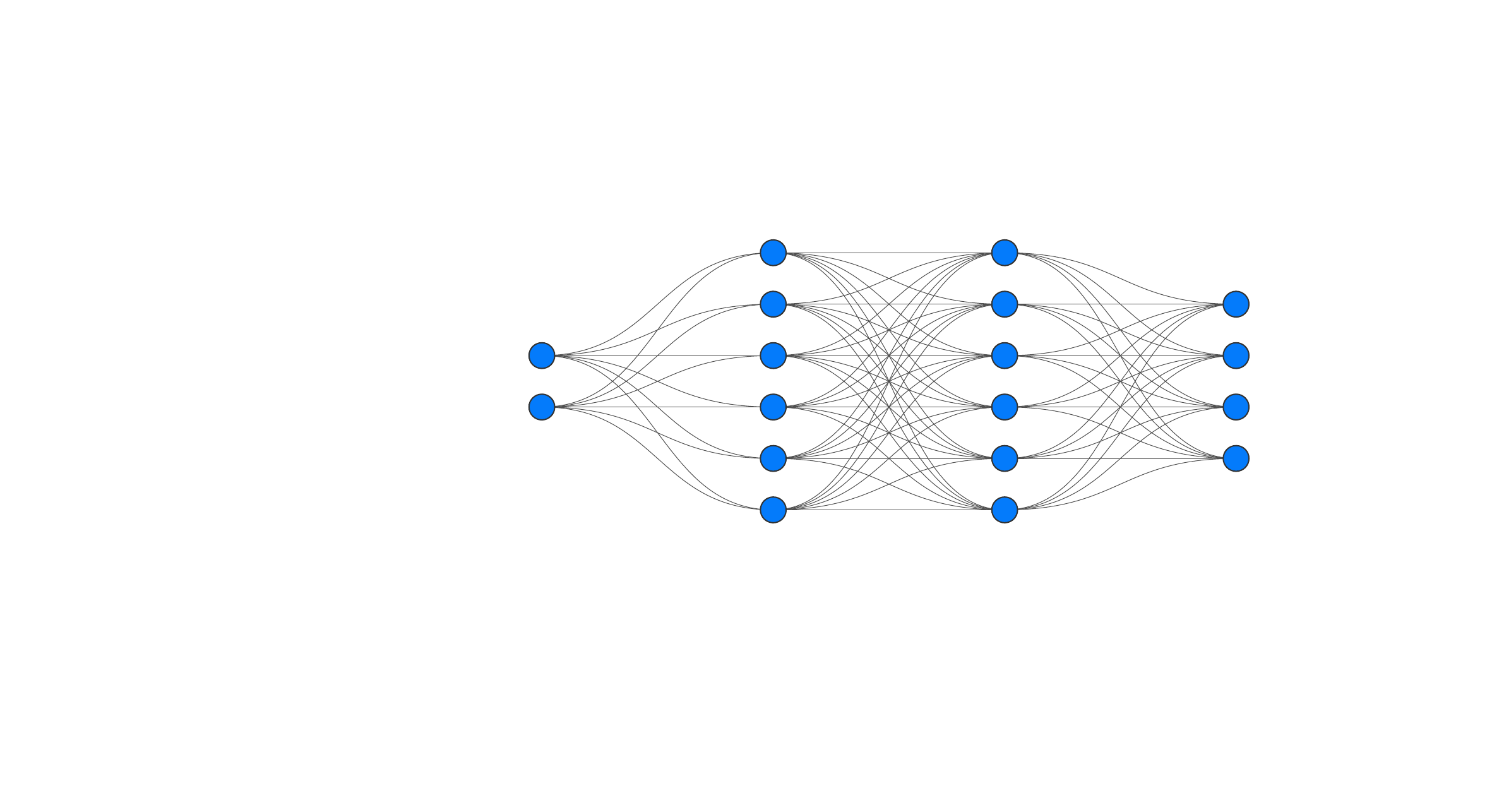}};

\node[trapleft, right=of solve, xshift=-0.58cm] (nn2) {\rotatebox{-90}{}};
\node[box3, right=of nn2, yshift=0.58cm] (error) {Calculate $\hat{\mathcal{L}}$
by \eqref{eq:NN_Loss}};

\node at ([xshift=-3mm,yshift=-1.1mm]nn2.center) {\includegraphics[width=3.5cm,height=2.5cm]{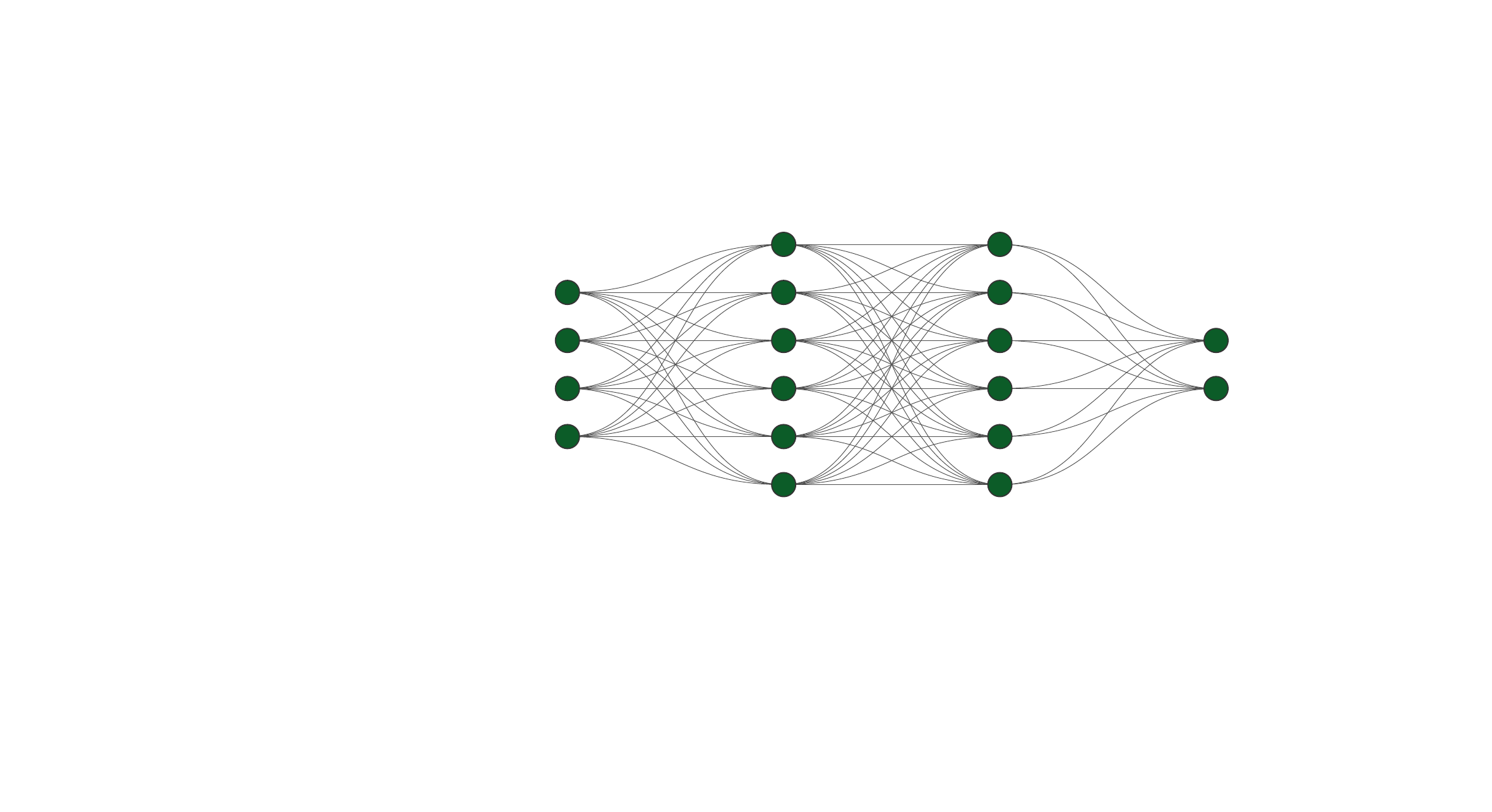}};
\node at (sim.center) {\includegraphics[width=2cm,height=2cm]{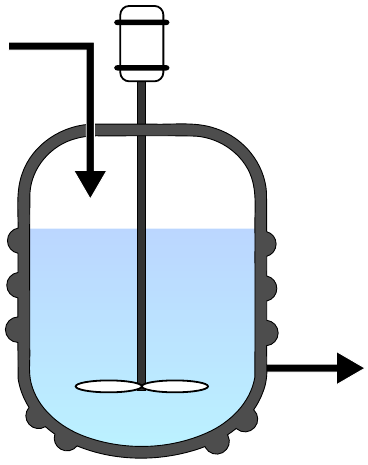}};

\draw[arrow] (init.east) -- node[midway, above] {}(solve.west);
\draw[arrow] (solve.east) -- node[midway, above] {$z$}(nn2.south);
\draw[arrow] (nn2.north) -- node[midway, above] {$\hat{x}$}(error.west);

\draw[arrow] (sim.east) -- ++(0,0) -- ++(0,0) -|  node[pos=0.25, above] {$(u,y)$} (solve.north);

\draw[feedback] (error.south) -- ++(0,0) -- ++(0,-0.6) -| ([xshift=0.9cm]solve.south);
\draw[feedback] (error.south) -- ++(0,0) -- ++(0,-0.3) -| (nn2.east);

\begin{scope}[on background layer]
    \draw[tune] ([xshift=0.9cm]solve.north) -- ++(65:0.65cm) node[above] {};
    \draw[tune] ([xshift=-.4cm]nn2.east) -- ++(65:1.85cm) node[above] {};
\end{scope}

\end{tikzpicture}
\caption{Neural Leunberger state observer for systems with exogenous inputs.}
\label{fig:NLOX_Diagram}
\end{figure}

\SetNlSty{}{}{:} 
\SetAlgoNlRelativeSize{-1} 
\SetAlgoLined 
\DontPrintSemicolon 
\textcolor{black}{\remark{Assumption \ref{Assum:Bias+Weight_Boundness} ensures that the learned observer components from the neural networks remain bounded and well behaved. In practice, the boundedness of the neural network weights and biases is enforced implicitly through stable training procedures, such as early stopping, bounded activation functions (e.g., $\tanh$), and appropriate initialization. While exact parameter bounds are not verified analytically from data, monitoring the magnitude of network parameters during training provides a practical means of ensuring stable learning behavior. When this assumption is only approximately satisfied, the resulting estimation error bounds may become looser, but the observer remains well posed in practice.}}
\section{Generalization error of the observer}\label{Sec: Gen_error}
After training is completed, the NLOX observer is evaluated on unseen test data. To formally quantify its generalized performance to observe the states, we establish a theoretical, probabilistic upper bound on its observation error. 
\subsection{Problem formulation}
We begin by formalizing the objective function \eqref{eq:NN_Loss}, which is computed over a finite data set of $M$ trajectories, each of length $N$, where for each trajectory $i$, a loss is assigned depending on the random initial state $x_0^{(i)}$ and the random input sequence $\left(u_0^{(i)},\dots u_{N-1}^{(i)}\right)$. This  yields the empirical risk over $M$ trajectories as:  
\begin{equation*}\label{eq: EmpiricalRisk1}
    \hat{\mathcal{R}}_M = \frac{1}{MN}\sum_{i=1}^{M}\sum_{k=0}^{N-1}\left\|x_{k}^{(i)}-T^\dagger\left( z_k^{(i)}\big|\vartheta\right) \right\|^2.
\end{equation*}
The empirical risk defined above is a sample average $\frac{1}{M}\sum_{i=1}^M \left\|\Upsilon_i\right\|^2$  of the true \emph{risk} as an expectation over the entire population $\mathscr{D}$:
\begin{equation*}\label{eq:risk}
    \mathcal{R} = \mathbb{E}_{\mathscr{D}} \left[\frac{1}{N}\sum_{k=0}^{N-1}\left\|x_k-T^\dagger\left(z^\star\left(kt_s  \right)\big|\vartheta \right)\right\|^2\right]. 
\end{equation*}
Due to the use of discretization as discussed in Subsection~\ref{sec:data_sampling}, since $z^\star$ is not known but only approximated by $z_k$ with error $\varepsilon_k$, we should consider
\begin{equation*}
        \hat{\mathcal{R}}_\infty = \mathbb{E}_{\mathscr{D}} \left[\frac{1}{N}\sum_{k=0}^{N-1}\left\|x_k-T^\dagger\left(\textcolor{black}{z_k} \big|\vartheta \right)\right\|^2\right]. 
\end{equation*}
For our analysis, we express $\hat{\mathcal{R}}_M$ as an average over per-sample components, for each sample $i=1,\dots,M$, we have
\begin{equation*}\label{Per_Sample_EmpiricalRisk}
\hat{\mathcal{R}}_M =\frac{1}{M}\sum_{i=1}^M\left\|\Upsilon^{(i)} \right \|^2, \quad \text{where } 
    \Upsilon^{(i)} = \frac{1}{\sqrt{N}}\begin{bmatrix}
        x_{1}^{(i)} - T^\dagger\left(z_0^{(i)}\big|\vartheta\right)\\
        x_{2}^{(i)} - T^\dagger\left(z_1^{(i)}\big|\vartheta\right)\\ \vdots\\
        x_{N}^{(i)} - T^\dagger\left(z_{N-1}^{(i)} \big|\vartheta\right)
    \end{bmatrix}.
\end{equation*}
We now view $\Upsilon$ as a random vector whose value depends on the random initial state $x_0$ and the input sequence $\left(u_0, \dots, u_{N-1}\right)$, which are jointly distributed according to $(x_0, u_0, \dots, u_{N-1}) \sim \mathscr{D}$, i.e.,
\begin{equation}\label{eq:Upsilon}
\Upsilon = \Upsilon(x_0, u_0, \dots, u_{N-1}) := \frac{1}{\sqrt{N}}\begin{bmatrix}
        x_{1} - T^\dagger\left( z_0  \big|\vartheta\right)\\
        x_{2} - T^\dagger\left(z_1 \big|\vartheta\right)\\ \vdots\\
        x_{N} - T^\dagger\left(z_{N-1} \big|\vartheta\right)
    \end{bmatrix}.
\end{equation}
Given $M$ independent samples, the empirical joint distribution of these random variables is given by 
\begin{equation*}\label{X_emp_dist}
\mathscr{S} = \frac{1}{M} \sum_{i=1}^M \delta_{\left( x_0^{(i)}, u_0^{(i)}, \dots, u_{N-1}^{(i)} \right)}, 
\end{equation*}
where $\delta$ is the single-point probability measure. Therefore, the vectors $\Upsilon^{(1)}, \dots, \Upsilon^{(M)}$ are independent realizations of the random vector $\Upsilon$. Therefore, we aim to bound the difference between the empirical risk and its expected value in its true distribution $\mathscr{D}$.

\subsection{Lipschitz continuity}
The generalization analysis relies on establishing the Lipschitz continuity of the loss with respect to the observed state $z$. We begin our analysis with a further assumption on the initial state distribution.
\begin{assumption}\label{Assum:boundedness_x0_u}
    The distribution of the initial states $x_0$ is supported on a compact set with $\left\|x_0\right\|_\infty \leq X_*$ for some constant $X_* > 0$ (i.e., $\mathcal{X}_0$ is bounded). Its components are independent, and $x_0$ is independent of the input sequence $u_0, \dots, u_{N-1}$. Moreover, to use Fact~\ref{fact:Bernard_NonAuto}, the input $u(t)$ is designed to be uniformly bounded in amplitude. That is, there exists a constant $U_*>0$ such that $\|u(t)\| \le U_*,~ \forall t \ge 0$ (i.e., $\mathcal{U}$ is bounded). \textcolor{black}{In addition, $t_s$ is small enough such that $I+t_sA$ is Schur stable.}
\end{assumption} 
Recall that the random vector $\Upsilon$ defined above depends on the rollout $\left( x_k, y_k,u_k,z_k\right)$, which in turn is determined by the initial condition $x_0$ and the applied input sequence $\left(u_0,\dots,u_{N-1}\right)$. In order to obtain a generalization bound, we will show that $\Upsilon$ is Lipschitz with respect to the underlying random variables $\left( x_0, u_0, \dots, u_{N-1}\right)$. To make this precise, we first formalize boundedness properties of the data we draw.

\begin{lemma}\label{Lemma:bounded_outputs}
Under Assumption~\ref{Assum:Solvability}, there exists a constant $Y_*>0$ such that
$$
\|y(t)\| = \|h\left(x(t)\right)\| \le Y_*, \qquad \forall t \ge 0.
$$
\end{lemma}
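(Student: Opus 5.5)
The plan is to combine the compactness guaranteed by Assumption~\ref{Assum:Solvability} with the continuity of the output map $h$. The core fact is that a continuous function on a compact set is bounded. The proof then consists of checking the hypotheses of this fact and noting that the resulting bound is uniform over trajectories.

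First, fix any admissible initial condition $x(0)\in\mathcal{X}_0$ and any input $u(\cdot)\in \mathrm{L}^\infty([0,+\infty),\mathcal{U})$. By Assumption~\ref{Assum:Solvability}, the solution of \eqref{eq: Input-Affine Dynamic System} exists for all $t\ge 0$ and satisfies $x(t)\in\mathcal{X}$, where $\mathcal{X}$ is compact. Next, $h:\mathcal{X}\to\R^{n_y}$ is assumed smooth on $\mathcal{X}$, hence continuous there. Therefore the composition $x\mapsto\|h(x)\|$ is a continuous real-valued function on the compact set $\mathcal{X}$. By the Weierstrass extreme value theorem, it attains a finite maximum. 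We define
\begin{equation*}
Y_* := \max\Bigl\{\max_{x\in\mathcal{X}}\|h(x)\|,\,1\Bigr\}.
\end{equation*}
Taking the maximum with $1$ guarantees $Y_*>0$ even in the degenerate case $h\equiv 0$ on $\mathcal{X}$.

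Finally, for every $t\ge0$ we have $x(t)\in\mathcal{X}$, so $\|y(t)\|=\|h(x(t))\|\le \max_{x\in\mathcal{X}}\|h(x)\|\le Y_*$. The bound depends only on $h$ and $\mathcal{X}$, not on $x(0)$, $u$, or $t$, so it holds uniformly in all three.

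There is essentially no hard step. The only point needing care is the regularity of $h$: smoothness is stated on $\mathcal{X}$, which may not be open. If it matters, I would read smoothness as holding on an open neighborhood of $\mathcal{X}$ (as in Fact~\ref{fact:Bernard_NonAuto}, where $\mathcal{D}\supseteq\mathcal{X}$ is open). Either reading gives continuity of $h$ on $\mathcal{X}$, which is all the argument uses.
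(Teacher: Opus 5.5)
Your proposal is correct and follows essentially the same route as the paper: trajectories remain in the compact set $\mathcal{X}$ by Assumption~\ref{Assum:Solvability}, and the continuous map $h$ is bounded there (the paper argues that the image $h(\mathcal{X})$ is compact, while you use the extreme value theorem). Taking $\max\{\cdot,1\}$ to force $Y_*>0$ is a harmless refinement that the paper omits.
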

\noindent The lemma can be easily verified by the fact that the state trajectory $x(t)$ remains in the compact set $\mathcal{X}$ for all $t \ge 0$. Since $h:\mathcal{X}\to\mathbb{R}^{n_y}$ is smooth, and therefore continuous, its image $h(\mathcal{X})$ is also compact. The same bounds hold in discrete time by denoting $y_k := y(k t_s)$ and $u_k := u(k t_s)$ as the sampled signals:
$$
\left\|y_k\right\| \leq Y_*, \qquad \left\|u_k\right\| \leq U_*, \qquad \forall k = 0,\dots,N-1.
$$
Since the observer \eqref{eq:KKL_discrete} depends only on the sampled input-–output pair $\left( u_k,y_k\right)$, Assumption~\ref{Assum:boundedness_x0_u} and Lemma~\ref{Lemma:bounded_outputs} imply that the observer is driven by uniformly bounded signals. In particular, together with the boundedness of the initialization, this guarantees that the observer state $z_k$ remains uniformly bounded over the entire horizon.

\begin{lemma}\label{Lemma:Z_Bound}
  Suppose \eqref{eq: Input-Affine Dynamic System} satisfies Assumptions \ref{Assum:Solvability}, \ref{Assum:deltaISS}, \ref{Assum:Bias+Weight_Boundness}, and \ref{Assum:boundedness_x0_u}. Then $\exists Z_*>0$ such that for all $i=1,\dots,M,$ as long as $\left\|z_0^{(i)} \right\|\leq Z_*$,
  \begin{equation}
    \left\|z_k^{(i)}\right\| \leq Z_*,~ \forall ~k=1,\dots,N-1,.
\end{equation}
\end{lemma}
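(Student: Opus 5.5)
The plan is to treat the discrete observer \eqref{eq:KKL_discrete} as a linear recursion driven by a bounded forcing term:
\begin{equation*}
z_{k+1} = (I+t_sA)z_k + t_s\left(By_k + \omega(z_k|\theta)u_k\right).
\end{equation*}
The argument then reduces to two ingredients. The first is a uniform bound on the forcing term. The second is a geometric-series argument using the Schur stability of $I+t_sA$ from Assumption~\ref{Assum:boundedness_x0_u}.

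\textbf{Step 1: uniform bound on the network output.} Show that the network output $\omega(z|\theta)$ is bounded independently of $z$. Since $\tanh$ takes values in $[-1,1]$, the last hidden layer satisfies $\|p_{D_\omega-1}\|_2\le\sqrt{d_{D_\omega-1,\omega}}$ for every input $z$. The output layer is affine, so Assumption~\ref{Assum:Bias+Weight_Boundness} gives
\begin{equation*}
\|\omega(z|\theta)\|\le \Omega_* := C_{W,\omega}\sqrt{d_{D_\omega-1,\omega}}+C_{b,\omega}
\end{equation*}
for all $z\in\R^{n_z}$, interpreting the output as an $n_x\times n_u$ (or $n_z\times n_u$) matrix and using that the Frobenius norm dominates the operator norm. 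This bounded-activation observation is the key point: no a priori bound on $z$ is needed, so there is no circularity.

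\textbf{Step 2: bound the forcing.} Combine Step 1 with Lemma~\ref{Lemma:bounded_outputs} and Assumption~\ref{Assum:boundedness_x0_u} to get
\begin{equation*}
\|t_s(By_k+\omega(z_k|\theta)u_k)\|\le t_s\left(\|B\|Y_*+\Omega_*U_*\right)=:t_s F_*
\end{equation*}
for all $k$. Assumptions~\ref{Assum:Solvability} and~\ref{Assum:deltaISS} enter only to guarantee that trajectories exist and remain in $\mathcal{X}$, which is what makes $Y_*$ available.

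\textbf{Step 3: contraction and invariance.} Since $A=\mathrm{diag}(-\lambda_i)$ with $\lambda_i>0$, the matrix $I+t_sA$ is diagonal. Schur stability therefore means $\rho:=\max_i|1-t_s\lambda_i|<1$, and this equals the induced $2$-norm $\|I+t_sA\|_2$. (For a general Hurwitz $A$ I would instead use a weighted norm in which $I+t_sA$ is a contraction, at the price of a condition-number constant.) This gives
\begin{equation*}
\|z_{k+1}\|\le\rho\|z_k\|+t_sF_*.
\end{equation*}
Now choose
\begin{equation*}
Z_*\ge\frac{t_sF_*}{1-\rho}.
\end{equation*}
Then $\|z_k\|\le Z_*$ implies $\|z_{k+1}\|\le\rho Z_*+(1-\rho)Z_*=Z_*$, so induction on $k$ from $\|z_0^{(i)}\|\le Z_*$ gives the claim for every trajectory $i$ and every $k\le N-1$. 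The constant is uniform in $i$ and $N$ because $Y_*$, $U_*$ and $\Omega_*$ are.

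\textbf{Main obstacle.} The delicate points are in Step 3: making sure $\rho<1$ holds in the norm actually used, and handling the mismatch in dimensions of $\omega$ in Fact~\ref{fact:Bernard_NonAuto}, which should be read as $n_z\times n_u$. Neither seems serious. If one wants $Z_*$ to also dominate the initialization $z_0=0$, that is trivially satisfied.
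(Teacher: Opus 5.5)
Your proof is correct, but it takes a different route from the paper's. The paper never uses the diagonal structure of $A$. It takes a quadratic Lyapunov function $V(z)=z^\top Pz$ with $(I+t_sA)^\top P(I+t_sA)-P=-t_sI$ and expands $V(z_{k+1})-V(z_k)$. It bounds the cross term and the pure forcing term by Cauchy--Schwarz, using a uniform bound $M_\omega$ on $\omega$, and arrives at $V(z_{k+1})-V(z_k)\le -t_s\|z_k\|^2+2t_sL_1\|z_k\|+t_s^2L_2^2$. From this it concludes that $V$ decreases outside a ball, i.e.\ that $z$ is ultimately bounded.

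You instead use that $I+t_sA$ is diagonal, so Schur stability is the same as the Euclidean contraction $\|I+t_sA\|_2=\rho<1$. This gives the one-step inequality $\|z_{k+1}\|\le\rho\|z_k\|+t_sF_*$ for every $z_k$, and a short induction finishes.

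Your route yields exactly the statement as written: forward invariance of the Euclidean ball $\{\|z\|\le Z_*\}$, with the explicit radius $Z_*=t_sF_*/(1-\rho)$. The paper's Lyapunov argument, as written, only gives invariance of an ellipsoidal sublevel set of $V$. Even that requires also bounding the possible increase of $V$ when $\|z_k\|$ is below the threshold. Recovering the same-radius statement from it costs a factor $\sqrt{\lambda_{\max}(P)/\lambda_{\min}(P)}$.

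What the paper's route buys is independence from the diagonal (more generally, normal) structure of $A$. It applies verbatim to any Hurwitz $A$, which is exactly the case you flag as needing a weighted norm and a condition-number constant.

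You also make explicit a point the paper glosses over. The bound $M_\omega$ comes from $\tanh$ being bounded, which gives your $\Omega_*=C_{W,\omega}\sqrt{d_{D_\omega-1,\omega}}+C_{b,\omega}$ uniformly in $z$. It does not come from $\tanh$ being $1$-Lipschitz, as the paper states. Boundedness is what removes any circularity between bounding $\omega(z_k|\theta)$ and bounding $z_k$.

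Incidentally, Assumption~\ref{Assum:deltaISS} plays no role in either proof. Only the compactness in Assumption~\ref{Assum:Solvability} is needed to obtain $Y_*$.
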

\begin{proof}
    \textcolor{black}{Since $A$ is Hurwitz by construction, there exists a $P \succ 0$ such that $A^\top P+PA=-Q$ for some $Q\succ0$. For sufficiently small sampling time $t_s>0$, the matrix $I+t_sA$ is Schur stable, i.e., there exists a $P \succ 0$ such that $\left(I+t_sA \right)^\top P\left(I+t_sA \right) -P= -t_sI \prec 0$.} Consider the Lyapunov function $V\left(z_k\right) = z_k^\top P z_k$. \textcolor{black}{The difference of $V$ along the trajectories is:}
\begin{equation*}\begin{aligned}\label{eq:Lyapunov1}
        {V}\left(z_{k+1}\right)-{V}\left(z_{k}\right)&=\left[ \left(I+t_sA\right)z_k+t_s\left(By_k+\omega\left(z_k\right)u_k\right)\right]^\top P~[\cdot]-z_k^\top P z_k\\
        &=-z_k^\top \left(t_s I\right)z_k+2t_s\left(By_k+\omega\left(z_k\right)u_k \right)^\top P \left(I+t_sA \right)z_k+t_s^2\left(By_k+\omega\left(z_k\right)u_k \right)^\top P (\cdot).
\end{aligned}
\end{equation*}
We analyze the last two terms on the right-hand side. For the second term, by Cauchy-Schwarz inequality, we have $\left|2t_s \left(By_k+\omega\left(z_k\right)u_k \right)^\top P \left(I+t_sA \right)z_k\right|\leq2t_s \left(\|B\|Y_*+M_\omega U_* \right)\|P\|\left\|I+t_sA\right\| \left\|z_k \right\|$, since $\omega$ is bounded by some $M_\omega>0$ by Assumption \ref{Assum:Bias+Weight_Boundness} (due to bounded weights/biases and the 1-Lipschitz $\tanh$ activation). For the last term, we have $t_s^2\left| \left(By_k+\omega\left(z_k\right)u_k \right)^\top P (\cdot)\right| \leq t_s^2\|P\|\left(\|B\|Y_*+M_\omega U_* \right)^2$. Putting all the bounds together, we obtain:
\begin{equation*}\label{Lyapunov 2}
\begin{aligned}
{V}\left(z_{k+1}\right)-{V}\left(z_{k}\right) \leq - t_s\left\|z_k\right\|^2+ 2t_sL_1 \left\|z_k\right\| + t_s^2L_2^2 .
\end{aligned}
\end{equation*}
From this inequality, it follows that $V\left(z_{k+1}\right) \leq {V}\left(z_k\right)$ whenever $\|z_k\| \geq L_1+\sqrt{L_1^2+t_s^2L_2^2}$, proving that $z$ is ultimately bounded by this quantity. The theorem's conclusion thus follows.
\end{proof}

This establishes that $z$ remains bounded for bounded $(u,y)$, provided the stability of $A$. This further allows us to conclude that, in combination with Assumption \ref{Assum:Bias+Weight_Boundness}, the $T^\dagger$ network is $L_{T^\dagger}$-Lipschitz and the $\omega$ network is $L_{\omega}$-Lipschitz on the relevant compact domain. Now consider the $k$-th component of $\Upsilon$ defined in \eqref{eq:Upsilon}. 
\begin{equation*}
     \sqrt{N}\Upsilon_k=x_{k} - T^\dagger\left(z_{k-1}+t_s\left( Az_{k-1} +By_{k-1} +\omega\left(z_{k-1}\big|\theta \right)u_{k-1}^{(i)} \right)\big|\vartheta\right).
\end{equation*}
The observer dynamics \eqref{eq:KKL-observer-NonAuto} depend on the initial state condition  $x_0$ and input sequence $u_0, \dots, u_{N-1}$, but inherit several structural properties: (i) the matrix $A$ is Hurwitz by design, ensuring exponential stability; (ii) the learned term $\omega(\cdot|\theta)$ is Lipschitz, by Assumption~\ref{Assum:Bias+Weight_Boundness}; (iii) the input $u(t)$ is uniformly bounded, and the output $y(t)=h(x(t))$ is uniformly bounded and Lipshitz in $x$ by Assumption~\ref{Assum:boundedness_x0_u} and Lemma~\ref{Lemma:bounded_outputs}; and (iv) differences in $x(t)$ over two trajectories are already controlled by Assumption~\ref{Assum:deltaISS}. Together, these properties imply that the observer state $z_k$ is incrementally input-to-state stable with respect to $\left(x_0, u_0, \dots, u_{N-1}\right)$. We formalize this via the following incremental stability condition for the observed states.

\begin{lemma}\label{lemma:deltaISS_observer}
Consider two trajectories of the observer dynamics \eqref{eq:KKL-observer-NonAuto} driven by two input trajectories \(\big(x_0,u(\cdot)\big)\) and \(\big(x_0',u'(\cdot)\big)\), and initialized with the same observer state \(z(0)=z'(0)=0\).
There exist a class-\(\mathscr{KL}\) function \(\beta_z\) and a class-\(\mathscr{K}\) function \(\gamma_z\) such that, for all \(t \ge 0\),
$$
\| z(t) - z'(t) \|
\;\le\;
\beta_z\!\big(\,\|x_0 - x_0'\|,\, t \big)
\;+\;
\gamma_z\!\Big(
\operatorname*{ess\,sup}_{0 \leq \tau \leq t} \| u(\tau) - u'(\tau) \|
\Big).
$$
Moreover, on the compact domain reached under Assumptions~\ref{Assum:Solvability}--\ref{Assum:boundedness_x0_u}, \(\beta_z\) and \(\gamma_z\) admit linear upper bounds of the form
$$
\beta_z(r,t) \le L_{\beta_z,t}\, r,
\qquad
\gamma_z(r) \le L_{\gamma_z}\, r,
$$
for some nonnegative gains \(L_{\beta_z,t}\) and \(L_{\gamma_z}\) that do not depend on the particular realization of \((x_0,u(\cdot))\).
\end{lemma}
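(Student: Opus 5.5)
We will prove the lemma by treating the error $e(t) := z(t)-z'(t)$ as the state of a linear system with a bounded, Lipschitz perturbation. Subtracting the two copies of \eqref{eq:KKL-observer-NonAuto} gives
\begin{equation*}
\dot e = A e + B\big(y-y'\big) + \big(\omega(z)-\omega(z')\big)u + \omega(z')\big(u-u'\big).
\end{equation*}
By Lemma~\ref{Lemma:Z_Bound}, both $z$ and $z'$ stay in the ball of radius $Z_*$. On that compact set, Assumption~\ref{Assum:Bias+Weight_Boundness} makes $\omega$ both $L_\omega$-Lipschitz and bounded by $M_\omega$. Lemma~\ref{Lemma:bounded_outputs} together with smoothness of $h$ on compact $\mathcal{X}$ makes $h$ $L_h$-Lipschitz. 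The perturbation is therefore bounded by
\begin{equation*}
\|B\|L_h\|x-x'\| + L_\omega U_*\|e\| + M_\omega\|u-u'\|,
\end{equation*}
and Assumption~\ref{Assum:deltaISS} bounds $\|x(t)-x'(t)\|$ by $L_{\beta,t}\|x_0-x_0'\| + L_\gamma\,\mathrm{ess\,sup}\|u-u'\|$.

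The key step is a Lyapunov argument with $V(e)=e^\top P e$, where $A^\top P + PA = -Q$. This gives
\begin{equation*}
\dot V \le -\big(\lambda_{\min}(Q) - 2\|P\|L_\omega U_*\big)\|e\|^2 + 2\|P\|\|e\|\big(\|B\|L_h\|x-x'\| + M_\omega\|u-u'\|\big).
\end{equation*}
We need the decay margin $c := \lambda_{\min}(Q) - 2\|P\|L_\omega U_* > 0$. This is the main obstacle: the state-dependent term $\omega(z)u$ could destroy contractivity. Our first route is to use the gain $\kappa$ from Fact~\ref{fact:Bernard_NonAuto}. Replacing $A$ by $\kappa A$ scales $Q$ while $P$ stays fixed, so $c>0$ for $\kappa$ large, which matches the ``$\kappa>\kappa^*$'' requirement there.

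Failing that, we argue on the finite horizon $[0,Nt_s]$ that is actually relevant for $\Upsilon$. Grönwall's inequality then gives the linear bounds directly, with growth $e^{(\|A\|+L_\omega U_*)t}$. The resulting constants are uniform in the realization but depend on $t$, which the statement allows through $L_{\beta_z,t}$.

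In the contractive case, the comparison lemma, or variation of constants applied to $\|e\|$, with $e(0)=0$ yields
\begin{equation*}
\|e(t)\| \le \frac{2\|P\|}{c}\sqrt{\frac{\lambda_{\max}(P)}{\lambda_{\min}(P)}}\;\sup_{s\le t}\Big(\|B\|L_h\big(L_{\beta,s}\|x_0-x_0'\| + L_\gamma\,\mathrm{ess\,sup}\|u-u'\|\big) + M_\omega\|u-u'\|\Big).
\end{equation*}
Alternatively, we convolve the $\mathscr{KL}$ decay $\beta(\cdot,s)$ against the exponential kernel $e^{-c(t-s)/2}$, which keeps the $x_0$-dependence of class $\mathscr{KL}$. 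We then define:
\begin{itemize}
\item $\beta_z(r,t) := \int_0^t C e^{-c(t-s)}\beta(r,s)\,ds$, which is class $\mathscr{KL}$ since $\beta$ decays in $s$ and the kernel is integrable;
\item $\gamma_z(r) := C'(\|B\|L_h L_\gamma + M_\omega)\,r$.
\end{itemize}

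The linear bounds follow immediately from $\beta \le L_{\beta,t} r$ and $\gamma \le L_\gamma r$. Every constant comes from $P$, $Q$, $B$, $L_h$, $L_\omega$, $M_\omega$, $U_*$ and the $\delta$ISS gains. These are fixed by Assumptions~\ref{Assum:Solvability}--\ref{Assum:boundedness_x0_u} and do not depend on the particular realization of $(x_0,u(\cdot))$.
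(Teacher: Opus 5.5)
You have isolated the right obstacle, but neither of your two routes proves the lemma as stated. The error system contracts only if $c=\lambda_{\min}(Q)-2\|P\|L_\omega U_*>0$, and nothing in Assumptions~\ref{Assum:Solvability}--\ref{Assum:boundedness_x0_u} enforces this.

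Your first route replaces $A$ by $\kappa A$ and then needs $\kappa\lambda_{\min}(Q)>2\|P\|L_\omega U_*$. That is a new design hypothesis, not a consequence of Fact~\ref{fact:Bernard_NonAuto}. The threshold $\kappa^*$ there is fixed by the plant and its exact $\omega$, not by the Lipschitz bound of the learned class $\mathcal{H}_\omega$, which is set independently through $C_{W,\omega}$. The lemma also concerns a given Hurwitz $A$.

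Your second route (Gr\"onwall on $[0,Nt_s]$) proves a different statement. The coefficient of $\operatorname*{ess\,sup}\|u-u'\|$ grows like $e^{(\|A\|+L_\omega U_*)t}$, whereas the lemma requires a $t$-independent $\gamma_z$ and $L_{\gamma_z}$. The coefficient of $\|x_0-x_0'\|$ does not decay, whereas $\beta_z$ must be of class $\mathscr{KL}$ on all of $t\ge 0$. The $t$-dependence the statement allows, through $L_{\beta_z,t}$, covers neither requirement.

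No better argument can close this, because without a small-gain condition the lemma is false for networks admissible under Assumption~\ref{Assum:Bias+Weight_Boundness}, which is the setting in which it is invoked. Take the scalar plant $\dot x=-x$, $y=x$ (so $\beta(r,t)=re^{-t}$), with $u\equiv 1$, $A=-1$, $B=1$, and $\omega(z)=2\tanh(2z)$, a one-hidden-layer $\tanh$ network with bounded weights.
\begin{itemize}
\item For $x_0=0$ the observer stays at $z\equiv 0$.
\item For $x_0'=\varepsilon>0$, the forcing $\varepsilon e^{-t}>0$ pushes $z'$ off the unstable equilibrium $0$ of $\dot z=-z+2\tanh(2z)$. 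Then $z'$ can never return to $0$, and it converges to the stable equilibrium $z^*\approx 2$.
\end{itemize}
Hence $\|z(t)-z'(t)\|\to z^*$ for every $\varepsilon>0$, while $\beta_z(\varepsilon,t)+\gamma_z(0)\to 0$.

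The paper gives no proof beyond the informal items (i)--(iv) preceding the lemma, and those skip exactly this point. You should therefore state explicitly which of two results you prove.
\begin{itemize}
\item[(a)] The lemma under the added hypothesis $2\|P\|L_\omega U_*<\lambda_{\min}(Q)$, with $L_\omega$ the class-uniform Lipschitz bound from Assumption~\ref{Assum:Bias+Weight_Boundness}. Your Lyapunov argument works for this.
\item[(b)] A finite-horizon version on $[0,Nt_s]$ with $Nt_s$-dependent constants. This suffices for the Lipschitz continuity of $\Upsilon$ in Lemma~\ref{Lemma_Lipschitz}, since only finitely many sampling instants enter. It does not give that lemma's $N$-uniform bounds involving $1/(1-L_{\beta_z,t_s})$, which presuppose contraction.
\end{itemize}

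Two smaller repairs are also needed. Your convolution $\int_0^t Ce^{-c(t-s)}\beta(r,s)\,ds$ vanishes at $t=0$ and is not monotone in $t$, so it is not itself of class $\mathscr{KL}$; replace it with a $\mathscr{KL}$ majorant. Lemma~\ref{Lemma:Z_Bound} is a discrete-time statement, but you do not need it, since a $\tanh$ network with bounded weights is globally bounded and globally Lipschitz.
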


To analyze how $\Upsilon$ depends on the initial condition $x_0$ and input sequence $u_0, \dots, u_{N-1}$, we exploit the $\delta$ISS assumptions.
\begin{lemma}\label{Lemma_Lipschitz}
    Under Assumptions \ref{Assum:Solvability}, \ref{Assum:deltaISS}, \ref{Assum:Bias+Weight_Boundness}, and \ref{Assum:boundedness_x0_u} the mapping $\left(x_0, u_0, \dots, u_{N-1}\right) \mapsto \Upsilon$ is Lipschitz continuous with constant:
    $$L_\Upsilon=\max\left(\tilde{L}, \tilde{L}_0, \dots, \tilde{L}_{N-1}\right) , $$
    with
$$\tilde{L}:=\frac{X_* L_{\beta,t_s}\left(1- L_{\beta,t_s}^N\right)}{1- L_{\beta,t_s}}+\frac{ L_{T^\dagger}L_{\beta_z,t_s}\left(1- L_{\beta_z,t_s}^N\right)}{1- L_{\beta_z,t_s}}\leq\frac{X_* L_{\beta,t_s}}{1- L_{\beta,t_s}}+\frac{ L_{T^\dagger}L_{\beta_z,t_s}}{1- L_{\beta_z,t_s}},$$
    and for $k=0,1,\dots,N-1$
    $$\tilde{L}_k:=L_\gamma \frac{1- L_{\beta,t_s}^{N-k}}{1- L_{\beta,t_s}}+L_{T^\dagger}L_{\gamma_z} \frac{1- L_{\beta_z,t_s}^{N-k}}{1- L_{\beta_z,t_s}}\leq  \frac{L_\gamma}{1- L_{\beta,t_s}}+\frac{L_{T^\dagger}L_{\gamma_z}}{1- L_{\beta_z,t_s}}.$$
\end{lemma}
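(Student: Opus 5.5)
We compare two realizations $\xi=(x_0,u_0,\dots,u_{N-1})$ and $\xi'=(x_0',u_0',\dots,u_{N-1}')$ drawn from $\mathscr{D}$, with rollouts $(x_k,z_k)$ and $(x_k',z_k')$ started from the same observer initialization $z_0=z_0'=0$. We bound $\|\Upsilon(\xi)-\Upsilon(\xi')\|$ one block at a time. Writing the $k$-th block of $\sqrt{N}\Upsilon$ as $x_{k+1}-T^\dagger(z_k|\vartheta)$, the triangle inequality together with the Lipschitz property of $T^\dagger$ on the compact set given by Lemma~\ref{Lemma:Z_Bound} yields
\begin{equation*}
\sqrt{N}\,\|\Upsilon_k-\Upsilon_k'\|\le \|x_{k+1}-x_{k+1}'\| + L_{T^\dagger}\|z_k-z_k'\|.
\end{equation*}
The proof then has two parts: control each of the two terms on the right by $\|x_0-x_0'\|$ and $\|u_j-u_j'\|$, and sum over $k$.

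For the state term, apply Assumption~\ref{Assum:deltaISS} one sampling interval at a time rather than over the whole horizon. Since the inputs are piecewise constant, restarting the flow at $x_j,x_j'$ on $[jt_s,(j+1)t_s)$ gives the one-step recursion $\|x_{j+1}-x_{j+1}'\|\le L_{\beta,t_s}\|x_j-x_j'\|+L_\gamma\|u_j-u_j'\|$. Unrolling it produces
\begin{equation*}
\|x_{k}-x_k'\|\le L_{\beta,t_s}^{k}\|x_0-x_0'\|+\sum_{j<k}L_{\beta,t_s}^{k-1-j}L_\gamma\|u_j-u_j'\|.
\end{equation*}
Lemma~\ref{lemma:deltaISS_observer} gives the same recursion for $z$, with $(L_{\beta_z,t_s},L_{\gamma_z})$ in place of $(L_{\beta,t_s},L_\gamma)$. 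Because $z_0=z_0'$, the dependence of $z$ on $x_0$ enters only through $y$, and it is absorbed into the $\beta_z$ term.

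Next, sum these geometric expansions over $k=0,\dots,N-1$ and collect coefficients. The coefficient of $\|x_0-x_0'\|$ is bounded by $\sum_{k=1}^{N}L_{\beta,t_s}^k+L_{T^\dagger}\sum_{k=1}^N L_{\beta_z,t_s}^k$, which gives the partial geometric sums in $\tilde L$. The factor $X_*$ comes from normalizing by the bounded support of $x_0$ in Assumption~\ref{Assum:boundedness_x0_u}, for instance when converting between the $\infty$-norm and the norm used in $\|x_0-x_0'\|$. For each $\|u_j-u_j'\|$, only the blocks with $k\ge j$ contribute, so the coefficient is $L_\gamma\sum_{m=0}^{N-j-1}L_{\beta,t_s}^m+L_{T^\dagger}L_{\gamma_z}\sum_m L_{\beta_z,t_s}^m$. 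This is exactly $\tilde L_j$. The $1/\sqrt N$ normalization can only reduce these constants, and we do not need to track it. Finally, bound $\tilde L\|x_0-x_0'\|+\sum_j\tilde L_j\|u_j-u_j'\|$ by $\max(\tilde L,\tilde L_0,\dots,\tilde L_{N-1})$ times the $\ell_1$-type norm of $\xi-\xi'$. This gives Lipschitz continuity with constant $L_\Upsilon$ in that product norm. The stated upper bounds in terms of $1/(1-L)$ follow provided $L_{\beta,t_s},L_{\beta_z,t_s}<1$.

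The main obstacle is justifying the one-step recursion. Assumption~\ref{Assum:deltaISS} and Lemma~\ref{lemma:deltaISS_observer} are stated for whole trajectories from time $0$ with an ess sup over the inputs, not interval by interval. Restarting them at each $jt_s$ requires a time-invariance/semigroup argument. For $z$ there is an extra issue: at intermediate restarts $z_j\ne z_j'$, so Lemma~\ref{lemma:deltaISS_observer}, which assumes equal initialization, must be reinterpreted as giving the gain $L_{\beta_z,t_s}$ on $\|z_j-z_j'\|$. My first attempt would be to derive this directly from the discrete recursion~\eqref{eq:KKL_discrete}, using $\|I+t_sA\|$ together with $t_sL_\omega U_*$ and the Lipschitz bound of $h$. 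A further point to check is that the strict inequality $L_{\beta,t_s}<1$ is needed for the simplified bounds, and I would assume it explicitly (e.g., for exponential $\delta$ISS and suitable $t_s$).
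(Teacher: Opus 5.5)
Your proposal follows the paper's proof essentially step for step: the triangle inequality with $L_{T^\dagger}$, a one-interval $\delta$ISS recursion for $x$ unrolled into geometric sums, the analogous recursion for $z$ taken from Lemma~\ref{lemma:deltaISS_observer}, summation over the $N$ blocks, and a maximum over the collected coefficients. The paper passes over the obstacles you flag without comment: restarting the $\delta$ISS estimates at each $jt_s$, the unequal intermediate observer states $z_j\neq z_j'$, and the need for $L_{\beta,t_s},L_{\beta_z,t_s}<1$ in the simplified bounds. It also simply inserts $X_*$ into the unrolled $x$-bound, and your norm-conversion explanation for that factor does not hold up, since converting norms would give a dimension-dependent constant rather than $X_*$; the factor is harmless only when $X_*\ge 1$, where it merely loosens the bound.
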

\begin{proof}
    Using Assumption \ref{eq:deltaISS} on incremental input-to-state stability, for the system evolving from $(k-1)t_s$ to $kt_s$ ($k\geq 1$), we have
    $$\left\|x_k - x_k'\right\| \leq L_{\beta,t_s} \left\|x_{k-1}-x_{k-1}'\right\| + L_\gamma \max_{\tau\in \left[(k-1)t_s, kt_s\right]} \left\|u(\tau)-u'(\tau)\right\|, $$
    Then by backwards induction on $k$, 
    $$ \left\|x_k - x_k'\right\| \leq X_*L_{\beta,t_s}^k \left\|x_0-x_0'\right\| + L_\gamma \sum_{j=0}^{k-1}L_{\beta, t_s}^{k-1-j}\left\|u_j - u_j'\right\|.$$
Given when $\left\|x_0\right\|_\infty \leq X_*$ is satisfied, for the resulting $\Upsilon_k$ and $\Upsilon_k'$ from the two trajectories, we have

$$ \sqrt{N} \left\|\Upsilon_k - \Upsilon_k'\right\| \leq\left\|x_k - x_k' - \left( T^\dagger \left(z_{k}\right)-T^\dagger\left(z'_{k}\right)\right)\right\|\leq\left\|x_k - x_k' \right\| +L_{T^\dagger} \left\| z_{k}-z'_{k}\right\| ,$$
which simplifies to 
    $$ \sqrt{N} \left\|\Upsilon_k - \Upsilon_k'\right\| \leq X_*L_{\beta,t_s}^k \left\|x_0-x_0'\right\| + L_\gamma \sum_{j=0}^{k-1}L_{\beta, t_s}^{k-1-j}\left\|u_j - u_j'\right\| +L_{T^\dagger} \left\| z_{k}-z'_{k}\right\| .$$
 
\noindent  By Lemma~\ref{lemma:deltaISS_observer}, the observer dynamics are incremental input-to-state stability; therefore, it admits an analogous bound in terms of $x_0$ and $u_j$ through backwards induction on $k$. 
 $$
 \sqrt{N} \left\|\Upsilon_k - \Upsilon_k'\right\| \leq \left( X_*L_{\beta,t_s}^k+L_{\beta_z,t_s}^{k}L_{T^\dagger}\right) \left\|x_0-x_0'\right\| + \sum_{j=0}^{k-1}\left(L_\gamma L_{\beta, t_s}^{k-1-j}+L_{\gamma_z}L_{T^\dagger}L_{\beta_z, t_s}^{k-1-j} \right)\left\|u_j - u_j'\right\| .
 $$

\noindent Then by summing over $k=1,\dots,N$ and applying the geometric-series identity, we obtain 
\begin{equation*}\label{eq:Upsilon_sum_bound}
\begin{aligned}
\sqrt{N}\,\|\Upsilon - \Upsilon'\|
\;\le\;&
\left[
X_* \,L_{\beta,t_s}\,\frac{1 - L_{\beta,t_s}^{N}}{1 - L_{\beta,t_s}}
+
L_{T^\dagger}\,L_{\beta_z,t_s}\,\frac{1 - L_{\beta_z,t_s}^{N}}{1 - L_{\beta_z,t_s}}
\right]
\left\|x_0 - x_0'\right\| \\[0.5em]
&+
\sum_{j=0}^{N-1}
\left[
L_\gamma \,\frac{1 - L_{\beta,t_s}^{\,N-j}}{1 - L_{\beta,t_s}}
+
L_{T^\dagger} L_{\gamma_z} \,\frac{1 - L_{\beta_z,t_s}^{\,N-j}}{1 - L_{\beta_z,t_s}}
\right]
\left\|u_j - u_j'\right\|.
\end{aligned}
\end{equation*}

\noindent This implies that the random vector $\Upsilon$ is Lipschitz on $\left(x_0, u_0, \dots, u_{N-1}\right)$ with Lipschitz constant 
\begin{equation*}\label{eq:L_tilde_defs}
\tilde{L}
:=
X_* \,L_{\beta,t_s}\,\frac{1 - L_{\beta,t_s}^{N}}{1 - L_{\beta,t_s}}
+
L_{T^\dagger}\,L_{\beta_z,t_s}\,\frac{1 - L_{\beta_z,t_s}^{N}}{1 - L_{\beta_z,t_s}},
\qquad
\tilde{L}_j
:=
L_\gamma \,\frac{1 - L_{\beta,t_s}^{\,N-j}}{1 - L_{\beta,t_s}}
+
L_{T^\dagger} L_{\gamma_z} \,\frac{1 - L_{\beta_z,t_s}^{\,N-j}}{1 - L_{\beta_z,t_s}},
\end{equation*}
    \noindent 
    $$L_\Upsilon=\max\left(\tilde{L}, \tilde{L}_0, \dots, \tilde{L}_{N-1}\right) . $$
    The lemma's conclusion thus follows. 
\end{proof}
\textcolor{black}{Lemma~\ref{Lemma_Lipschitz} establishes that the mapping from sampled initial conditions and input sequences to  $\Upsilon$ is Lipschitz continuous. This implies that the generalization error depends smoothly on the underlying data and does not exhibit uncontrolled sensitivity to perturbations. This regularity property is important in the subsequent derivation of finite-sample generalization bounds.}

\subsection{Rademacher complexity and concentration inequality}\label{subsec: Rade_concen_inequal}
Since the observed dynamics incorporate learned components $(\omega, T^\dagger)$ that depend on both the observed states $z_k$ and the underlying true states $x_k$. Consequently, we define the model class as a composition of function classes. Specifically, the inverse map from $z_k$ to $x_k$ is given by $T^\dagger\left(z_k \ \big| \ \vartheta \right)$, representing the composition of the $T^\dagger$-network with a function whose input depends on the $\omega$-network. We now formally define the model classes for both neural networks:
\begin{equation*}
\begin{aligned}
    \mathcal{H}_{\omega} &= \left\{ \omega(\cdot | \theta): z \mapsto  W_{D_\omega} \left( \sigma\left( W_{{D_\omega}-1} \left( \cdots \sigma\left( W_1 z + b_1 \right) \cdots \right) + b_{{D_\omega}-1} \right) \right) + b_{D_\omega}\ \big| \ \theta = \left(W_1, b_1, \dots, W_{D_\omega}, b_{D_\omega}\right) \right\},\\  
\mathcal{H}_{T^\dagger}& = \left\{ T^\dagger(\cdot | \vartheta): z \mapsto  W_{D_{T^\dagger}} \left( \sigma\left( W_{{D_{T^\dagger}}-1} \left( \cdots \sigma\left( W_1 z + b_1 \right) \cdots \right) + b_{{D_{T^\dagger}}-1} \right) \right) + b_{D_{T^\dagger}}\ \big| ~\vartheta = \left(W_1, b_1, \dots, W_{D_{T^\dagger}}, b_{D_{T^\dagger}}\right) \right\}.
\end{aligned}
\end{equation*} 
Let $\mathcal{F}$ be the overall function class defined as:
    \begin{equation*}
\mathcal{F} = \left\{ f: \left(x_0,u_0,\dots,u_{N-1}\right) \mapsto T^\dagger\left(  z_k  \ \big| \ \vartheta \right) \ \middle| \ \omega \in \mathcal{H}_{\omega},\ T^\dagger \in \mathcal{H}_{T^\dagger} \right\}.
\end{equation*}

To derive the generalization bound, we rely on Rademacher complexity theory \citep{shalev2014understanding}, where we first define the empirical Rademacher complexity with respect to each $i$-th sample:
\begin{equation*}
\text{Rad}(\mathcal{F}) = \mathbb{E}_\varepsilon \left[\sup_{f\in\mathcal{F}}\frac{1}{M}\sum_{i=1}^{M}\varepsilon_i f\left(x_0^{(i)},u_0^{(i)},\dots,u_{N-1}^{(i)}\right) \right],
\end{equation*}
where $\varepsilon_i$ are independent Rademacher random variables (i.e., $\varepsilon_i\in{-1,+1}$ with equal probability). These variables serve to ``randomly flip'' the contributions of each data point in the empirical loss, allowing us to quantify the complexity of the model classes $\mathcal{H}_\omega,~\mathcal{H}_{T^\dagger}$. However, our generalization bound requires the Rademacher complexity of the loss-composed class $\xi \circ \mathcal{F}$ where $\xi = |\Upsilon|^2$. Therefore, we rely on the following fact.

\begin{fact}[\cite{ledoux2013probability}\label{Fact:Contract}, Corollary 3.17]
Given a $L_\xi$-Lipschitz mapping $\xi:\R\to\R$, the Rademacher complexity of the composed class satisfies:
$$\text{Rad}(\xi\circ\mathcal{F})\leq2L_\xi ~\text{Rad}(\mathcal{F}).$$
\end{fact}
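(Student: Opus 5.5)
The plan is to prove the statement by the classical Ledoux--Talagrand comparison argument, removing the Lipschitz map $\xi$ from one sample at a time. Fix the sample points $s_i := \left(x_0^{(i)},u_0^{(i)},\dots,u_{N-1}^{(i)}\right)$, $i=1,\dots,M$, and write $\text{Rad}(\xi\circ\mathcal{F}) = \frac{1}{M}\mathbb{E}_\varepsilon\left[\sup_{f\in\mathcal{F}}\sum_{i=1}^M \varepsilon_i\,\xi\left(f(s_i)\right)\right]$. I will show by induction on $m=0,1,\dots,M$ that replacing $\xi(f(s_i))$ by $L_\xi f(s_i)$ for $i\le m$ never decreases the expected supremum. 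At $m=M$ this gives $\text{Rad}(\xi\circ\mathcal{F}) \le L_\xi\,\text{Rad}(\mathcal{F})$.

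For the inductive step, condition on $\varepsilon_j$ for all $j\neq m$ and collect the remaining terms into $R(f)$. Averaging over $\varepsilon_m\in\{-1,+1\}$ gives
\begin{equation*}
\mathbb{E}_{\varepsilon_m}\sup_{f}\left[\varepsilon_m\xi(f(s_m)) + R(f)\right]
= \tfrac12\sup_{f,f'}\left[\xi(f(s_m))-\xi(f'(s_m)) + R(f)+R(f')\right].
\end{equation*}
By the Lipschitz property, $\xi(f(s_m))-\xi(f'(s_m))$ is at most $L_\xi|f(s_m)-f'(s_m)|$. The expression $R(f)+R(f')$ is symmetric under swapping $f$ and $f'$, so the absolute value can be dropped without changing the supremum. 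Reading the identity above backwards then yields
\begin{equation*}
\tfrac12\sup_{f,f'}\left[L_\xi f(s_m)-L_\xi f'(s_m)+R(f)+R(f')\right] = \mathbb{E}_{\varepsilon_m}\sup_f\left[\varepsilon_m L_\xi f(s_m)+R(f)\right].
\end{equation*}
Taking the expectation over the remaining $\varepsilon_j$ completes the step. Measurability and attainment of the suprema are not an issue because Assumption~\ref{Assum:Bias+Weight_Boundness} makes $\mathcal{H}_\omega$ and $\mathcal{H}_{T^\dagger}$ compactly parameterized with continuous dependence on the parameters. If needed, one can also work with $\epsilon$-approximate maximizers.

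The resulting bound $L_\xi\,\text{Rad}(\mathcal{F})$ is stronger than $2L_\xi\,\text{Rad}(\mathcal{F})$. The factor $2$ in Corollary~3.17 of \cite{ledoux2013probability} comes from their version with $\sup_f|\cdot|$ and the normalization $\xi(0)=0$. To cover that form, I would first subtract the constant $\xi(0)$, which contributes nothing in expectation without the absolute value. I would then split $|\cdot|$ into its positive and negative parts, each controlled by the argument above, which produces the factor $2$.

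The main obstacle is applying this to the paper's loss $\xi=\|\cdot\|^2$ acting on the vector-valued $\Upsilon$. The scalar statement has two gaps here. First, $\|\cdot\|^2$ is only Lipschitz on bounded sets. I would obtain $L_\xi = 2\sup\|\Upsilon\|$ from Lemma~\ref{Lemma:Z_Bound}, the compactness of $\mathcal{X}$ in Assumption~\ref{Assum:Solvability}, and the boundedness of $T^\dagger$. Second, the contraction has to be applied to a vector-valued class. I would handle this either coordinatewise, at the cost of a factor $\sqrt{N n_x}$, or through the vector contraction inequality, and I would state whichever is used explicitly when invoking the fact later.
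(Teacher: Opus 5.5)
Your peeling argument is correct. It is the standard one-sample-at-a-time proof of the contraction lemma in Section 26 of \cite{shalev2014understanding}, which is what the paper relies on: the paper does not prove this fact but cites \cite{ledoux2013probability} and refers readers to that section.

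The paper defines $\text{Rad}$ without an absolute value, and $\text{Rad}(\mathcal{F})\ge 0$ by Jensen's inequality. So your sharper constant $L_\xi$ does imply the stated $2L_\xi$ bound. Your remarks on the absolute-value variant and on the vector-valued loss $\|\cdot\|^2$ concern the paper's later use of the fact, not this statement.
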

\noindent Readers are referred to Section 26 in \cite{shalev2014understanding}. In our case, we consider $\xi:\Upsilon\mapsto|\Upsilon|^2$, which is $2\Upsilon^*$-Lipschitz, where $\Upsilon^*$ is an upper bound on the magnitude of $\Upsilon$ due to the boundness of all its arguments and mappings in its definition in \eqref{eq:Upsilon}. Hence
$$\text{Rad}(\xi\circ\mathcal{F})\leq2\Upsilon^*L_\Upsilon ~\text{Rad}(\mathcal{F}),$$
where $L_\Upsilon$ was established in Lemma~\ref{Lemma_Lipschitz}.
\begin{theorem}
    
Under the prior made assumptions, there exists a constant $c>0$, such that with probability of at least $1-\delta$,
    \begin{equation}
\left| \mathbb{E}_{\mathscr{D}}\left[|\Upsilon|^2\right] - \frac{1}{M}\sum_{i=1}^M \left|\Upsilon^{(i)}\right|^2 \right| \leq 2~\text{Rad}(\mathcal{\xi \circ F}) +c\sqrt{\frac{2\log(2/\delta)}{MN}}.
\end{equation}
\end{theorem}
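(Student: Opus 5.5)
The plan is the standard Rademacher-complexity generalization argument: symmetrization for the expectation of the uniform deviation, and a McDiarmid-type concentration for its fluctuation. Two ingredients do the work. The first is that the loss $\xi(\Upsilon)=|\Upsilon|^2$ is uniformly bounded by $(\Upsilon^*)^2$ over the whole class, which is what Lemma~\ref{Lemma:Z_Bound} and the bounded weights of Assumption~\ref{Assum:Bias+Weight_Boundness} provide. The second is that the $M$ trajectories are i.i.d.\ draws from $\mathscr{D}$ (independent $x_0^{(i)}$ and input sequences, Assumption~\ref{Assum:boundedness_x0_u}).

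\textbf{Step 1 (reduction to a supremum).} Define
\[
\Phi(S)=\sup_{f\in\mathcal{F}}\Big(\mathbb{E}_{\mathscr{D}}[\xi\circ f]-\tfrac{1}{M}\sum_{i=1}^M \xi\circ f(s_i)\Big),
\]
where $s_i=(x_0^{(i)},u_0^{(i)},\dots,u_{N-1}^{(i)})$. The learned $(\theta,\vartheta)$ depend on the data, so it suffices to bound $\Phi(S)$ and its mirror image $\Phi^-(S)$, in which the two terms are swapped.

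\textbf{Step 2 (symmetrization).} Introduce a ghost sample $S'$ with the same distribution and insert Rademacher signs $\varepsilon_i$. This gives
\[
\mathbb{E}_S\Phi(S)\le 2\,\mathbb{E}_S\,\mathrm{Rad}(\xi\circ\mathcal{F}).
\]
The contraction bound (Fact~\ref{Fact:Contract}, with $L_\xi=2\Upsilon^*$ and Lemma~\ref{Lemma_Lipschitz}) is needed only if one wants to expand $\mathrm{Rad}(\xi\circ\mathcal{F})$ afterwards; the theorem is stated directly in terms of $\mathrm{Rad}(\xi\circ\mathcal{F})$. The same bound holds for $\Phi^-$.

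\textbf{Step 3 (concentration).} Replacing one sample $s_i$ by $s_i'$ changes $\Phi$ by at most $(\Upsilon^*)^2/M$, because $|\Upsilon|^2\in[0,(\Upsilon^*)^2]$. McDiarmid's bounded-differences inequality then gives, with probability at least $1-\delta/2$,
\[
\Phi(S)\le\mathbb{E}\Phi+(\Upsilon^*)^2\sqrt{\log(2/\delta)/(2M)},
\]
and likewise for $\Phi^-$. A union bound over the two one-sided events yields the absolute-value statement with probability at least $1-\delta$. Strictly, this step also requires the second, empirical-to-expected Rademacher concentration, which can be folded into the constant by another McDiarmid application at the cost of enlarging $c$.

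\textbf{Step 4 (the $1/\sqrt{N}$ factor).} This is the main obstacle: the statement has $\sqrt{MN}$, while the plain argument delivers only $\sqrt{M}$. I would exploit the normalization $\Upsilon=N^{-1/2}[\cdots]$. Each block entry is bounded by $\sqrt{2}\,\max(\sqrt{n_x}X_*',\,L_{T^\dagger}Z_*+\|T^\dagger(0)\|)$, and $|\Upsilon|^2$ is an average of $N$ bounded per-time terms. I would first try to argue the $1/N$ reduction in the bounded-difference constant directly. If that fails, the fallback is to absorb $\sqrt{N}$ into the constant, setting
\[
c=\sqrt{N}(\Upsilon^*)^2/2,
\]
which is legitimate because the theorem only asserts the existence of some $c>0$ for the given fixed horizon $N$. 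With this choice the displayed form $c\sqrt{2\log(2/\delta)/(MN)}$ is exactly what Steps 1–3 produce.
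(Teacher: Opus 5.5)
Your argument is correct for the statement as literally written. Its core is symmetrization for $\mathbb{E}\Phi$, McDiarmid for the fluctuation, and a union bound over the two one-sided deviations. That is exactly the standard bound from Section 26 of Shalev-Shwartz and Ben-David, which the paper invokes instead of giving a proof; the displayed inequality is that result with the sample size set to $MN$.

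The genuine difference is your Step 4. The paper tacitly treats the $MN$ state observations as i.i.d., and this argument does not support that. Only the $M$ trajectories are independent, and swapping one trajectory can change $|\Upsilon|^2$ by order $(\Upsilon^*)^2$ however large $N$ is. So your ``first try'' at an $O(1/N)$ bounded-difference constant for whole-trajectory swaps cannot succeed.

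Your fallback $c=\sqrt{N}(\Upsilon^*)^2/2$ is legitimate, since $N$ is fixed and $c$ is only asserted to exist. The arithmetic checks out: $c\sqrt{2\log(2/\delta)/(MN)}=(\Upsilon^*)^2\sqrt{\log(2/\delta)/(2M)}$. Folding in the empirical-to-expected Rademacher step is also fine, because $\log(k/\delta)\le(1+\log_2(k/2))\log(2/\delta)$ for $\delta\in(0,1)$, so $c$ stays free of $\delta$. The cost is that the $\sqrt{MN}$ becomes cosmetic. It no longer backs the paper's later claim that the gap shrinks as trajectory length grows.

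To get a $c$ independent of $N$, apply McDiarmid to the individual coordinates $x_0^{(i)},u_0^{(i)},\dots,u_{N-1}^{(i)}$ instead of whole trajectories. This requires the inputs to be independent across time steps, which the paper never states. Then use the geometric decay in Lemma~\ref{Lemma_Lipschitz}, with $L_{\beta,t_s},L_{\beta_z,t_s}<1$ as its geometric-series bounds tacitly require, to show each coordinate moves $\Phi$ by $O(1/(MN))$. The squared differences over the $M(N+1)$ coordinates then sum to $O(1/(MN))$, while symmetrization still yields the $M$-sample Rademacher term of the statement. That is the natural use of Lemma~\ref{Lemma_Lipschitz}, which the paper instead feeds into the contraction step.

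Two harmless slips remain. The per-entry bound should be $2\max(\cdot,\cdot)$ rather than $\sqrt{2}\max(\cdot,\cdot)$. Independence across trajectories comes from the paper's sampling setup, not from Assumption~\ref{Assum:boundedness_x0_u}, which concerns independence within one sample.
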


\noindent Then by applying Fact \ref{Fact:Contract} we obtain
\begin{equation}
    \left| \mathbb{E}_{\mathscr{D}}\left[|\Upsilon|^2\right] - \frac{1}{M}\sum_{i=1}^M \left|\Upsilon^{(i)}\right|^2 \right| \leq 4\Upsilon^*L_\Upsilon~\text{Rad}(\mathcal{ F}) +c\sqrt{\frac{2\log(2/\delta)}{MN}},
\end{equation}

To complete the bound, we analyze $\text{Rad}(\mathcal{F})$. Since $\mathcal{F}$ consists of neural networks with bounded weights/biases and inputs, the Rademacher complexity can be expressed as in the following Lemma. 
\begin{lemma}\label{Lem:Radmacher_Complex}
        Let $\mathcal{F}$ be the class of neural networks described above, with weight matrices and bias vectors bounded by $C_{W,\omega}, C_{b,\omega}, C_{W,T^\dagger}, C_{b,T^\dagger}$, and evaluated on a dataset of size $MN$. Then the empirical Rademacher complexity of $\mathcal{F}$ satisfies
    \begin{equation}
        \textnormal{Rad}(\mathcal{F}) \leq \frac{C_{W,\omega} C_{b,\omega} C_{W,T^\dagger} C_{b,T^\dagger}}{\sqrt{MN}}.
    \end{equation}
\end{lemma}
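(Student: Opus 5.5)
The approach is the standard layer-peeling argument for bounded-norm feedforward networks (in the spirit of Golowich--Rakhlin--Shamir and the treatment in Section 26 of \cite{shalev2014understanding}), applied first to the $T^\dagger$-network and then transferred through the $\omega$-network via the observer recursion \eqref{eq:KKL_discrete}. Throughout, each pair $(i,k)$ is treated as one evaluation point, so the empirical average runs over $MN$ points.

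\textbf{Step 1: inputs to $T^\dagger$ are bounded.} By Lemma~\ref{Lemma:Z_Bound}, every evaluation point satisfies $\|z_k^{(i)}\|\le Z_*$, uniformly over all parameters $\theta$ allowed by Assumption~\ref{Assum:Bias+Weight_Boundness}. The composition with the $\omega$-network therefore only affects which points $z_k^{(i)}$ occur. It does not affect their norm bound.

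\textbf{Step 2: peel off the $T^\dagger$ layers.} Take the supremum over $\omega\in\mathcal{H}_\omega$ outside, so that the inner quantity is a Rademacher average of $T^\dagger$-networks on bounded inputs. Peel the layers from the output inward:
\begin{itemize}
\item Apply Cauchy--Schwarz with $\|W_\ell\|_{\mathrm F}\le C_{W,T^\dagger}$.
\item Remove $\sigma=\tanh$ with the contraction principle (Fact~\ref{Fact:Contract}), since $\tanh$ is $1$-Lipschitz with $\tanh(0)=0$.
\item Handle each bias through $\mathbb{E}_\varepsilon\bigl|\sum\varepsilon_{ik}\bigr|\le\sqrt{MN}$, which contributes a term proportional to $C_{b,T^\dagger}$.
\item At the input layer use $\mathbb{E}\bigl\|\sum\varepsilon_{ik}z_k^{(i)}\bigr\|\le Z_*\sqrt{MN}$.
\end{itemize}
Dividing by $MN$ gives a bound of order $\mathrm{poly}(C_{W,T^\dagger},C_{b,T^\dagger},Z_*)/\sqrt{MN}$.

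\textbf{Step 3: control the dependence on $\omega$.} The points $z_k^{(i)}$ depend on $\theta$, but they are Lipschitz in $\omega$ along the recursion, with gain controlled by $t_sU_*$ and the Schur stability of $I+t_sA$. As a result, the enlargement of the class caused by varying $\omega$ can be bounded by the $\omega$-network's own peeled bound, with constants $C_{W,\omega},C_{b,\omega}$, times that Lipschitz factor. This produces the multiplicative constant structure $C_{W,\omega}C_{b,\omega}C_{W,T^\dagger}C_{b,T^\dagger}$.

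\textbf{Step 4: normalize.} Absorb $Z_*$, $U_*$, depth-dependent powers, and $\sqrt{2\log 2\,D}$-type factors into the constants, assuming the bounds are normalized so that those quantities are $\ge1$. This is needed to reach the stated clean product form.

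\textbf{Main obstacle.} Step~3 is the hard part. Rademacher complexity does not compose directly through a trajectory-dependent recursion, and the supremum over $\theta$ couples all time steps. I would first try a covering-number argument for $\mathcal{H}_\omega$ combined with Dudley's entropy integral, and treat the $T^\dagger$ layer conditionally. If that fails, I would settle for the cruder route of bounding the Rademacher complexity of the joint class by the product of per-network Lipschitz and norm bounds. That route suffices up to constants, which is what the lemma's form suggests.
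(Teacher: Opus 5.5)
The paper states this lemma without proof, so there is no argument of the authors' to compare yours with. Your plan cannot close, and the fatal step is Step~4. The right-hand side has no free constant, and the $C$'s are the bounds that define the class. So $Z_*$, $U_*$, depth and width factors, and the recursion gain cannot be ``absorbed'' into them. Normalizing those quantities to be $\ge 1$ goes the wrong way: it only enlarges what would have to be absorbed.

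More fundamentally, peeling produces the biases additively. For fixed evaluation points you get roughly $\frac{c_D}{\sqrt{MN}}\bigl(Z_*C_{W,T^\dagger}^{D_{T^\dagger}}+C_{b,T^\dagger}\sum_{\ell=0}^{D_{T^\dagger}-1}C_{W,T^\dagger}^{\ell}\bigr)$. Here $C_{b,T^\dagger}$ never multiplies the bias-free term, and the $\omega$-constants cannot multiply it either.

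In fact the displayed inequality is false. Fix $\theta=0$; this is admissible because Assumption~\ref{Assum:Bias+Weight_Boundness} only imposes upper bounds. Then $\omega\equiv0$, and the evaluation points $\bar z_k^{(i)}$ are the autonomous rollouts. Let $f_0\in\mathcal{H}_{T^\dagger}$ be any bias-free network with a single nonzero output coordinate; $-f_0$ is in the class as well. By Khintchine's inequality,
\[
\mathrm{Rad}(\mathcal{F})\;\ge\;\frac{1}{MN}\,\mathbb{E}_\varepsilon\Bigl|\sum_{i,k}\varepsilon_{ik}f_0\bigl(\bar z_k^{(i)}\bigr)\Bigr|\;\ge\;\frac{1}{\sqrt{2}\,MN}\Bigl(\sum_{i,k}f_0\bigl(\bar z_k^{(i)}\bigr)^2\Bigr)^{1/2}.
\]
This is of order $1/\sqrt{MN}$ as soon as $f_0$ stays away from $0$ on the rollouts. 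It involves none of $C_{W,\omega}$, $C_{b,\omega}$, $C_{b,T^\dagger}$, whereas the claimed bound is linear in each of them and tends to $0$. So the statement as written cannot be proved. The most you can hope for is an additive bound with explicit $Z_*$, architecture-dependent and $\omega$-dependent terms.

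Steps 2--3 also have concrete defects.
\begin{itemize}
\item ``Taking the supremum over $\omega$ outside'' reverses an inequality, since $\mathbb{E}_\varepsilon\sup_{\theta,\vartheta}\ge\sup_\theta\mathbb{E}_\varepsilon\sup_\vartheta$.
\item Done correctly, with $\theta$ kept in the index set, peeling the $T^\dagger$ layers leaves $\mathbb{E}_\varepsilon\sup_\theta\bigl\|\sum_{i,k}\varepsilon_{ik}z_k^{(i)}(\theta)\bigr\|$ at the bottom. The bound $\|z\|\le Z_*$ alone only controls this by $MN\,Z_*$. Your estimate $Z_*\sqrt{MN}$ holds only for points that do not move with the supremum variable.
\item Bounding that term is the whole difficulty. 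The contraction principle (Fact~\ref{Fact:Contract}) applies only to a fixed Lipschitz outer map, so ``Lipschitz factor times $\omega$'s peeled bound'' is not an available rule. You need an induction over $k$ through \eqref{eq:KKL_discrete}, or a covering/chaining argument.
\item To keep the constant independent of $N$, either route needs a contraction condition the assumptions do not supply. The gain of $\omega\mapsto z_k$ is $t_sU_*\sum_{j<k}\rho^j$ with $\rho=\|I+t_sA\|+t_sL_\omega U_*$. Schur stability of $I+t_sA$ does not give $\rho<1$. You would need something like $L_\omega U_*<\min_i\lambda_i$ when $A=\mathrm{diag}(-\lambda_i)$ and $t_s$ is small; otherwise the gain grows geometrically in $N$.
\item The Frobenius-norm peeling of Golowich--Rakhlin--Shamir relies on positive homogeneity of the activation, which $\tanh$ lacks. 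Expect width-dependent factors as well.
\end{itemize}

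Finally, using $MN$ independent signs $\varepsilon_{ik}$ departs from the paper's definition of $\mathrm{Rad}(\mathcal{F})$, which uses one sign per trajectory. That per-trajectory version is what symmetrization over the $M$ independent trajectories actually produces. For it, the same $\pm f_0$ argument gives a lower bound of order $1/\sqrt{M}$ whenever the per-trajectory averages of $f_0$ stay away from $0$, so $N$ cannot appear in the denominator.
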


\begin{corollary}\label{FinalBound}
    Under all assumptions mentioned above, with confidence at least \(1 - \textcolor{black}{\delta}\) for any $\textcolor{black}{\delta}\in(0,1)$:
\begin{equation}
    \begin{aligned}
        \left|\hat{\mathcal{R}}_\infty -\hat{\mathcal{R}}_M  \right| \leq \frac{4\Upsilon^*L_\Upsilon C_{W,\omega}C_{b,\omega}C_{W,T^\dagger}\ C_{b,T^\dagger}}{\sqrt{MN}} +c\sqrt{\frac{2\log(2/\delta)}{MN}},
    \end{aligned}
\end{equation}
where the Lipschitz constant $L_\Upsilon$ is given in Lemma \ref{Lemma_Lipschitz}. 
\end{corollary}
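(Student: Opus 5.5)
The corollary follows by chaining three ingredients already in place: the two-sided concentration bound of the Theorem, the contraction step via Fact~\ref{Fact:Contract}, and the complexity estimate of Lemma~\ref{Lem:Radmacher_Complex}.

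The first step is to identify the two risks. By the definition of $\Upsilon$ in \eqref{eq:Upsilon}, $\frac{1}{M}\sum_{i=1}^M|\Upsilon^{(i)}|^2$ is exactly the per-sample expression of $\hat{\mathcal{R}}_M$. Likewise, $\mathbb{E}_{\mathscr{D}}[|\Upsilon|^2]$ coincides with $\hat{\mathcal{R}}_\infty$: both are defined with the simulated (Euler-discretized) observer states $z_k$ rather than $z^\star(kt_s)$, so no discretization term $\varepsilon_k$ enters. I will be careful that the index shift in \eqref{eq:Upsilon} (pairing $x_{k+1}$ with $z_k$) is the same convention used on both sides. With this identification, the left-hand side of the corollary is exactly the quantity controlled by the Theorem.

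The second step applies the Theorem with confidence $1-\delta$. Its hypotheses are covered as follows:
\begin{itemize}
\item the $\Upsilon^{(i)}$ are i.i.d.\ realizations, by the independence of the sampled $(x_0^{(i)},u^{(i)})$;
\item $|\Upsilon|^2$ is bounded by $(\Upsilon^*)^2$, using Lemmas~\ref{Lemma:bounded_outputs} and \ref{Lemma:Z_Bound} together with Assumption~\ref{Assum:Bias+Weight_Boundness};
\item $\Upsilon$ is $L_\Upsilon$-Lipschitz in the data, by Lemma~\ref{Lemma_Lipschitz}.
\end{itemize}
This yields the bound $2\,\mathrm{Rad}(\xi\circ\mathcal{F}) + c\sqrt{2\log(2/\delta)/(MN)}$. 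Next I apply Fact~\ref{Fact:Contract} with $\xi(\Upsilon)=|\Upsilon|^2$, which is $2\Upsilon^*$-Lipschitz on the bounded range, and compose with the $L_\Upsilon$-Lipschitz dependence. This gives $\mathrm{Rad}(\xi\circ\mathcal{F})\le 2\Upsilon^* L_\Upsilon\,\mathrm{Rad}(\mathcal{F})$, and hence the intermediate display with the factor $4\Upsilon^*L_\Upsilon$.

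The third step substitutes Lemma~\ref{Lem:Radmacher_Complex}, namely $\mathrm{Rad}(\mathcal{F})\le C_{W,\omega}C_{b,\omega}C_{W,T^\dagger}C_{b,T^\dagger}/\sqrt{MN}$. This yields the first term of the corollary, while the concentration term is carried over unchanged. Since $\delta\in(0,1)$ was arbitrary, the statement holds for every such $\delta$.

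The main obstacle is not the algebra but the legitimacy of the contraction step. Fact~\ref{Fact:Contract} is stated for scalar $\xi:\mathbb{R}\to\mathbb{R}$, whereas $\Upsilon$ is vector-valued. I would first try to handle this by writing $|\Upsilon|^2=\sum_k|\Upsilon_k|^2$ and applying contraction coordinatewise, or by invoking a vector-contraction inequality, absorbing any dimension factor into the constant $c$ or into $\Upsilon^*$. I would also check that the uniform bound $\Upsilon^*$ and the constant $L_\Upsilon$ do not depend on the particular $(\theta,\vartheta)$ in the class. This holds because Assumption~\ref{Assum:Bias+Weight_Boundness} bounds all admissible networks uniformly, so $L_{T^\dagger}$, $M_\omega$ and $Z_*$ are class-wide constants, which is what makes the supremum over $\mathcal{F}$ in the Rademacher complexity harmless.
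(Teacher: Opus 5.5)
Your proposal is essentially the paper's own argument: once $\mathbb{E}[|\Upsilon|^2]$ and $\frac{1}{M}\sum_{i}|\Upsilon^{(i)}|^2$ are identified with $\hat{\mathcal{R}}_\infty$ and $\hat{\mathcal{R}}_M$, the paper simply chains three results, exactly as you do: the concentration Theorem, the contraction step $\mathrm{Rad}(\xi\circ\mathcal{F})\le 2\Upsilon^*L_\Upsilon\,\mathrm{Rad}(\mathcal{F})$ taken from Fact~\ref{Fact:Contract}, and Lemma~\ref{Lem:Radmacher_Complex}. The caveats you raise (the scalar Fact versus vector-valued $\Upsilon$, the index shift, class-uniform constants) are points the paper leaves implicit, and there is one more of the same kind: Fact~\ref{Fact:Contract} applied literally to the $2\Upsilon^*$-Lipschitz $\xi$ gives $4\Upsilon^*\,\mathrm{Rad}(\mathcal{F})$ with no factor $L_\Upsilon$ (contraction only involves function values at the fixed sample points, not how $\Upsilon$ varies with $(x_0,u_0,\dots,u_{N-1})$), so your ``compose with the $L_\Upsilon$-Lipschitz dependence'' reproduces the paper's bookkeeping rather than deriving it.
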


\noindent As expected, the difference between empirical risk and the actual risk over the entire population scales as $\mathcal{O}\left(1/\sqrt{MN}\right)$, where $MN$ represents the total number of state observations across all trajectories. \textcolor{black}{This scaling explicitly characterizes the finite-sample behavior of the learning procedure and shows that the proposed observer is statistically consistent: as either the number of trajectories or the trajectory length increases, the empirical risk converges to the expected risk.}

\subsection{Forward Euler discretization error}\label{Sec: Trunc_error}
In our setting, the input $u$ is designed as a staircase (piecewise-constant) signal, while the output $y=h(x)$ is smooth. As a result, the right-hand side of \eqref{eq:KKL-observer-NonAuto} is only piecewise smooth in time. A first-order (forward Euler) discretization across these pieces introduces a local truncation error at each step, which we collect in the term $\varepsilon_k$. Consider the discrete KKL observer from \eqref{eq:KKL_discrete} where $\varepsilon_k$ denotes the one–step truncation error of the forward Euler scheme applied to the continuous observer. Let $z^\star(t)$ denote the exact continuous-time observer trajectory satisfying \eqref{eq:KKL-observer-NonAuto} with the same initial condition, and write $z^\star(k t_s)$ for its sample at time $k t_s$. Under the standing Lipschitz assumptions (bounded $u_k,y_k$, Lipschitz $\omega$, and Hurwitz $ A$), it is a well-known fact from numerical ODE theory that:
\begin{equation}\label{eq:global_err_bound_simple}
   \max_{0\le k\le N}\,\left\|\varepsilon_k\right\|=\max_{0\le k\le N}\,\big\|z_k - z^\star\left(k t_s\right)\big\|\leq L_{\mathrm{disc}}\, t_s.  
\end{equation}

\noindent for a constant $L_{\mathrm{disc}}>0$ that depends only on boundedness and smoothness on the compact domain (not on $M,N$).
Consequently, at each step the map $T^\dagger(\cdot\mid\vartheta)$ incurs at most
\begin{equation}\label{eq:decode_err_FE}
\big\|T^\dagger\left(\,\cdot+\varepsilon_k\,\mid\vartheta\right)-T^\dagger(\,\cdot\,\mid\vartheta)\big\|
\;\le\;
L_{T^\dagger}\,L_{\mathrm{disc}}\,t_s, 
\end{equation}
namely, $\big\|T^\dagger\left(z_k\mid\vartheta\right)-T^\dagger\left(z^\star\left( kt_s\right)\mid\vartheta\right)\big\|
\;\le\;
L_{T^\dagger}\,L_{\mathrm{disc}}\,t_s$. Thus, the forward Euler approximation contributes a small,  additive bias (per step over a fixed horizon $O\left(t_s\right)$), which can be made arbitrarily small by choosing $t_s$ sufficiently small. 

Returning to the definition of risk $\mathcal{R}$, we see that $\left\|\mathcal{R}-\hat{\mathcal{R}}_\infty \right\|\leq L_{T^\dagger}\,L_{\mathrm{disc}}\,t_s.$ Combining this discretization error with the statistical bound from Corollary~\ref{FinalBound} yields the following overall risk estimate.

\begin{corollary}\label{Col:true_Final_Bound}
    Under all assumptions stated above, with confidence at least \(1 - \delta\) for any \(\delta \in (0,1)\):
    \begin{equation}\label{FinalTrueBound}
            \left|\mathcal{R} - \hat{\mathcal{R}}_M \right|
            \leq  \frac{4\Upsilon^* L_\Upsilon\, C_{W,\omega} C_{b,\omega} C_{W,T^\dagger} C_{b,T^\dagger}}{\sqrt{MN}}
            + c\sqrt{\frac{2\log(2/\delta)}{MN}}+L_{T^\dagger} L_{\mathrm{disc}}\, t_s,
    \end{equation}
    where the Lipschitz constant \(L_\Upsilon\) is given in Lemma~\ref{Lemma_Lipschitz}.
\end{corollary}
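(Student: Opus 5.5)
The plan is to split the quantity with the triangle inequality through the intermediate risk $\hat{\mathcal{R}}_\infty$:
\[
\left|\mathcal{R}-\hat{\mathcal{R}}_M\right| \le \left|\mathcal{R}-\hat{\mathcal{R}}_\infty\right| + \left|\hat{\mathcal{R}}_\infty-\hat{\mathcal{R}}_M\right| .
\]
The second (statistical) term is handled entirely by Corollary~\ref{FinalBound}. On an event of probability at least $1-\delta$ it is bounded by
\[
\frac{4\Upsilon^*L_\Upsilon C_{W,\omega}C_{b,\omega}C_{W,T^\dagger}C_{b,T^\dagger}}{\sqrt{MN}}+c\sqrt{\frac{2\log(2/\delta)}{MN}}.
\]
The first term is deterministic, so no further union bound is needed and the confidence level stays $1-\delta$.

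For the deterministic term, I would compare the two risks sample by sample and step by step. For fixed $(x_0,u_0,\dots,u_{N-1})$ and each $k$, set $a_k=x_k-T^\dagger(z_k\mid\vartheta)$ and $b_k=x_k-T^\dagger(z^\star(kt_s)\mid\vartheta)$. Then $\|a_k-b_k\|\le L_{T^\dagger}\|\varepsilon_k\|\le L_{T^\dagger}L_{\mathrm{disc}}t_s$ by \eqref{eq:global_err_bound_simple} and \eqref{eq:decode_err_FE}. These use the Lipschitz property of $T^\dagger$ on the compact set from Lemma~\ref{Lemma:Z_Bound}, which must also contain $z^\star$; that holds by the continuous-time analogue of the same Lyapunov argument.

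Since the risks involve squared norms, I would next use $\bigl|\|a_k\|^2-\|b_k\|^2\bigr|\le(\|a_k\|+\|b_k\|)\,\|a_k-b_k\|$. Both $\|a_k\|$ and $\|b_k\|$ are uniformly bounded, because $x_k\in\mathcal{X}$ (compact) and $T^\dagger$ is bounded on the bounded $z$-region by Assumption~\ref{Assum:Bias+Weight_Boundness}. Averaging over $k$ and taking $\mathbb{E}_{\mathscr{D}}$ preserves a bound uniform in the realization. This gives $|\mathcal{R}-\hat{\mathcal{R}}_\infty|\le C\,L_{T^\dagger}L_{\mathrm{disc}}t_s$, where $C$ is a uniform bound on $\|a_k\|+\|b_k\|$.

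The main obstacle is reconciling this with the stated term $L_{T^\dagger}L_{\mathrm{disc}}t_s$, which carries no factor $C$. I would try two routes. First, normalize the constant $L_{\mathrm{disc}}$ so that it absorbs the uniform bound $C$; this is legitimate because $L_{\mathrm{disc}}$ is only asserted to depend on boundedness and smoothness on the compact domain. Second, read the step $\|\mathcal{R}-\hat{\mathcal{R}}_\infty\|\le L_{T^\dagger}L_{\mathrm{disc}}t_s$, stated just before the corollary, as already established and simply invoke it. Either way, adding the two pieces yields \eqref{FinalTrueBound}. A secondary check is that the uniform bound $\|\varepsilon_k\|\le L_{\mathrm{disc}}t_s$ holds over the horizon $0\le k\le N$ without the constant growing in $N$. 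I would justify this by the Schur stability of $I+t_sA$ in Assumption~\ref{Assum:boundedness_x0_u}, which stops the local errors from accumulating, rather than by the usual finite-horizon Gronwall argument.
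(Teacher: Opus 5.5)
Your proposal is correct and follows the paper's own argument. The paper also splits $|\mathcal{R}-\hat{\mathcal{R}}_M|$ through $\hat{\mathcal{R}}_\infty$, uses Corollary~\ref{FinalBound} for the statistical part, and simply asserts $|\mathcal{R}-\hat{\mathcal{R}}_\infty|\le L_{T^\dagger}L_{\mathrm{disc}}t_s$. Your observation that the squared norms introduce an extra factor $C$ is therefore a valid correction, and replacing $L_{\mathrm{disc}}$ by $\max(1,C)\,L_{\mathrm{disc}}$ (which still satisfies \eqref{eq:global_err_bound_simple}) repairs it.

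One caveat on your side remark: Schur stability of $I+t_sA$ alone does not stop $\varepsilon_k$ from accumulating over $N$ steps. The error recursion also carries $t_s\bigl(\omega(z_k)-\omega(z^\star(kt_s))\bigr)u_k$, with gain up to $t_sL_\omega U_*$. So $N$-independence of $L_{\mathrm{disc}}$ needs the decay margin of $A$ to dominate $L_\omega U_*$; otherwise a Gronwall factor growing like $e^{L_\omega U_* N t_s}$ can appear. The corollary only uses \eqref{eq:global_err_bound_simple} as a premise, so its proof is unaffected.
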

\textcolor{black}{Corollary~\ref{Col:true_Final_Bound} extends the previous result to the true risk and reveals three distinct sources of estimation error. The first two terms correspond to finite-sample generalization effects and scale as $\mathcal{O}\left(1/\sqrt{MN}\right)$, quantifying how data availability reduces uncertainty in the learned observer. The final term, proportional to $t_s$, captures the discretization error introduced by numerical approximation of the observer dynamics. Although the bound is conservative due to worst-case constants, this decomposition highlights an explicit trade-off between data richness and numerical resolution, where increasing the number or length of trajectories improves statistical accuracy, while reducing the sampling time decreases approximation error.}

\section{Numerical case studies}\label{Sec:CaseStudies}
All the Python codes needed to reproduce the simulations and
figures are available at GitHub repository: \url{https://github.com/mwoelk236/NLOX}. 
\subsection{Bioreactor system}
We first consider the following constant volume bioreactor example \citep{gauthier1992simple}:
\begin{equation}\label{Bioreactor Model}
\begin{aligned}
    \dot{x}_1 &= \mu\left(x_1,x_2\right)x_1-ux_1,\\
    \dot{x}_2 &= -\mu\left(x_1,x_2\right)x_1 +u\left(0.1-x_2\right),\\
     y&=x_1,
\end{aligned}
\end{equation}
\noindent where $x_1$ and $x_2$ are the concentrations of the microorganism and substrate, respectively, and $u$ is a positively bounded input. The growth rate $\mu$ is given by the ``Contois'' model:
\begin{equation*}\label{Contois Model}
    \mu (x_1,x_2)=\frac{x_2}{x_1+x_2}.
\end{equation*}

\noindent  This system is simulated with a sampling time $t_s$ of $0.1$ seconds for a total of $100$ seconds. We sample $M=300$ initial points uniformly in \textcolor{black}{$\mathcal{X} = [0.05,0.1]\times[0.05,0.1]$}. The input signal is drawn from a log-normal distribution $u(t)\sim\text{log}\mathcal{N}(0.4,0.2)$, ensuring that $u(t)>0$ for all time, while providing sufficient persistent excitation. Figure \ref{fig:bioreactor_samples} shows the resulting trajectories after preprocessing, which involved splitting the data into a 70:30 training-test ratio, shifting into deviation variables, and applying min-max normalization. The observer states were initialized at zero $\left(z_0^{(i)}=0\right)$, and the observer matrices were defined as $A =\text{diag}(-3,-6)$ and $B = [1, 1]^\top$, giving an observer state dimension of $d_z=2$.

\begin{figure}[ht]
	\centering
	\includegraphics[width=10cm, height=6cm]{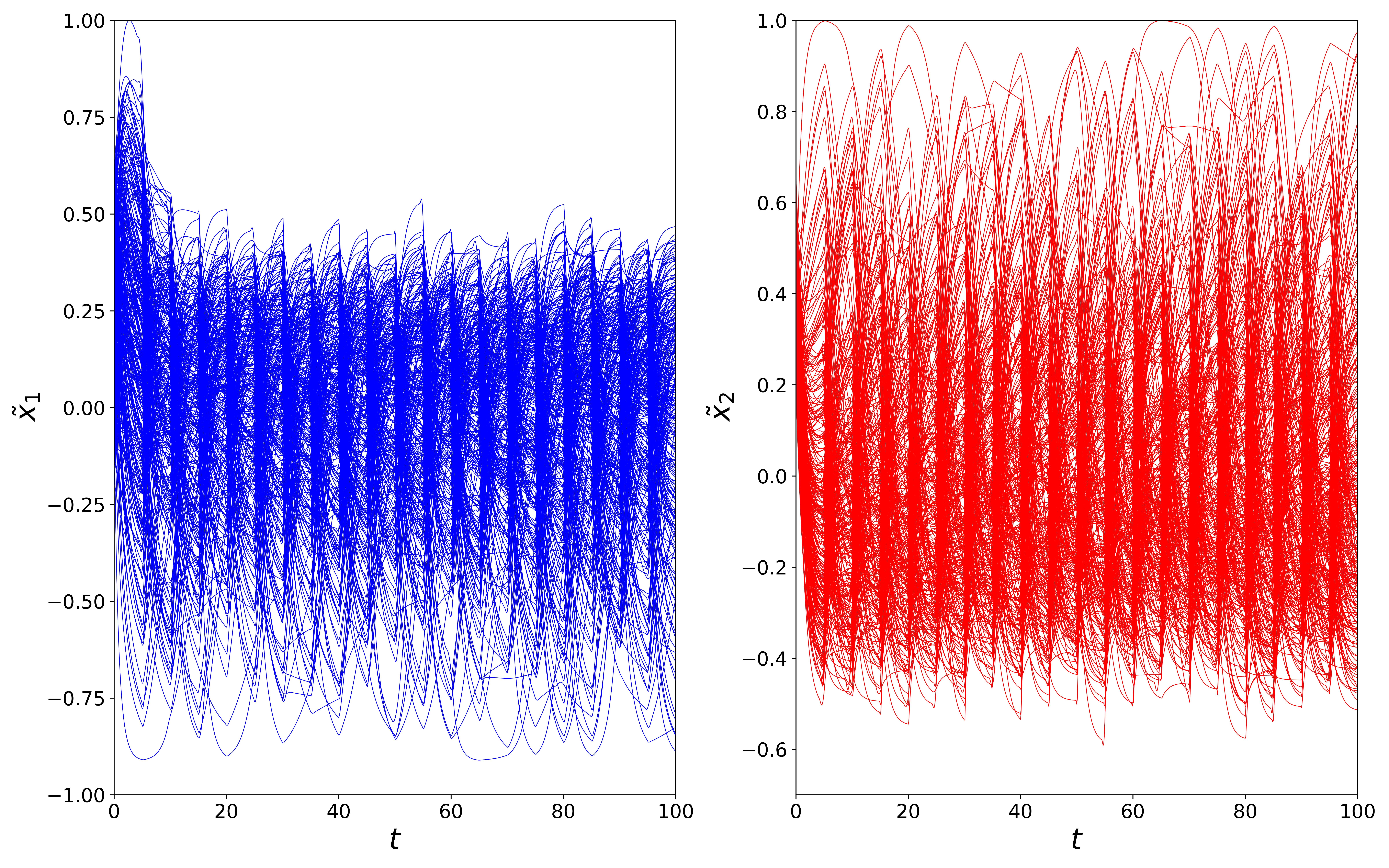}
	\caption{State trajectories for $\tilde{x}_1$ and $\tilde{x}_2$ for the bioreactor system.}\label{fig:bioreactor_samples}
\end{figure}
Following the initialization, the observed states are fed into the first feedforward MLP, $\omega(z|\theta)$, to estimate the drift dynamics. This \textcolor{black}{final network architecture used in the reported experiments} consists of 3 hidden layers ($D_\omega=4$ total layers) with 48 neurons each ($d_{\ell,\omega}=48$). The output of this network of dimensions 2 $(d_{\text{out},\omega}=2)$ is then used in \eqref{eq:KKL_discrete} to compute an updated observed state $z$ that incorporates the effect of the manipulated inputs. This updated state is subsequently passed to a second MLP, $T^\dagger(z|\vartheta)$, which is constructed with an identical architecture ($D_{T^\dagger}=4$, $d_{\ell,T^\dagger}=48$) to learn the inverse mapping to the original state $x$. The prediction error is calculated via the loss function \eqref{eq:NN_Loss}, and the parameters $(\theta, \vartheta)$ of both networks are updated via backpropagation with a learning rate $\eta=10^{-5}$. This training procedure is performed for $E=200$ epochs.

\textcolor{black}{This} final network architecture \textcolor{black}{was} determined through a grid search for the optimal network sizes, were both networks were constrained to be symmetric, $(\text{i.e.,}~D_{\omega}=D_{T^\dagger}=4,~~\text{and}~d_{\ell,\omega}=d_{\ell,T^\dagger})$. We evaluated performance by varying the number of neurons per layer from the set $\{16, 32, 48\}$; The results of this architecture search, illustrated in \textcolor{black}{Table} \ref{tab:grid_search_summary}, show that the observer's ability to predict the succeeding state improved as the number of neurons per layer increased. Furthermore, we investigated the impact of the amount of training data by training the model on $\{100, 200, 300\}$ trajectories. The results, shown in \textcolor{black}{Table} \ref{tab:grid_search_summary}, demonstrate that observer accuracy improves with more data; however, this comes at the cost of a longer training phase. \textcolor{black}{For the optimal configuration with 48 neurons per hidden layer, each network contained 4,946 trainable parameters (9,892 parameters in total across two networks). Training on 300 trajectories (approximately 300,000 samples) required 4,127.5 seconds on a single CPU core (using the high-performance computing cluster at NC State University).} The root mean squared error (RMSE) between the estimated test trajectory and the true state for these grid searches is summarized in Table~\ref{tab:grid_search_summary}.

\begin{table}[ht]
\centering
\caption{Summary of grid search results for (a) network size and (b) training data size for the bioreactor system.}
\label{tab:grid_search_summary}
\begin{minipage}{0.48\textwidth}
\centering
\subcaption{Architecture search (300 training trajectories)}
\label{tab:arch_search}
\begin{tabular}{lcc}
\toprule
\textbf{Neurons/Layer} $\left(d_{\ell,j}\right)$ & \multicolumn{2}{c}{\textbf{Test RMSE}} \\
 & $\mathbf{\tilde{x}_1}$ & $\mathbf{\tilde{x}_2}$ \\
\midrule
16 & 0.0174 & 0.0365 \\
32 & 0.0172 & 0.0242 \\
48 & 0.0159 & 0.0220 \\
\bottomrule
\end{tabular}
\end{minipage}
\hfill
\begin{minipage}{0.48\textwidth}
\centering
\subcaption{Data ablation (64 neurons/layer)}
\label{tab:data_ablation_tab}
\begin{tabular}{lcc}
\toprule
\textbf{Trajectories} $(M)$ & \multicolumn{2}{c}{\textbf{Test RMSE}} \\
 & $\mathbf{\tilde{x}_1}$ & $\mathbf{\tilde{x}_2}$ \\
\midrule
100 & 0.0325 & 0.0629 \\
200 & 0.0230 & 0.0594 \\
300 &  0.0158 & 0.0301 \\
\bottomrule
\end{tabular}
\end{minipage}
\end{table}

Using the optimal parameters, \textcolor{black}{48} neurons per layer for both networks ($d_{\ell,\omega}=d_{\ell,T^\dagger}=48$) and 300 training trajectories, a final test trial  conducted. The performance of the proposed NLOX method is compared against the analytical solution and a model-based state estimation approach in Figure \ref{fig:bioreactor_compare}, with the resulting RMSE and total root of sum of squares error (RSSE) values detailed in Table \ref{tab:bioreactor_compare}. The results demonstrate that the proposed NLOX method achieves performance comparable to all model-based benchmarks, despite operating in a completely model-free framework. A detailed comparison and further conclusions are provided in the following sections.
\begin{table}[ht]
\centering
\caption{Summary of test RMSE and RSSE results for the bioreactor system, comparing NLOX with an observer form from \cite{bernard2018luenberger}, an extended Kalman filter (EKF)\textcolor{black}{, and a sliding mode observer (SMO).}}
\label{tab:bioreactor_compare}
\begin{minipage}{0.48\textwidth}
\centering
\begin{tabular}{lccC}
\toprule
\textbf{Method}  & \multicolumn{2}{c}{\textbf{Test RMSE}}&\textbf{Test RSSE} \\
 & $\mathbf{\tilde{x}_1}$ & $\mathbf{\tilde{x}_2}$ \\
\midrule
\cite{bernard2018luenberger} & 0.0102 &0.0119& 0.0157 \\
EKF & 0.0202 & 0.0145 & 0.0249\\
SMO & 0.0199 & 0.0203 & 0.0284\\
NLOX &0.0157  & 0.0172& 0.0233\\
\bottomrule
\end{tabular}
\end{minipage}
\end{table}
\begin{figure}[h!]
	\centering
	\includegraphics[width=10cm, height=6cm]{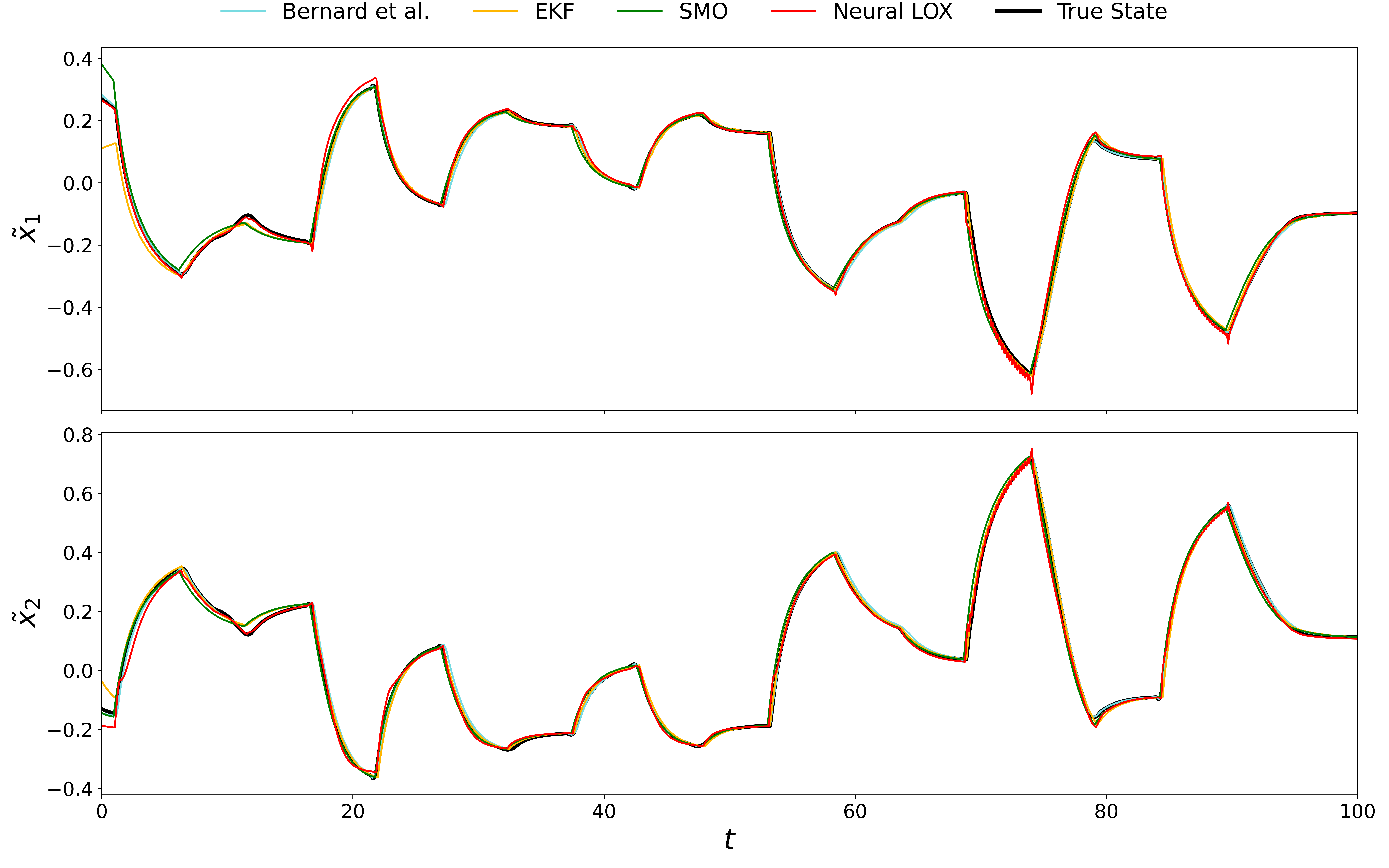}
	\caption{Comparison of state estimates for the bioreactor system: NLOX (red) versus the analytical solution (light blue), EKF (yellow), and \textcolor{black}{SMO (green).}}\label{fig:bioreactor_compare}
\end{figure}

\subsubsection{Comparison to nonlinear Luenberger observer}
In this subsection, we will consider the comparison to the analytical solution, or the said ``ground truth'' to system \eqref{Bioreactor Model} using the model-based nonlinear Luenberger observer proposed in \cite{bernard2018luenberger}. This type of observer applies to this system because it is uniformly instantaneously observable on the set:
\begin{equation*}
    \mathcal{S} = \left\{ \left(x_1,x_2\right) \in \mathbb{R}^2 \colon x_1 >0, x_2> 0\right\}.
\end{equation*}
 The drift of the system is strongly differentiable of order 2 on $\mathcal{S}$ and the input $u$ is bounded so that all the trajectories are bounded. \textcolor{black}{These properties satisfy the conditions described in \cite{bernard2018luenberger}, allowing the observer to be written in explicit form. To do so, we consider the transformation $T$ that solves the PDE \eqref{eq:KKL-PDE}. Following the approach in \cite{bernard2018luenberger}, and exploiting the structure of the bioreactor dynamics, the PDE reduces to}
 \begin{equation*}
    \frac{\partial T_\lambda \left(x_1,\xi\right)}{\partial x_1} \mu\left(x_1,\xi-x_1\right)x_1 =-\textcolor{black}{\kappa}\lambda T_\lambda \left(x_1,\xi \right)+x_1.
\end{equation*}

\noindent \textcolor{black}{Here, the quantity $\xi = x_1 + x_2$ is constant along the drift dynamics, allowing $T$ to be expressed as a function of $(x_1,\xi)$ instead of $(x_1,x_2)$, and we define $\lambda^* := \textcolor{black}{\kappa}\lambda$. This facilitates obtaining an explicit solution. Integrating with respect to $x_1$ yields a possible solution:}
\begin{equation*}
    T_{\lambda^*}(x_1,\xi) = \int_0^{x_1} \left[ \left( \frac{\xi-x_1}{\xi-s}\right) \left( \frac{s}{x_1}\right) \right]^{\lambda^* }\left(1+ \frac{1}{\xi-s}\right)~ ds.
\end{equation*}

\noindent We then define $T\left(x_1,x_2\right) = \left( T_{\lambda^*_1}\left(x_1,x_1+x_2\right),T_{\lambda^*_2}\left(x_1,x_1+x_2\right) \right)$, where $\lambda^*_1 = 3$ and $\lambda^*_2 = 6$, \textcolor{black}{corresponding to $\lambda_1 = 1$, $\lambda_2 = 2$, and $\kappa = 3$ in \cite{bernard2018luenberger}.}  The observer dynamics take the form:
\begin{equation*}\label{eq:Bernard_z_sol}
    \dot{z} = 3A z+By+\frac{dT}{dx}(\hat{x})g(\hat{x})u, \quad \hat{x}=T^{-1}(z)
\end{equation*}

\noindent which may be transformed to the x-coordinates as 
\begin{equation}\label{eq:Bernard_x_sol}
    \dot{\hat{x}} = f(\hat{x})+g(\hat{x})u+\left( \frac{dT}{dx}(\hat{x})\right)^{-1}\left( y-\hat{x}_1\right).
\end{equation}

We implement this analytical estimator \eqref{eq:Bernard_x_sol} to establish a performance baseline. The results, shown in Figure \ref{fig:bioreactor_compare} and Table \ref{tab:bioreactor_compare}, confirm that this explicit solution achieves the lowest estimation error, particularly for $\tilde{x}_1$, where it performs approximately 1.5 times better than NLOX, EKF, and SMO. In contrast, its performance in estimating $\tilde{x}_2$ is comparable to that of NLOX, EKF, and SMO.

\subsubsection{Comparison to extended Kalman filter}
In this subsection, we benchmark our observer against the model-based extended Kalman filter (EKF), the \textit{de facto} standard for nonlinear state observation. The EKF extends classical Kalman filtering to nonlinear systems through local linearization about the current state estimate \citep{khalil_nonlinear_2002}. Its state observation follows:
\begin{equation}\label{eq:KalmanObs}
\dot{\hat{x}}(t) = f(\hat{x}(t)) + g(\hat{x}(t))u(t) + L(\hat{x}(t))(y(t) - h(\hat{x}(t))),
\end{equation}
where the Kalman gain $L(t)\in\R^{n_x\times n_y}$ is defined as follows:
\begin{equation*}\label{eq:KalmanGain}
    \begin{aligned}
        L(t)=P(t)C^\top(t)R^{-1}, 
    \end{aligned}
\end{equation*}
with $P(t)$ satisfying the differential Riccati equation:
\begin{equation}
    \begin{aligned}
        \dot{P}(t)=A(t)P(t)+P(t)A^\top(t)+Q-P(t)C(t)^\top R^{-1}C(t)P(t). \label{eq:DRE}
    \end{aligned}
\end{equation}
The design requires a priori selection of constant positive definite matrices $Q \succ 0, ~R\succ0$, along with a positive definite initial condition $P(0) \succ 0$. The time-varying matrices $A(t)$ and $C(t)$ are determined by linearization at the current state estimate $\hat{x}(t)$: 
\begin{equation*}\label{jacobians}
\begin{aligned}
A(t) := \left.\frac{\partial f}{\partial x}\right|_{\hat{x}(t)} + u(t)\left.\frac{\partial g}{\partial x}\right|_{\hat{x}(t)},  \qquad C(t) := \left.\frac{\partial h}{\partial x}\right|_{\hat{x}(t)} 
\end{aligned}
\end{equation*}

\noindent Thus, \eqref{eq:KalmanObs} and \eqref{eq:DRE} must be solved  simultaneously. The extended Kalman Filter (EKF) is applied to the bioreactor model \eqref{Bioreactor Model} for state observation. For this implementation, the linearized matrices are defined as:
\begin{equation*}\label{eq:ABC}
    A(t)=\begin{bmatrix}
        \frac{\hat{x}_2\left(\hat{x}_1+\hat{x}_2 \right)-\hat{x}_1\hat{x}_2}{\left(\hat{x}_1+\hat{x}_2\right)^2}-u& \frac{\hat{x}_1\left(\hat{x}_1+\hat{x}_2 \right)-\hat{x}_1\hat{x}_2}{\left(\hat{x}_1+\hat{x}_2\right)^2}\\\frac{-\hat{x}_2\left(\hat{x}_1+\hat{x}_2 \right)+\hat{x}_1\hat{x}_2}{\left(\hat{x}_1+\hat{x}_2\right)^2}& \frac{-\hat{x}_1\left(\hat{x}_1+\hat{x}_2 \right)+\hat{x}_1\hat{x}_2}{\left(\hat{x}_1+\hat{x}_2\right)^2}-u
    \end{bmatrix},\qquad C=\begin{bmatrix}
        1&0
    \end{bmatrix}.
\end{equation*}
\textcolor{black}{The covariance matrices are chosen as $Q=\begin{bmatrix}100 & 0 \\ 0 & 1 \end{bmatrix}$, $R=0.1$, with $P_0=I_2$. These values were selected via a logarithmic grid search, yielding a preference for larger $Q$ and smaller $R$, consistent with the deterministic setting.} The system initial conditions for $(x_1,x_2)$ and $(\hat{x}_1,\hat{x}_2)$ were sampled consistently with the simulator configuration. The results in Figure \ref{fig:bioreactor_compare} and Table \ref{tab:bioreactor_compare} show that the model-based EKF has the second largest estimation error among the three approaches. Although the EKF can perform well when provided with an accurate model, its performance is primarily limited by its sensitivity to initialization. Therefore, we claim that the neural network-based data-driven observer achieves satisfactory state observation performance when a nonlinear accurate model is unavailable, outperforming the local linearization-based methodology.

\subsubsection{Comparison to sliding mode observer}
\textcolor{black}{In this subsection, we compare the NLOX method to a sliding mode observer (SMO). A sliding mode observer uses discontinuous correction terms in the state estimate dynamics to drive estimation error to zero in finite time in the presence of bounded disturbances \citep{drakunov1995sliding}. The observer estimates then converge to and evolve on the sliding manifold defined by $y-h(x)=0$, resulting in reduced-order error dynamics. A SMO has the following form:
\begin{equation}\label{SMO dynamics}
    \dot{\hat{x}}(t)=f\left(\hat{x}(t)\right)+g\left(\hat{x}(t)\right)u(t)+L'\operatorname{sign}\left(y(t)-h\left(\hat{x}(t)\right)\right)
\end{equation}
In practice, the discontinuous sign function is approximated by a continuous function (e.g., $\tanh\left(\tfrac{\cdot}{\epsilon}\right)$ for some $\epsilon\in\mathbb{R}_{>0}$) to improve numerical stability and reduce chattering. The gain matrix $L'\in\mathbb{R}^{n_x\times n_y}$ must be chosen a priori to dominate disturbances, which can be conservative and problem-dependent.}

\textcolor{black}{To alleviate the sensitivity of SMOs to this selection of a gain matrix, we consider an adaptive scaling gain matrix. The gain matrix is scaled by an adaptive parameter $\rho\in\mathbb{R}^{n_y}$ that evolves based on the error dynamics. The observer dynamics then become
\begin{equation}\label{eq:SMO_dynamics}
    \dot{\hat{x}}(t)=f\left(\hat{x}(t)\right)+g\left(\hat{x}(t)\right)u(t)+L'\left(\rho(t)\odot \tanh\left(\frac{y-h\left(\hat{x}(t)\right)}{\epsilon}\right)\right),
\end{equation}
where $\odot$ is the Hadamard product. In our implementation, the adaptive parameter $\rho$ is chosen as $\rho(t)=\left|\left(y(t)-h\left(\hat{x}(t)\right)\right)\right|$, which for the bioreactor system is $\rho(t)=\left\lvert x_1(t)-\hat{x}_1(t)\right\rvert$. The gain matrix $L'$ used for the bioreactor system is $L'=
\begin{bmatrix}
    2 & 2
\end{bmatrix}^T$. }

\textcolor{black}{The results shown in Figure \ref{fig:bioreactor_compare} and errors summarized in Table \ref{tab:bioreactor_compare} indicate that the SMO exhibits higher estimation error relative to the other approaches. While SMO can perform effectively when an accurate model and carefully tuned gains are available, its performance in this setting is less competitive. In contrast, the proposed data-driven NLOX framework achieves improved estimation accuracy without relying on an explicit model.}

\subsubsection{Sensitivity to disturbances}
\textcolor{black}{We now consider adding disturbances to \eqref{Bioreactor Model} and redefine the system as follows:
\begin{equation}\label{Bioreactor Model Noise}
\begin{aligned}
    \dot{x}_1 &= \mu\left(x_1,x_2\right)x_1-\left(u+d_1\right)x_1,\\
    \dot{x}_2 &= -\mu\left(x_1,x_2\right)x_1 +\left(u+d_1\right)\left(0.1-x_2\right),\\
     y&=x_1,
\end{aligned}
\end{equation}
where the disturbance $d_1$ is sampled from $\mathcal{N}(0.0,0.01)$. The system is simulated using the same sampling time as in the nominal case. The same training procedure described previously is applied. Through grid search, the optimal architecture in the presence of disturbances was found to be $D_{\omega}=D_{T^\dagger}=4$ layers with $d_{\ell,\omega}=d_{\ell,T^\dagger}=48$ neurons per hidden layer. After training both networks, the observer is evaluated on a test trajectory and compared against the analytical observer from \cite{bernard2018luenberger}, EKF, and SMO.}
\begin{table}[ht]
\centering
\caption{\textcolor{black}{Summary of test RMSE and RSSE results for the bioreactor system with noise, comparing NLOX with an observer form from \cite{bernard2018luenberger}, the extended Kalman filter (EKF), and a slide mode observer (SMO).}}
\label{tab:NOISE_bioreactor_compare}
\begin{minipage}{0.48\textwidth}
\centering
\begin{tabular}{lccC}
\toprule
\textbf{Method}  & \multicolumn{2}{c}{\textbf{Test RMSE}}&\textbf{Test RSSE} \\
 & $\mathbf{\tilde{x}_1}$ & $\mathbf{\tilde{x}_2}$ \\
\midrule
\cite{bernard2018luenberger} & 0.0133 &0.0156& 0.0205 \\
EKF & 0.0218 & 0.0198 & 0.0294\\
SMO & 0.00970 & 0.0128 & 0.0161\\
NLOX &0.0083  & 0.0261& 0.0274\\
\bottomrule
\end{tabular}
\end{minipage}
\end{table}
\begin{figure}[h!]
	\centering
	\includegraphics[width=10cm, height=6cm]{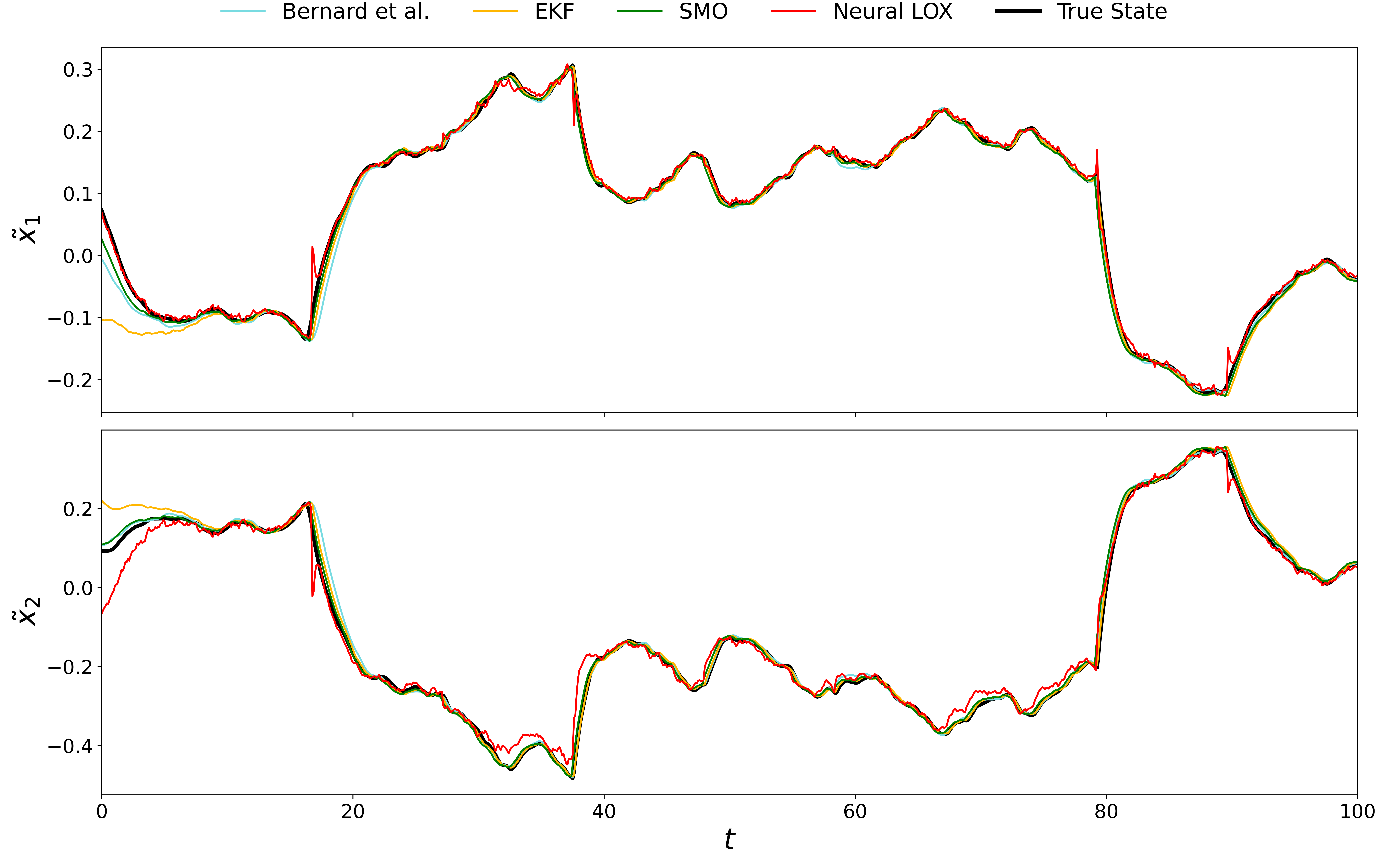}
	\caption{\textcolor{black}{Comparison of state estimates for the bioreactor system with noise: NLOX (red) versus the analytical solution (light blue), EKF (yellow), and SMO (green).}}\label{fig:bioreactor_compare_NOISE}
\end{figure}

\textcolor{black}{The results are summarized in Table~\ref{tab:NOISE_bioreactor_compare}, and a test state trajectory is shown in Figure~\ref{fig:bioreactor_compare_NOISE}. The results indicate that the proposed NLOX observer maintains competitive performance under disturbance conditions. In particular, NLOX achieves lower estimation error for $\tilde{x}_1$ compared to the benchmark methods, while exhibiting comparable performance for $\tilde{x}_2$. These results demonstrate that the proposed data-driven observer retains robust estimation behavior in the presence of stochastic input perturbations.}

\subsection{Williams-Otto reactor }
Next, we consider the Williams-Otto reactor, a nonisothermal continuously stirred-tank reactor (CSTR) with three reactions. \citep{de2020simultaneous}.
\begin{equation*}\label{Williams-Otto-Rxns}
\begin{aligned}
    \ce{\textit{A} + \textit{B} &->[k_1] \textit{C}} \\
    \ce{\textit{B} + \textit{C} &->[k_2] \textit{P} + \textit{E}} \\
    \ce{\textit{C} + \textit{P} &->[k_3] \textit{G}}
\end{aligned}
\end{equation*}
The reactor is fed two pure component reactant flow rates, $F_A$ and $F_B$, and the output flow rate is composed of all 6 components involved in the reactions above. Component $P$ is the desired product of the system, $E$ is a byproduct, $C$ is an intermediate, and $G$ is the residual material. The transient mass balance on these six components yields a state-space representation of this reactor system.
\begin{subequations}\label{Williams-Otto-Main}
    \begin{align}
        \dot{x}_A&=\frac{F_A-\left(F_A+F_B\right)x_A}{W}-k_1x_Ax_B,\\
        \dot{x}_B&=\frac{F_B-\left(F_A+F_B\right)x_B}{W}-k_1x_Ax_B-k_2x_Bx_C,\\
        \dot{x}_C&=\frac{-\left(F_A+F_B\right)x_C}{W}+2k_1x_Ax_B-2k_2x_Bx_C-k_3x_Cx_P,\\
        \dot{x}_E&=\frac{-\left(F_A+F_B\right)x_E}{W}+2k_2x_Bx_C,\\
        \dot{x}_G&=\frac{-\left(F_A+F_B\right)x_G}{W}+\frac{3}{2}k_3x_Cx_P,\\
        \dot{x}_P&=\frac{-\left(F_A+F_B\right)x_P}{W}+k_2x_Bx_C-\frac{1}{2}k_3x_Cx_P.
    \end{align}
\end{subequations}
The vector of these six compositions is the state vector of the system, and the reaction constants $k_1,~k_2,$ and $k_3$ $(\text{s}^{-1})$ are functions of the reactor temperature $(T_R)~({\text\textdegree} \text{C})$ and are defined by the Arrhenius equation as follows:
\begin{equation*}
    \begin{aligned}
        k_1 &= 1.6599\times10^6\cdot \exp \left[\frac{-6666.67}{T_R+273.15} \right],\\
        k_2 &= 7.2117\times10^8\cdot \exp \left[\frac{-8333.33}{T_R+273.15} \right],\\
        k_3 &= 2.6745\times10^{12}\cdot \exp \left[\frac{-11111}{T_R+273.15} \right],\\
    \end{aligned}
\end{equation*}
The measured outputs in this system are the compositions of the desired product $P$ and the less desired byproduct $E$ $\left(y=\left[x_E,x_P\right]\right)$. The manipulated inputs for this system are the reactor temperature and the input flow rate of component $B$ $\left(u=\left[F_B,T_R\right]\right)$, as according to the literature \citep{amrit2013optimizing}. The characteristic time of this system is roughly the residence time for the reactor, $\tau:=W/\left( F_A+F_B\right)$. Thus, \eqref{Williams-Otto-Main} is scaled by the residence time using a dimensionless time $\tau:=t/\tau^*$. The compositions of each component were also shifted to deviation variables, where the origin is the equilibrium point, denoted $\tilde{x}_i$. This system is simulated for $M=200$ trajectories with a sampling time of $0.1$ residence times. The inputs are randomly changed every  $5\tau$ according to the normal distributions shown in Table \ref{ottotable}. The reactor parameters and initial conditions of $x_i$ are summarized in Table \ref{ottotable}, and a sampled system trajectory is shown in Fig. \ref{fig:CSTR}. 

\begin{table}[width=.9\linewidth,cols=4,pos=h]
\caption{Parameters and initial conditions of the Williams-Otto reactor}\label{ottotable}
\begin{tabular*}{\tblwidth}{@{} LLLL@{} }
\midrule
$F_A=3.5$ kg/s & $x_A(0) \sim \mathcal{U} \left[0.2,0.6\right] $   \\
$F_B\sim \mathcal{N}(6.0,1.0)$ kg/s & $x_B(0)=1-x_A(0)$ \\
$T_R\sim \mathcal{N}(100.0,15.0)$ \textdegree C & $x_C(0)=0$ \\
$W=2500$ kg  & $x_E(0)=0$ \\
$\tau=263.2~\text{s}$ & $x_G(0) = 0 $\\
& $x_P(0)=0$\\
\bottomrule
\end{tabular*}
\end{table}

\begin{figure}[ht]
\centering
\includegraphics[width=10cm, height=6cm]{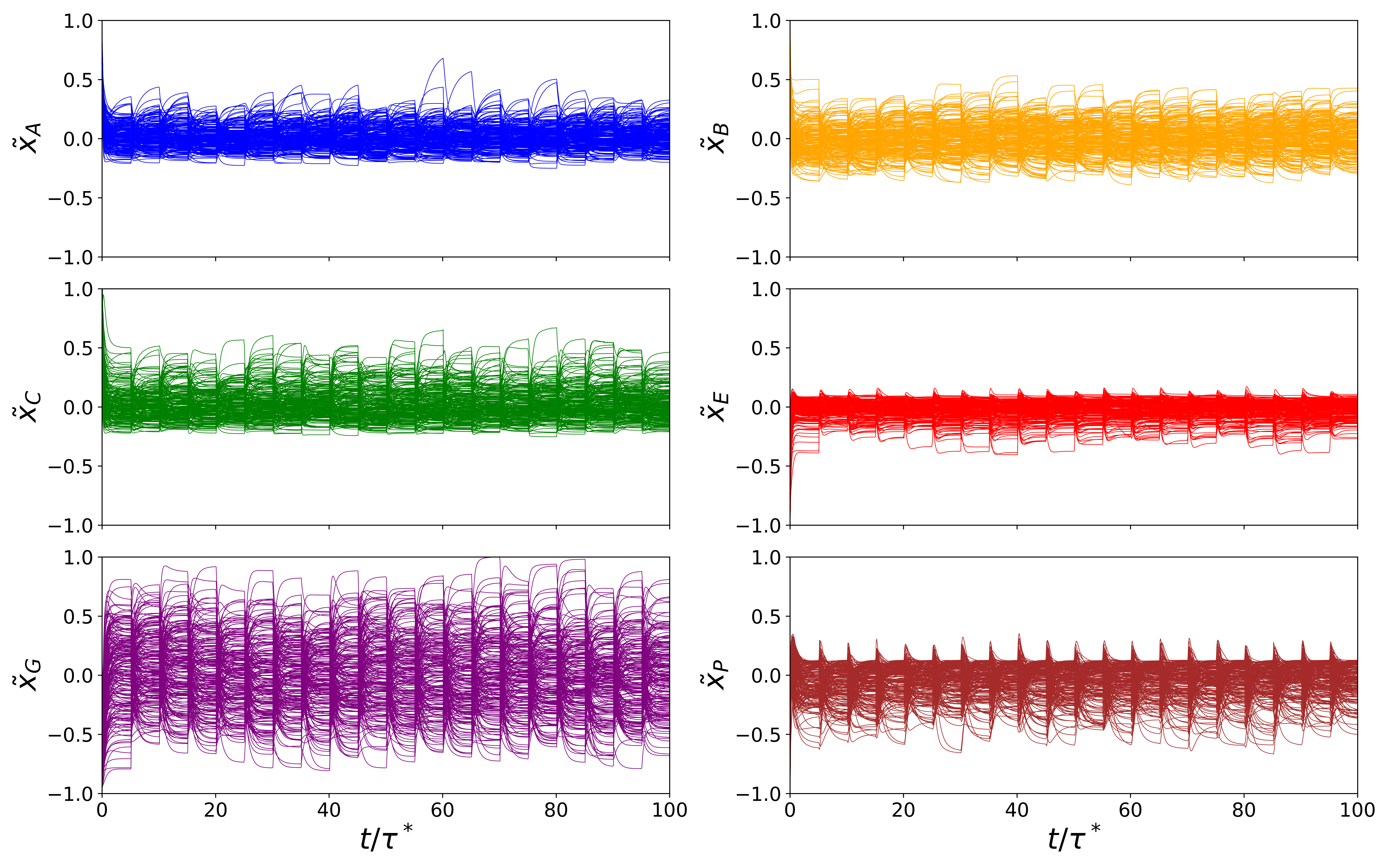}
\caption{Training state trajectories for Williams-Otto reactor.}
\label{fig:CSTR}
\end{figure}
\noindent The observer system matrices are chosen as $A =\text{diag}(-1,-2,\dots,-14)$ and $B = \mathbf{1}_{14 \times 2}$ (a $14 \times 2$ matrix of ones), with the observer state dimension $d_z=14$, and the observer states are initialized at zero ($z_0^{(i)}=0$).  The higher observer dimension is selected to capture the nonlinearities and larger state dimension of the Williams-Otto reactor. 

Following the initialization, the observed states are fed into the first feedforward MLP, $\omega(z|\theta)$, which has 3 hidden layers ($D_\omega=4$ total layers) with 64 neurons each ($d_{\ell,\omega}=64$), and an output dimensions of 28 $(d_{\text{out},\omega}=28)$. This estimate is then used in \eqref{eq:KKL_discrete} to update the observed state $z$ that incorporates the effect of the manipulated inputs. 

This updated state is subsequently passed to the second stage for learning the inverse mapping. Instead of training a single network to learn the full mapping $T^\dagger:\R^{14}\to\R^6$, we train six separate networks, each dedicated to estimating one of the original six states. While the 14-dimensional observed states $z$ are sufficient to contain the full system information, using a single network with one objective function failed to converge during preliminary tests. We attribute this to two main factors. First, training a single network on a composite loss function creates competing gradients; the backpropagation update direction that improves the prediction for one state can negatively impact others. Second, the nonconvex loss landscape makes it difficult for a single network to escape poor local minima that fail to accurately capture the dynamics of all six states. Thus, decomposing the problem into six simpler learning tasks helps to resolve this convergence issue. Each of the six networks comprising $T^\dagger$ is constructed with an identical architecture to the $\omega$ network ($D_{T^\dagger}=4$, $d_{\ell,T^\dagger}=64$). The prediction error for each state is calculated via the loss function \eqref{eq:NN_Loss}, and the parameters $(\theta, \vartheta)$ of all networks are updated simultaneously via backpropagation with a learning rate of $\eta=10^{-6}$. This training procedure is performed for $E=300$ epochs.

The network architecture is selected via a grid search to optimize performance. All networks were symmetric, with $D_{\omega}=D_{T^\dagger}=4$ layers and $d_{\ell,\omega}=d_{\ell,T^\dagger}$ neurons per layer. We evaluated architectures with ${16, 32, 48, 64}$ neurons per layer; as shown in \textcolor{black}{Table} \ref{tab:grid_search_summary_otto}, prediction accuracy improved with increasing network size. We also assessed the impact of training data volume using ${50, 100, 200}$ trajectories. \textcolor{black}{Table} \ref{tab:grid_search_summary_otto} shows that while more data improves observer accuracy, it also increases training time. The RMSEs for these grid searches are summarized in Table~\ref{tab:grid_search_summary_otto}.

\begin{table}[ht]
\centering
\caption{Summary of grid search results for the Williams-Otto reactor.}
\label{tab:grid_search_summary_otto}
\begin{subtable}{\textwidth}
\centering
\subcaption{Architecture search (300 training trajectories)}
\label{tab:arch_search_otto}
\begin{tabular}{lcccccc}
\toprule
\textbf{Neurons/Layer} $\left(d_{\ell,j}\right)$ & \multicolumn{6}{c}{\textbf{Test RMSE}} \\
 & $\mathbf{\tilde{x}_A}$ & $\mathbf{\tilde{x}_B}$ & $\mathbf{\tilde{x}_C}$ & $\mathbf{\tilde{x}_E}$ & $\mathbf{\tilde{x}_G}$ & $\mathbf{\tilde{x}_P}$ \\
\midrule
16 & 0.0335 & 0.0362 & 0.0295 & 0.0203 & 0.0561 & 0.0401 \\
32 & 0.0473 & 0.0476 & 0.0366 & 0.0309 & 0.0498 & 0.0526 \\
48 & 0.0267 & 0.0205 & 0.0352 & 0.0186 & 0.0327 & 0.0252 \\
64 & 0.0268 & 0.0205 & 0.0318 & 0.0163 & 0.0341 & 0.0230 \\
\bottomrule
\end{tabular}
\end{subtable}
\vspace{0.7cm} 
\begin{subtable}{\textwidth}
\centering
\subcaption{Data ablation (64 neurons/layer)}
\label{tab:data_ablation_tab_otto}
\begin{tabular}{lcccccc}
\toprule
\textbf{Trajectories} $(M)$ & \multicolumn{6}{c}{\textbf{Test RMSE}} \\
 & $\mathbf{\tilde{x}_A}$ & $\mathbf{\tilde{x}_B}$ & $\mathbf{\tilde{x}_C}$ & $\mathbf{\tilde{x}_E}$ & $\mathbf{\tilde{x}_G}$ & $\mathbf{\tilde{x}_P}$ \\
\midrule
50 & 0.0330 & 0.0369 & 0.0368 & 0.0354 & 0.0544 & 0.0476 \\
100 & 0.0356 & 0.0357 & 0.0491 & 0.0278 & 0.0388 & 0.0421 \\
200 & 0.0268 & 0.0205 & 0.0318 & 0.0163 & 0.0341 & 0.0230 \\
\bottomrule
\end{tabular}
\end{subtable}
\end{table}

Using the optimal parameters, 64 neurons per layer for both networks ($d_{\ell,\omega}=d_{\ell,T^\dagger}=64$) and 200 training trajectories, a final test trial is conducted. \textcolor{black}{Under this configuration, the $\omega$ network contained 11,100 trainable parameters, while the six reconstruction networks $T^\dagger$ jointly contributed 56,070 parameters, resulting in 67,170 total trainable parameters. Training on 200 trajectories required approximately 15,487 seconds on a single CPU core (using the high-performance computing cluster at NC State University).} The performance of the proposed NLOX method is compared against both model-based state estimation approaches in Figure \ref{fig:otto_compare}, with the resulting RMSE and RSSE values detailed in Table \ref{tab:otto_compare}. \textcolor{black}{The results demonstrate that the proposed NLOX method achieves a \textcolor{black}{3.19}-fold and 1.37-fold improvement in accuracy compared to the EKF and SMO benchmarks, respectively, despite operating in a completely model-free framework.}
\begin{table}[h]
\centering
\caption{Summary of test RMSE and RSSE results for the Williams-Otto reactor, comparing NLOX with the extended Kalman filter (EKF)\textcolor{black}{, and a sliding mode observer (SMO).}}
\label{tab:otto_compare}
\begin{tabular}{lccccccc}
\toprule
\textbf{Method}  & \multicolumn{6}{c}{\textbf{Test RMSE}}& \textbf{Test RSSE} \\
& $\mathbf{\tilde{x}_A}$ & $\mathbf{\tilde{x}_B}$& $\mathbf{\tilde{x}_C}$ & $\mathbf{\tilde{x}_E}$& $\mathbf{\tilde{x}_G}$ & $\mathbf{\tilde{x}_P}$ \\
\midrule
EKF & 0.0547 & 0.0457 & 0.0337 & 0.0404 & 0.0776 & 0.0454 &0.126\\
SMO & 0.0189 & 0.0165 & 0.0267 & 0.0152 & 0.0307 & 0.0204 &0.0541\\
NLOX & 0.0142 & 0.0162 & 0.0160 & 0.00608 & 0.0228 & 0.0166 &0.0394\\
\bottomrule
\end{tabular}
\end{table}

\begin{figure}[ht]
	\centering
	\includegraphics[width=10cm, height=6cm]{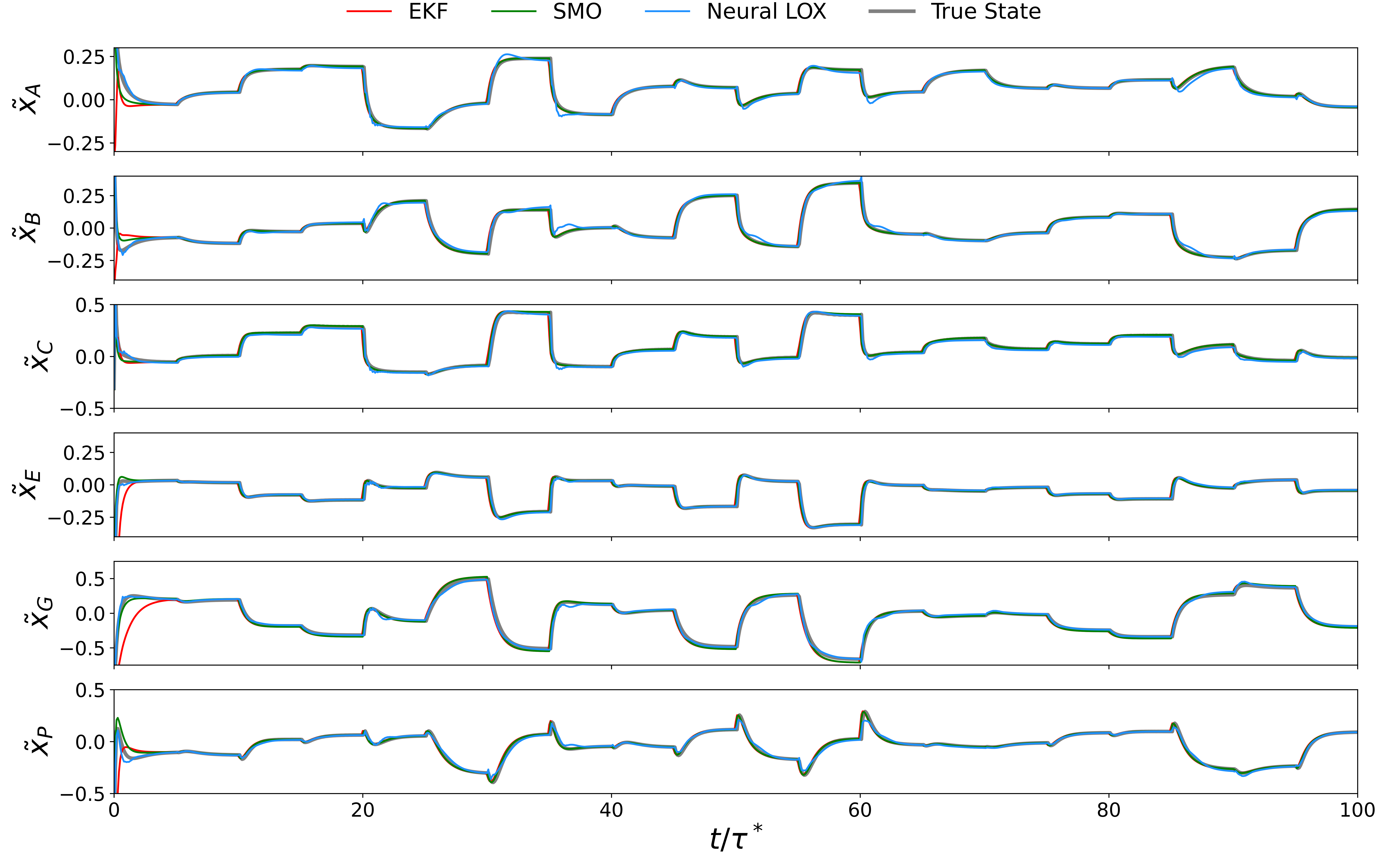}
	\caption{Comparison of state estimates for the Williams-Otto reactor: NLOX (red) versus  EKF (light blue) and \textcolor{black}{SMO (green)}.}\label{fig:otto_compare}
\end{figure}

\newpage
\subsubsection{Comparison to EKF}
Obtaining an analytical solution for the KKL observer of this higher-dimensional system is intractable. Therefore, we compare our approach against EKF as a benchmark. For the EKF implementation, the system matrix is linearized as $A(t) = A_1 + \Delta A$, where:
\begin{equation*}\label{eq:AC_system2}
\begin{aligned}
    A_1 &= -\frac{ F_A+F_B}{W}I_6,\\
    \Delta A&=\begin{bmatrix}
        -k_1\hat{x}_B&-k_1\hat{x}_A& 0&0&0&0\\
        -k_1\hat{x}_B&-k_1\hat{x}_A-k_2\hat{x}_C&-k_2\hat{x}_B&0&0&0\\
        2k_1\hat{x}_B&2k_1\hat{x}_A-2k_2\hat{x}_C&-k_2\hat{x}_B-k_3\hat{x}_P&0&0&0\\
        0&2k_2\hat{x}_C&k_2\hat{x}_B&1&0&0\\
        0&0&\frac{3}{2}k_3\hat{x}_P&0&1&\frac{3}{2}k_3\hat{x}_C\\
        0&k_2\hat{x}_C&k_2\hat{x}_B-\frac{1}{2}k_3\hat{x}_P&0&0&-\frac{1}{2}k_3\hat{x}_C
    \end{bmatrix},\\ C(t)&=\begin{bmatrix}
        0&0&0&1&0&1
\end{bmatrix},   
\end{aligned}
\end{equation*}

\noindent \textcolor{black}{and $I_6$ denotes the $6$-dimensional identity matrix. The covariance matrices are chosen as $Q=0.01 I_6$, $R=1\times10^{-4} I_2$, with $P_0 = I_6$. These values were selected via a logarithmic grid search.} \textcolor{black}{The system initial conditions for $(x_A,x_B,x_C,x_E,x_G,x_P)$ and $(\hat{x}_A,\hat{x}_B,\hat{x}_C,\hat{x}_E,\hat{x}_G,\hat{x}_P)$ were independently sampled in accordance with the simulator configuration.} As shown in Figure \ref{fig:otto_compare} and Table \ref{tab:otto_compare}, the proposed NLOX method demonstrates a lower estimation error than the EKF. NLOX's key advantage lies in its ability to maintain accurate estimation performance even during the initialization phase, where the system has not yet reached a local operating regime and the EKF unperformed. This leads to faster convergence and, consequently, a lower overall prediction error.
\subsubsection{Comparison to sliding mode observer}
\textcolor{black}{As another model-based benchmark, we compare our NLOX to a SMO. For the Williams-Otto reactor case study, we implement the SMO with the gain matrix chosen as
\begin{equation*}
    L'=
    \begin{bmatrix}
        0.2 & 0.5& 0.8& 0.2& 0.0& 0.0\\
        0.0 &0.5 & 0.8& 0.0& 0.4& 0.2
    \end{bmatrix}^\top.
\end{equation*}
This choice was heuristic but guided by the input–output structure of the model. In particular, the measured states $x_E$ and $x_P$ are corrected using only their corresponding output residuals. The adaptive parameter is chosen component-wise based on the output mismatch,
\begin{equation*}
    \rho(t)=
    \begin{bmatrix}
        \left\lvert x_E(t)-\hat{x}_E(t)\right\rvert \\
        \left\lvert x_P(t)-\hat{x}_P(t)\right\rvert
    \end{bmatrix}.
\end{equation*}so that the discontinuous term scales with the magnitude of each measurement residual.}

\textcolor{black}{The results shown in Figure \ref{fig:otto_compare} and the errors summarized in Table \ref{tab:otto_compare} demonstrate that the proposed NLOX framework achieves a 1.96-fold reduction in estimation error relative to SMO. In turn, SMO exhibits a \textcolor{black}{2.32}-fold improvement over the EKF. While SMO can yield competitive performance when an accurate process model is available and the observer gains are carefully tuned, its effectiveness remains sensitive to gain selection. In contrast, the data-driven NLOX approach provides improved estimation accuracy without requiring explicit model knowledge or manual gain tuning. }

\subsubsection{Sensitivity to disturbances}
\textcolor{black}{
We now consider the effect of disturbances by modifying \eqref{Williams-Otto-Main}, redefining the input flow rate of component B as $6.0 + d_2$, where $d_2 \sim \mathcal{N}(0.0,1.2)$. The system is simulated using the same sampling time as in the nominal case. The same training procedure described previously is applied. Through grid search, the optimal architecture under disturbance conditions was found to be $D_{\omega}=D_{T^\dagger}=4$ layers with $d_{\ell,\omega}=d_{\ell,T^\dagger}=32$ neurons per hidden layer. Compared to the noise-free setting, a network with fewer neurons provided improved generalization performance, consistent with reduced overfitting in the presence of stochastic input perturbations. After training both networks, the observer is evaluated on a test trajectory and compared against EKF and SMO.}

\begin{table}[h]
\centering
\caption{\textcolor{black}{Summary of test RMSE and RSSE results for the Williams-Otto reactor with noise, comparing NLOX with the extended Kalman filter (EKF) and a sliding mode observer (SMO).}}
\label{tab:otto_compare_noise}
\begin{tabular}{lccccccc}
\toprule
\textbf{Method}  & \multicolumn{6}{c}{\textbf{Test RMSE}}& \textbf{Test RSSE} \\
& $\mathbf{\tilde{x}_A}$ & $\mathbf{\tilde{x}_B}$& $\mathbf{\tilde{x}_C}$ & $\mathbf{\tilde{x}_E}$& $\mathbf{\tilde{x}_G}$ & $\mathbf{\tilde{x}_P}$ \\
\midrule
EKF & 0.0499 & 0.0638 &  0.0233 & 0.0386 & 0.0480 &0.0517 & 0.116\\
SMO &0.00944& 0.0223&0.0124&0.0130&0.0203&0.0146&0.0392\\
NLOX & 0.0216 & 0.0343 & 0.0146 & 0.00901 & 0.0380 & 0.0243 &0.0630\\
\bottomrule
\end{tabular}
\end{table}
\begin{figure}[ht]
	\centering
	\includegraphics[width=10cm, height=6cm]{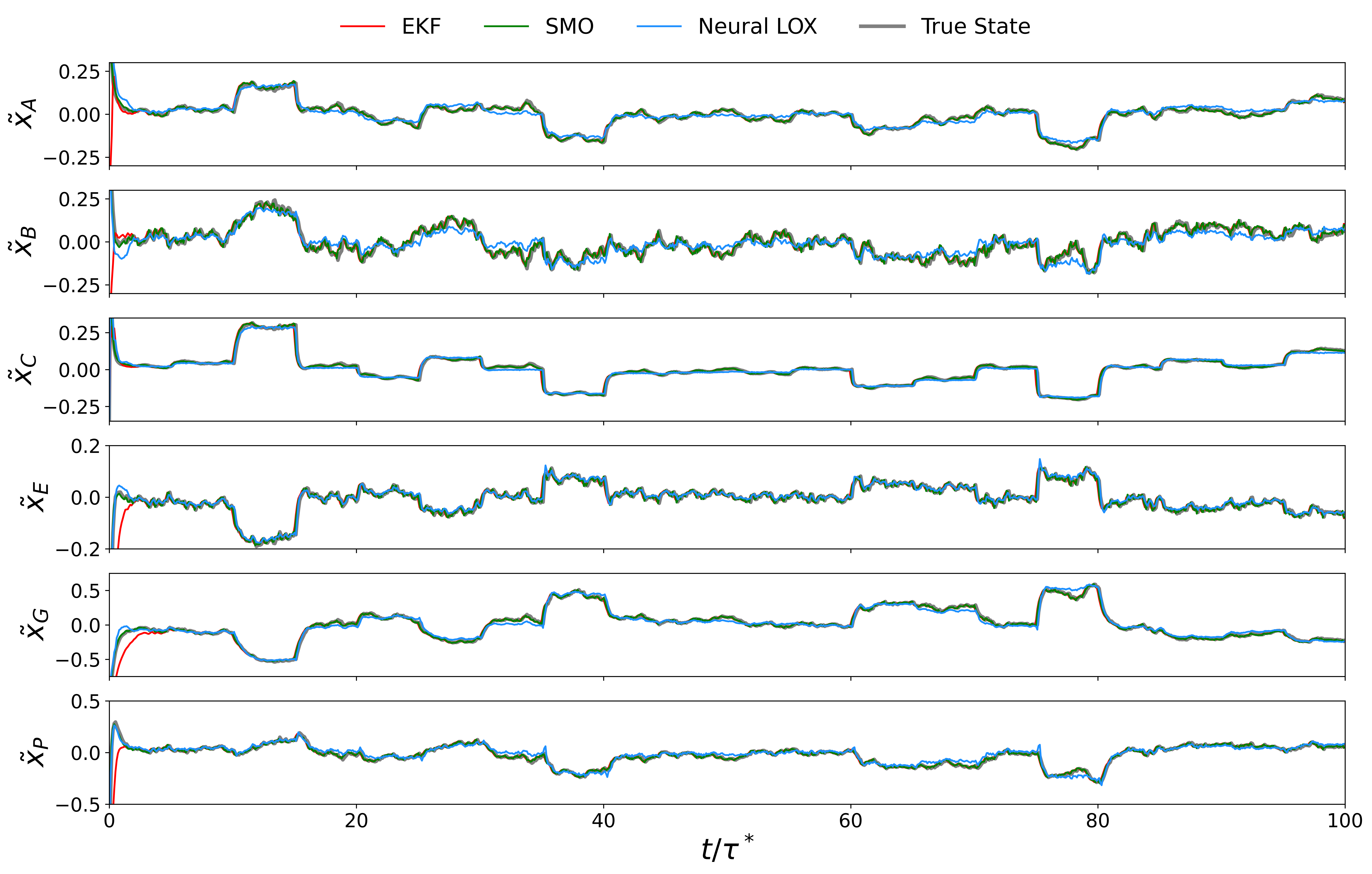}
	\caption{\textcolor{black}{Comparison of state estimates for the Williams-Otto reactor with noise: NLOX (red) versus EKF (light blue) and SMO (green)}.}\label{fig:otto_compare_noise}
\end{figure}

\par \textcolor{black}{ The results are summarized in Table~\ref{tab:otto_compare_noise}, and a test trajectory is shown in Figure~\ref{fig:otto_compare_noise}. The results indicate that the proposed NLOX observer maintains strong estimation performance under disturbance conditions. In particular, NLOX achieves consistently lower RMSE across all state variables compared to the EKF benchmark, and performs similarly to the SMO benchmark. These findings demonstrate that the proposed data-driven observer retains stable and accurate estimation behavior in the presence of stochastic input perturbations, even in this higher-dimensional setting.}

\section{Conclusion}\label{Sec:Conclusion}
This paper proposes a data-driven state observer for input-affine, nonautonomous systems. The observer follows the Kazantzis-Kravaris/Luenberger (KKL) structure, using a static injection into the observer states with an input-affine correction term. We learn both the static immersion and the input-affine term using two feedforward multilayer perceptron (MLP) neural networks. These networks are trained within an iterative scheme that estimates (i) the input-affine term and (ii) the inverse mapping of the injective mapping from system states to observer states. This approach provides a cohesive framework for estimating states directly from input and output data, without requiring online optimization or a first-principles model. The effectiveness of the proposed approach is demonstrated through two case studies, with particular attention given to its performance on higher-dimensional systems such as the Williams-Otto reactor. This addresses a significant gap in the literature, as existing data-driven observers are largely limited to, or have only been tested on, low-dimensional systems ($n_x \leq 3$). Our work demonstrates the method's efficacy in a higher-dimensional setting.

While the proposed algorithm provides a promising data-driven state observer for nonautonomous systems, the current approach relies on supervised learning, requiring access to true state data. To enhance practical applicability, future work will explore unsupervised learning methods. The core idea is to formulate a joint optimization problem where one network learns a latent state representation consistent with the input-output map, while the NLOX observer simultaneously learns to track this ``latent state'', eliminating the need for ground-truth state data.

\end{doublespace}

\printcredits 

\vspace{3mm}
\noindent {\large\textbf{Acknowledgment}}

We acknowledge the computing resources provided by North Carolina State University High Performance Computing Services Core Facility (RRID:SCR$\_$022168).
\bibliographystyle{cas-model2-names}

\bibliography{cas-refs}
\end{document}